\documentclass[12pt]{article}

\usepackage{amsmath, amssymb, amsthm}
\usepackage{bm}
\usepackage{mathrsfs}
\usepackage{mathtools}
\usepackage{enumitem}
\usepackage{geometry}
\usepackage{setspace}
\usepackage{hyperref}

\newtheorem{theorem}{Theorem}
\newtheorem{lemma}{Lemma}
\newtheorem{proposition}{Proposition}
\newtheorem{corollary}{Corollary}

\theoremstyle{definition}
\newtheorem{definition}{Definition}
\newtheorem{assumption}{Assumption}

\theoremstyle{remark}
\newtheorem{remark}{Remark}

\newcommand{\R}{\mathbb{R}}
\newcommand{\N}{\mathbb{N}}

\title{\bf Feedback Linearization of Hyperbolic PDEs\\ with  Volterra 
Nonlinearities}
\author{Miroslav Krstic\thanks{Department of Mechanical and Aerospace Engineering, 
University of California San Diego, 
La Jolla, CA 92093-0411, USA,  
\texttt{mkrstic@ucsd.edu}}}
\date{}

\begin{document}

\maketitle

\begin{center}
{\em Invited paper in honor of Professor Alberto Isidori's
85th birthday }  
\end{center}

\bigskip

\begin{abstract}
Alberto Isidori's framework of geometric nonlinear control, and particularly of feedback linearization, is the inspiration behind PDE backstepping: apply a transfromation of the state to cast the plant into a canonical form, bring all the non-canonical effects within the ``span'' of (boundary) control, and close the design with a feedback that makes the closed loop evolve in accordance with well-studied stable dynamics. The specificity of this approach is that, for PDEs, there is not one canonical form (like Brunovsky for ODEs) but the canonical forms are PDE-class-specific. When conducting this process for nonlinear PDEs, where the ``transformation of the state'' is performed using a nonlinear Volterra series indexed by the spatial variable, enormous technical challenges arise. One has to deal with kernels governed by PDEs on simplex domains growing in dimension to infinity, capture the growth rates of these kernels of the ``direct transformation,'' and conduct the same for the ``inverse transformation'' without directly studying its Volterra kernels. So far, this agenda has been executed only once, two decades ago: for parabolic PDEs by Vazquez and Krstic [Automatica, 2008]. Generalization attempts have not followed because of the immense complexity involved in feedback-linearizing nonlinear PDEs. 

In this paper, dedicated to Professor Isidori, we convert the PDE feedback-linearizing methodology of 2008 from the parabolic to a hyperbolic class and, for a transport-adapted subclass of Chen–Fliess series, construct controllers without kernel PDEs. 
\end{abstract}


\newpage

\section{Introduction}\label{sec:introduction}

\paragraph{Motivation.} It is common that feedback linearization \cite{isidori1989nonlinear}, a concept that sits at the foundation of control theory, be regarded as a {\em method} for nonlinear control design. In consequence, feedback linearization is occasionally unduly viewed as canceling useful nonlinearities or lacking robustness. 

The truth about feedback linearization, or more precisely feedback linearizability, is that it is a {\em property} of a nonlinear system. Numerous design choices exist for systems that possess such a desirable property. One should view feedback linearizability as an extension to concepts like the Hartman-Grobmann theorem, which seeks linearization (in the form of local topological equivalence; homeomorphic, not necessarily diffeomorphic) of input-free systems by coordinate change only. Feedback linearization, the more powerful `cousin' to Hartman-Grobmann, deals with a more general system class and is focused on linearization by coordinate change so that all that remains nonlinear falls within the span of control. Feedback that opts for linearity in closed loop is just one design option, not the general recipe. 

Backstepping, a design method, emerged for application to both systems that are feedback linearizable and some that are not. In a pioneering paper with Byrnes \cite{byrnes1989new}, Alberto Isidori introduced one of the simultaneous variants of the backstepping design ideas, which appeared in or around 1989. 

The success of backstepping for nonlinear ODEs led to the idea of using this approach for nonlinear PDEs as well. The modest start of this approach, in a setting fully free of spatial discretization to high-dimensional ODEs, was in 2001 for a small class of linear parabolic PDEs \cite{boskovic2001boundary}. Nonlinear PDEs long evaded PDE backstepping until the 2008 two-part paper \cite{vazquez2008volterra1,vazquez2008volterra2}. The papers \cite{vazquez2008volterra1,vazquez2008volterra2} expose such a degree of technical complexity to feedback linearization in infinite dimension, i.e., for nonlinear PDEs, that \cite{vazquez2008volterra1,vazquez2008volterra2} has remained for nearly two decades the sole result on feedback linearization for a nonlinear PDE class. 

In this paper, dedicated to the pioneer and most substantive and comprehensive contributor to geometric nonlinear control, including feedback linearization, Professor Isidori, we take the first step since 2008 \cite{vazquez2008volterra1,vazquez2008volterra2} in advancing feedback linearization to another nonlinear PDE class.

\paragraph{Problem formulation.}
The paper's purpose is to develop a feedback linearization procedure for a nonlinear first-order hyperbolic PDE with a spatial Volterra nonlinearity. The starting point is a transport equation perturbed by a nonlinear operator expressed as an infinite Volterra series in the spatial variable. Specifically, we consider feedback linearization for a class of nonlinear first-order hyperbolic partial differential equations of the form
\begin{equation}\label{eq:plant}
u_t(x,t) = u_x(x,t) + F[u](x,t), \quad x \in [0,1), \ t \ge 0,
\end{equation}
subject to the boundary condition
\begin{equation}\label{eq:boundary}
u(1,t) = U(t),
\end{equation}
and the initial condition
\begin{equation}\label{eq:initial}
u(x,0) = u_0(x) \in L^2(0,1).
\end{equation}
Here, $F[u](x,t)$ is assumed to be a Volterra series in the spatial variable whose expansion begins at second order:
\begin{equation}\label{eq:VolterraF}
F[u](x,t)
=
\sum_{n=2}^{\infty}
\int_{T_n(x)}
f_n(x,\xi_1,\dots,\xi_n)
\prod_{i=1}^{n} u(\xi_i,t)\,
d\xi_n \cdots d\xi_1,
\end{equation}
where the kernels $f_n$ are measurable and essentially bounded on the simplices
\begin{equation}\label{eq:simplex}
T_n(x) := \{(\xi_1,\dots,\xi_n)\in\mathbb{R}^n:\ 0\le \xi_n \le \cdots \le \xi_1 \le x\}.
\end{equation}
We exclude the linear Volterra term $\int_0^x f_1(x,\xi)u(\xi,t)\,d\xi$ without loss of generality: by applying a preliminary step of (linear) backstepping transformation, using \cite{krstic2008backstepping}, such a term can be eliminated from the model. This is performed as follows. We transform the model $v_t(x,t)=v_x(x,t)+G[v](x,t)$ to $u_t(x,t)=u_x(x,t)+F[u](x,t)$ using the linear backstepping transform $u(x,t)=v(x,t)-\int_0^x p(x,y)\,v(y,t)\,dy$, whose inverse is  $v(x,t)=u(x,t)+\int_0^x q(x,y)\,u(y,t)\,dy$, and whose kernels $p,q$ are defined by the Volterra integral equations $p(x,y)
=-\int_y^x p(x,s)\,g_1(s,y)\,ds
+\int_y^x g_1(s,y)\,ds$ and
$q(x,y)=p(x,y)
+\int_y^x p(x,s)\,q(s,y)\,ds$ over $ 0\le y\le x\le 1$. 
For $n\ge 2$, the resulting kernels $f_n$, or the quadratic and higher-order Volterra terms, are given by 
\begin{equation}\label{eq:fn_def}
f_n(x,\eta_1,\ldots,\eta_n)=h_n(x,\eta_1,\ldots,\eta_n)-\int_{\eta_1}^{x}
p(x,y)\,h_n(y,\eta_1,\ldots,\eta_n)\,dy,
\end{equation}
where
\begin{equation}\label{eq:hn_def}
h_n(x,\eta_1,\ldots,\eta_n)=g_n(x,\eta_1,\ldots,\eta_n)+\sum_{\varnothing\ne S\subseteq\{1,\ldots,n\}}
\int_{\Omega_S(x,\eta)}
g_n(x,\tilde\zeta_1,\ldots,\tilde\zeta_n)
\prod_{i\in S}q(\zeta_i,\eta_i)
\prod_{i\in S}d\zeta_i 
\end{equation}
and the index $S$ denotes a subset of $\{1,\ldots,n\}$. The case $S=\varnothing$ gives $h_n(x,\eta_1,\ldots,\eta_n)=g_n(x,\eta_1,\ldots,\eta_n)$, whereas  the sum is taken over all nonempty  $S$,  with integration over
\begin{equation}\label{eq:Omega_def}
\Omega_S(x,\eta)=\{(\zeta_i)_{i\in S}:\eta_i\le \zeta_i\le x\ \text{for all }i\in S,\ 0\le\tilde\zeta_n\le\cdots\le\tilde\zeta_1\le x\},
\end{equation}
and with $\tilde\zeta_i=\zeta_i$ for $i\in S$ and $\tilde\zeta_i=\eta_i$ for $i\notin S$. As long as the original linear kernel $g_1$ is bounded on $0\le y\le x\le 1$ and the original nonlinear sequence $\{g_n\}_{n\ge 2}$ satisfies a factorial bound, it can be shown that the resulting $\{f_n\}_{n\ge 2}$ satisfies its own factorial bound of the same shape, with  constants modified by $\|g_1\|_\infty$. 


To recap, we leave the linear term out of $F[u]$ because, in a paper on infinite-dimensional feedback {\em linearization}, it is manifestly meaningful to focus on nonlinearities with quadratic and higher terms only, and not contaminate the analysis with linear issues. Shaping the linear behavior is a problem distinct from linearization by feedback.

The goal is to construct a nonlinear change of variables, also given as a Volterra series, that maps the closed-loop system into a pure transport equation with homogeneous boundary condition, for which stability properties are already known. Concretely, we construct a nonlinear Volterra transformation
\begin{equation}\label{eq:transformation}
w(x,t) = u(x,t) - K[u](x,t),
\end{equation}
where $K[u](x,t)$ is of the same Volterra form as in \eqref{eq:VolterraF}, together with the boundary feedback law
\begin{equation}\label{eq:control}
U(t) = K[u](1,t),
\end{equation}
such that the closed-loop system is mapped into the transport equation
\begin{equation}\label{eq:target}
w_t(x,t) = w_x(x,t), \quad w(1,t) = 0.
\end{equation}
After deriving the kernel equations associated with the transformation \eqref{eq:transformation}, we analyze the direct and inverse mappings generated by $K[u]$, and  establish local exponential stability of the closed-loop system in the $L^2$-norm of the original state $u$.

\paragraph{Summary of the paper's approach and results.}
The construction of the backstepping operator $K$, which is used in both the model-linearizing state transformation and in the closed-loop-linearity-ensuring feedback, proceeds by substituting the transformation $I-K$ into the plant and enforcing that the transformed state satisfy the target transport equation. This yields a ``hierarchy'' of linear first-order PDEs for the transformation kernels, indexed by the order of the Volterra expansion. Each kernel equation is posed on a simplex domain and has an inflow boundary condition. The equations are coupled through lower-order kernels but retain a transport structure, which allows them to be solved explicitly along characteristics. This explicit representation is the basis for all subsequent estimates.

The main technical step is to control the growth of the kernels with respect to the order of the expansion. By integrating the characteristic representation and bounding the multilinear coupling terms, a recursive $L^2$ inequality is obtained for the sequence of kernel norms. This recursion depends only on the bounds of the original nonlinear kernels and lower-order terms of the same sequence. The recursion is then majorized by an equality recursion. By working with factorially weighted coefficients, one obtains a convolution estimate that ensures convergence of the associated gain functions and guarantees that the resulting transformation is well-defined on a neighborhood of the origin in $L^2$. This provides a controlled way to track how the kernel magnitudes grow with the order and ensures that the resulting transformation is well-defined on a neighborhood of the origin in $L^2$.

With the kernel bounds in place, the direct transformation is shown to be locally well-defined and continuous. The inverse transformation is obtained by solving a fixed-point equation, where the nonlinear part is shown to be locally Lipschitz with a gain that can be made strictly less than one by restricting to a sufficiently small ball. This ensures that the inverse exists and is unique via a contraction argument. The same estimates also provide quantitative control of the inverse mapping.

Finally, the closed-loop system is analyzed. The transformed state evolves according to a stable transport equation, yielding exponential decay in the $L^2$ norm. The relationship between the original and transformed states is then used to transfer this decay back to the original variables. The bounds established for the transformation and its inverse ensure that the solution remains within the region where the analysis applies for all time, resulting in local exponential stability of the closed-loop system.

\paragraph{A bonus: Specialization to a Chen--Fliess series subclass.}
To obtain a coefficient-level construction of the controller, we restrict, in the final part of the paper, the Volterra-series nonlinearity to a special transport-adapted subclass of Volterra series whose kernels admit a Chen--Fliess series representation.
In this subclass, the kernels have a particular, essentially separable and multinomial form. The consequence is computational rather than analytical: the recursive construction of the backstepping kernels is carried out through scalar coefficient equations on $[0,1]$, rather than through characteristic formulas on growing simplices. The convergence analysis is not eliminated; it is transferred to a divided-power coefficient algebra.

\paragraph{Relation to work for parabolic nonlinear PDEs.}
The present work parallels the nonlinear PDE feedback-linearization program introduced by Vazquez and Krstic \cite{vazquez2008volterra1,vazquez2008volterra2} for parabolic PDEs, but for a different class of systems and with correspondingly different analytical features. In the parabolic setting, the kernel equations are second-order PDEs posed on growing simplices, and their analysis relies on energy estimates and successive differentiation. In contrast, the hyperbolic kernels here satisfy first-order transport equations, which admit explicit representations along characteristics. This allows the kernel bounds to be obtained by direct Gr\"onwall integration of a recursive ODE inequality rather than by the parabolic differentiation-and-energy machinery. Despite this structural simplification, the coupling across orders and the growth of simplex dimension remain the central difficulty in both settings. The use of Volterra-series representations, hierarchical kernel construction, and majorant arguments for convergence are inherited directly from the parabolic case, although the technical execution differs. A further methodological distinction concerns the inverse transformation: \cite{vazquez2008volterra1,vazquez2008volterra2} establish invertibility via Boyd--Chua formal series inversion \cite{boyd1985fading}, which yields existence of the inverse Volterra series but no quantitative estimates on its kernels. The present work replaces that construction with a direct contraction-mapping argument on $L^2$, yielding an explicit Lipschitz bound on the inverse and an explicit radius of invertibility. In both works, the outcome is a locally defined nonlinear transformation with a well-defined inverse and a closed-loop system that is exponentially stable in the $L^2$ norm of the original state. The characterization of the region of attraction is achieved through small-gain-type conditions in both settings; in the present hyperbolic case, these conditions are made quantitative through the explicit gain function $\ell(s)$ governing the contraction.

\paragraph{Relation to feedback linearization for ODEs.} Denote by $\mathcal{T}$ the transformation  introduced in \eqref{eq:transformation} but now named and written as an operator:
\begin{equation}\label{eq:T-trans}
\mathcal{T}[u]:= (I-K)[u],
\end{equation}
where $I$ denotes the identity operator on $L^2(0,1)$. It is with this transformation, and feedback \eqref{eq:boundary}, \eqref{eq:control}, that we convert the nonlinear PDE \eqref{eq:plant} into the linear target PDE \eqref{eq:target}. This transformation $\mathcal{T}[u]$ is the analog of a diffeomorphic transformation $T(x)$ used for ODE systems of the form $\dot x = f(x)+g(x)u$. In Section \ref{sec:inverse} we show that $\mathcal{T}$ is invertible, and in Section \ref{sec:diffeo} that both $\mathcal{T}$ and $\mathcal{T}^{-1}$ are differentiable in the Fr\'echet sense, making what we present in this paper a full nonlinear PDE analog to diffeomorphisms $T(\cdot)$ used in ODE feedback linearization. 

Just as $T(\cdot)$ transforms $\dot x = f(x) + g(x)u$ into Brunovsky form by a change of coordinates that brings all the nonlinearities within the span of the control, the operator $\mathcal{T}$ alone maps the $u$-plant \eqref{eq:plant} into the $w$-system \eqref{eq:target} on $[0,1)$ and into the boundary relation $w(1,t) = U(t) - K[u](1,t)$, bringing all nonlinearities of the plant within the span of the boundary input $U(t)$.

For PDEs, no general control theory can exist---all analysis and design for PDEs is class specific. But we can establish an even closer analogy between control-affine ODE systems $\dot x=f(x) +g(x)u$ and the PDE class \eqref{eq:plant}, \eqref{eq:boundary} than merely through diffeomorphisms. For the control-affine PDE class \eqref{eq:plant}, \eqref{eq:boundary}, the role of the drift vector field $f(\cdot)$ is played by the nonlinear operator $(\partial_x + F)[\cdot]$, whereas the role of the input vector field $g(\cdot)$ is played by the Dirichlet boundary input operator (which is an unbounded linear operator). 

Viewed in coordinate-free terms, feedback linearization of the plant \eqref{eq:plant}--\eqref{eq:boundary} reduces to a single ``PDE-like'' equation for the backstepping operator $K$, namely
\begin{equation}\label{eq:cohomological}
L_F\, K = F,
\end{equation}
where $L_F$ denotes the directional Fr\'echet derivative of an operator with respect to its $L^2(0,1)$-valued argument, taken in the direction specified by the plant Volterra operator $F$. The operator $K$ is thus characterized as the object whose directional Fr\'echet derivative along $F$ reproduces $F$ itself---a cohomological equation in the infinite-dimensional state space $L^2(0,1)$. The role of the input vector field $g(x)$ in the ODE analogy is played here by the Dirichlet boundary operator $\mathrm{ev}_{x=1}$, through which the feedback $U(t) = K[u](1,t)$ routes the plant's nonlinearity into the span of the boundary control and enforces the target's homogeneous boundary condition $w(1,t)=0$. This is the infinite-dimensional analog of the matching PDE $L_f T = AT$ satisfied by the diffeomorphism $T(x)$ in ODE feedback linearization for the companion matrix $A$ of the Brunovsky form. The factorial growth bound $\|f_n\|_\infty \leq n!\,D_f/\rho_f^{\,n-1}$ on the Volterra kernels of $F$ plays, in this infinite-dimensional setting, a role reminiscent of the Hunt--Su--Meyer \cite{hunt1983global} rank-and-involutivity conditions for ODE feedback linearization: it is a checkable property of the plant data $F$ alone---involving neither the unknown transformation $K$ nor iterated constructions like Lie brackets---whose verification certifies that the feedback-linearizing transformation exists, converges, and yields a $C^1$-diffeomorphism on a neighborhood of the origin in $L^2$.

\paragraph{Relation to input-output representation of nonlinear systems.} The Volterra-series structure of the plant nonlinearity $F[u]$ is not foreign to nonlinear ODE sysetms with control. The problem of representing the input--output map of a nonlinear ODE by a Volterra series --- the natural nonlinear generalization of the convolution representation $y=w*u$ of finite-dimensional linear systems --- was opened in the foundational work of d'Alessandro, Isidori and Ruberti \cite{dAlessandroIsidoriRuberti1974}, who, for bilinear systems $\dot{x}=Ax+Nxu+Bu$, $y=Cx$, obtained the Volterra kernels in closed form as products of matrix exponentials interleaved with $N$, and characterized which kernel sequences are realizable by such a bilinear state-space model in terms of a factorizability/rationality condition. Brockett \cite{Brockett1976} extended the construction to the linear-analytic (control-affine analytic) class $\dot{x}=f(t,x)+u\,g(t,x)$, $y=h(t,x)$, computing the kernels by Picard iteration after eliminating the drift through the unforced flow and proving uniform convergence under a small-input bound via Picard/Gronwall estimates. Lesiak and Krener \cite{1101898} subsequently established that, for control-affine analytic systems with zero initial condition, the input--output map $u\mapsto y$ admits a Volterra-series representation with kernels determined recursively from $f,g,h$ along the unforced flow, and convergence proved by Cauchy estimates on polydiscs of analyticity yielding factorial-times-geometric bounds. Reading time as a spatial variable, the same construction yields a class of Volterra operators $u(\cdot)\mapsto F[u](\cdot)$ of precisely the form \eqref{eq:VolterraF}. The factorial-growth bound $\|f_n\|_\infty \leq n!\,D_f/\rho_f^{\,n-1}$ is the natural analog, in our setting, of the Cauchy estimates governing convergence of the Krener Volterra series in \cite{1101898}. The PDAE example of Section \ref{sec:example} belongs to this Krener class, with $f\equiv 0$, $g\equiv 1$, $h(z)=z^2/2$.

\paragraph{Organization of the paper.}
Section \ref{sec:backstepping} introduces the Volterra backstepping transformation and derives the matching condition for the kernels. Section \ref{sec:kernels} formulates the kernel equations and their characteristic integral representation. Section \ref{sec:direct} establishes convergence of the direct transformation via $L^2$ kernel bounds and a majorant construction. Section \ref{sec-quant} provides a quantitative lower bound on the convergence radius of the gain functions. Section \ref{sec:inverse} develops the inverse transformation through a contraction mapping argument. Section \ref{sec:closed-loop} presents the main result on local exponential stability of the closed-loop system. The well-posedness in closed loop is shown in Section \ref{sec-exist}. Section \ref{sec:diffeo} discusses the differentiability and invertibility properties of the Volterra-based backstepping transformation: an analogy to the diffeomorphisms in ODE feedback linearization. Section \ref{sec:example} illustrates the theory on a nonlinear PDAE example. 

Finally, Section \ref{sec:pde-free} shows that, for plants whose Volterra kernels admit a real-analytic Chen--Fliess gap-basis expansion with geometrically decaying coefficients, the infinite cascading set of simplex transport PDEs underlying the backstepping controller is replaced by a triangular cascade of scalar ODEs on $[0,1]$ solvable by quadrature, with the resulting gap-basis series proved to converge absolutely and uniformly on $T_n(1)$ with $n$-uniform bounds that align with the Volterra convergence analysis of Section \ref{sec:direct}. 

\section{Backstepping Transformation}\label{sec:backstepping}

We seek a Volterra transformation of the form \eqref{eq:transformation}, with $K[u]$ defined by
\begin{equation}\label{eq:K_series}
K[u](x,t)
=
\sum_{n=2}^{\infty}
\int_{T_n(x)}
k_n(x,\xi_1,\dots,\xi_n)
\prod_{i=1}^{n} u(\xi_i,t)\,
d\xi_n \cdots d\xi_1,
\end{equation}
where the kernels $k_n$ are to be determined.

Substituting \eqref{eq:transformation} into \eqref{eq:plant} and using \eqref{eq:control}, we obtain
\begin{eqnarray}\label{eq:w_evolution}
w_t(x,t)
&=&
u_t(x,t) - \partial_t K[u](x,t) \nonumber\\
&=&
u_x(x,t) + F[u](x,t) - \partial_t K[u](x,t).
\end{eqnarray}
Similarly,
\begin{equation}\label{eq:w_x}
w_x(x,t)
=
u_x(x,t) - \partial_x K[u](x,t).
\end{equation}
Imposing \eqref{eq:target}, that is, $w_t = w_x$, and combining \eqref{eq:w_evolution}--\eqref{eq:w_x}, yields
\begin{equation}\label{eq:matching}
\partial_t K[u](x,t)
-
\partial_x K[u](x,t)
=
F[u](x,t).
\end{equation}
The boundary condition \eqref{eq:target} is satisfied by construction from \eqref{eq:control}, since
\begin{equation}\label{eq:w_boundary}
w(1,t)
=
u(1,t) - K[u](1,t)
=
U(t) - K[u](1,t)
=
0.
\end{equation}
Equation \eqref{eq:matching}, together with the Volterra representations \eqref{eq:VolterraF} and \eqref{eq:K_series}, determines the kernel equations for $k_n$ by identification of terms of equal order in $u$.

\section{Kernel PDEs and Integral Representation}\label{sec:kernels}

From \eqref{eq:matching}, with $K[u]$ given by \eqref{eq:K_series} and $F[u]$ by \eqref{eq:VolterraF}, identification of terms of equal order in $u$ yields, for each $n\ge 2$,
\begin{eqnarray}
\label{eq:kernel_pde}
\partial_x k_n(x,\xi_1,\dots,\xi_n)
+
\sum_{i=1}^{n} \partial_{\xi_i} k_n(x,\xi_1,\dots,\xi_n)
&=&
- f_n(x,\xi_1,\dots,\xi_n)
\nonumber\\
&& +\, \sum_{m=2}^{n} B_n^m\!\left[k_{n-m+1}, f_m\right](x,\xi_1,\dots,\xi_n),
\end{eqnarray}
for $(x,\xi_1,\dots,\xi_n)\in T_n(1)$, where, with the convention $\xi_0:=x$, the integral operators $B_n^m$ are defined as follows.

For $2\le m\le n$,
\begin{eqnarray}
\label{eq:Cnm_def}
&& B_n^m\!\left[k_{n-m+1}, f_m\right](x,\xi_1,\dots,\xi_n) 
\nonumber\\
&&\quad =
\sum_{j=1}^{n-m+1}
\int_{\xi_j}^{\xi_{j-1}}
D_j^{\,n-m+1,m}\!\Big[
k_{n-m+1}(x,\xi_1,\dots,\xi_{j-1},s,\xi_j,\dots,\xi_{n-m})
\nonumber\\
&&\qquad\qquad\qquad\qquad\qquad\;\;\;\times f_m(s,\xi_{n-m+1},\dots,\xi_n)
\Big]\, ds.
\end{eqnarray}
The operator $D_j^{\,p,m}$ appearing in \eqref{eq:Cnm_def} is a permutation-summation operator, defined for a function $g(x,\zeta_1,\dots,\zeta_{p+m-1})$ and indices $p\ge 1$, $m\ge 2$, $1\le j\le p$, by
\begin{eqnarray}
\label{eq:D_def}
&& D_j^{\,p,m}\!\big[g(x,\zeta_1,\dots,\zeta_{p+m-1})\big]
\nonumber\\
&&\quad =
\sum_{(\gamma_1,\dots,\gamma_{p+m-1-j})\,\in\,\mathcal{P}_{p-j}(\zeta_{j+1},\dots,\zeta_{p+m-1})}
g(x,\zeta_1,\dots,\zeta_{j-1},\zeta_j,\gamma_1,\dots,\gamma_{p+m-1-j}),
\end{eqnarray}
where $\mathcal{P}_{p-j}(\zeta_{j+1},\dots,\zeta_{p+m-1})$ denotes the set of all ordered $(p+m-1-j)$-tuples formed as follows: the first $p-j$ entries are any $p-j$ elements of $\{\zeta_{j+1},\dots,\zeta_{p+m-1}\}$ taken in their original order, and the remaining $m-1$ entries are the leftover elements of the same set, also taken in their original order.

The RHS of \eqref{eq:kernel_pde} consists of two types of contributions. The forcing term $-f_n$ is given data. The terms $B_n^m[k_{n-m+1},f_m]$ for $m=2,\dots,n$ couple $k_n$ to the lower-order kernels $k_2,\dots,k_{n-1}$ through integrals against $f_m$; they act as forcing terms in the recursive solution, since at step $n$ the kernels $k_2,\dots,k_{n-1}$ are already known. (For $n=2$, the sum in \eqref{eq:kernel_pde} is empty, and the equation reduces to $\partial_x k_2 + \sum_{i=1}^{2}\partial_{\xi_i}k_2 = -f_2$.)

The boundary condition for \eqref{eq:kernel_pde} is obtained from the transport structure. Since $K[u]$ is Volterra in $x$, the inflow boundary corresponds to $\xi_n=0$, and one has
\begin{equation}\label{eq:kernel_bc}
k_n(x,\xi_1,\dots,\xi_{n-1},0)=0,
\qquad 0\le \xi_{n-1}\le \cdots \le \xi_1 \le x \le 1.
\end{equation}
This is the only boundary condition required, in contrast to the parabolic case where additional conditions are imposed on the face $\xi_1=x$.

Equations \eqref{eq:kernel_pde}--\eqref{eq:kernel_bc} define a first-order linear transport problem for each kernel $k_n$, forced by $f_n$ and coupled to the lower-order kernels $k_2,\dots,k_{n-1}$ through $B_n^m$, $m=2,\dots,n$. Because the coupling is only to lower orders, \eqref{eq:kernel_pde}--\eqref{eq:kernel_bc} can be solved recursively starting from $n=2$.

The LHS of \eqref{eq:kernel_pde} can be integrated along characteristics. For $(x,\xi_1,\dots,\xi_n)\in T_n(1)$, define the characteristic curve
\begin{equation}\label{eq:characteristic}
s \mapsto (x-s,\xi_1-s,\dots,\xi_{n-1}-s,\xi_n-s),
\end{equation}
which intersects the boundary $\xi_n=0$ at $s=\xi_n$. Integration of \eqref{eq:kernel_pde} along this curve, together with the inflow condition \eqref{eq:kernel_bc}, yields the explicit representation
\begin{align}\label{eq:kernel_integral}
k_n(x,\xi_1,\dots,\xi_n)
= -\int_{0}^{\xi_n}
\Bigg[f_n -
\sum_{m=2}^{n} B_n^m\!\left[k_{n-m+1}, f_m\right]
\Bigg]
(x-\xi_n+s,
\xi_1-\xi_n+s,&
\nonumber\\
\dots,\xi_{n-1}-\xi_n+s,s)\, ds.&
\end{align}
At each order $n\ge 2$, equation \eqref{eq:kernel_integral} is an explicit formula for $k_n$ in terms of $f_n$ and the lower-order kernels $k_2,\dots,k_{n-1}$, which are treated as known data entering through the $B_n^m$ terms. Starting from $k_2$, for which \eqref{eq:kernel_integral} reduces to $k_2(x,\xi_1,\xi_2) = -\int_0^{\xi_2} f_2(x-\xi_2+s,\xi_1-\xi_2+s,s)\,ds$, the higher-order kernels are obtained iteratively.

\section{Direct Transformation: Convergence}\label{sec:direct}
To proceed into the convergence analysis, we make a key assumption on the kernels $f_n$. 
\begin{assumption}\label{ass:weighted_growth}
The kernels $f_n$ satisfy $f_n \in C^0(T_n(1))$ for each $n\ge 2$, and there exist positive constants $D_f$ and $\rho_f$ such that
\begin{equation}\label{eq:an_def}
\|f_n\|_{L^\infty(T_n(1))} \le \frac{n!\, D_f}{\rho_f^{\,n-1}},
\qquad n \ge 2.
\end{equation}
\end{assumption}

\begin{remark}\label{rem:assumption_scope}
Since $\mathrm{vol}(T_n(1)) = 1/n!$, the bound \eqref{eq:an_def} implies the $L^2$ estimate
\begin{equation}\label{eq:fn_L2_bound}
\|f_n\|_{L^2(T_n(1))}^2 \le \frac{n!\, D_f^{\,2}}{\rho_f^{\,2(n-1)}},
\qquad n\ge 2,
\end{equation}
which is of the factorial-growth type handled by the Gain Bound Theorem in  \cite[Theorem 1]{vazquez2008volterra2}. By that theorem, the Volterra series \eqref{eq:VolterraF} converges in $L^2(0,1)$ whenever $\|u\|_{L^2(0,1)}^2 < \rho_f^{\,2}$.
Special cases covered by \eqref{eq:an_def} include the entire case (take $\rho_f$ arbitrarily large), exponential growth $\|f_n\|_\infty \le M\sigma^{n-1}$ (take $D_f = M$, $\rho_f = 1/\sigma$), and polynomial nonlinearities of fixed degree ($f_n\equiv 0$ for $n>N$, trivially satisfied).
Note that no smallness condition is imposed on any $f_n$; Assumption \ref{ass:weighted_growth} restricts only the growth rate of the sequence $\{\|f_n\|_\infty\}_{n\ge 2}$.
\end{remark}

\begin{lemma}[$L^2$ bound on the lower-order-kernel forcing $B_n^m$]\label{lem:supnorm_kernel}
For each $n\ge 2$, each $m\in\{2,\dots,n\}$, and each $x\in[0,1]$,
\begin{equation}\label{eq:alpha_n_def}
\|B_n^m[k_{n-m+1},f_m](x,\cdot)\|_{L^2(T_n(x))}^{2}
\le
\frac{(n+1)!\,(n-m+1)}{(m+1)!\,(n-m)!}
\cdot
\frac{x^m\,\|f_m\|_\infty^{2}}{m!}
\cdot
\|k_{n-m+1}(x,\cdot)\|_{L^2(T_{n-m+1}(x))}^{2}.
\end{equation}
\end{lemma}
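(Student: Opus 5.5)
The plan is to write $B_n^m$ as a finite sum of integrals and bound it in three stages: first count the terms, then apply Cauchy--Schwarz to each one, and finally integrate over the simplex with a change of variables that separates the $k$-variables from the $f$-variables.

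Fix $x$ and set $p=n-m+1$. By \eqref{eq:Cnm_def}--\eqref{eq:D_def}, $B_n^m$ is a sum over $j\in\{1,\dots,p\}$. For each $j$ there is a further sum over the $\binom{n-j}{p-j}$ order-preserving interleavings in $\mathcal{P}_{p-j}$. Each summand has the form $\int_{\xi_j}^{\xi_{j-1}} k_p(x,\dots,s,\dots)\,f_m(s,\dots)\,ds$. Call the total number of summands $N$. Applying Cauchy--Schwarz to the finite sum gives $|B_n^m|^2\le N\sum|\text{term}|^2$.

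I expect $N$ to combine with the factors produced below into the combinatorial prefactor $\frac{(n+1)!\,(n-m+1)}{(m+1)!\,(n-m)!}$. Note that $\frac{(n-m+1)}{(n-m)!}=\frac{(p)}{(p-1)!}$, and $\sum_{j}\binom{n-j}{p-j}=\binom{n}{p-1}$ by hockey-stick summation. So the counting step will use these identities, possibly after a crude overcounting to $p\binom{n+1}{m+1}$-type bounds.

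For a single summand I will bound $f_m$ in sup norm and apply Cauchy--Schwarz in $s$:
\[
\Bigl|\int_{\xi_j}^{\xi_{j-1}} k_p f_m\,ds\Bigr|^2\le \|f_m\|_\infty^2\,(\xi_{j-1}-\xi_j)\int_{\xi_j}^{\xi_{j-1}}|k_p(x,\xi_1,\dots,\xi_{j-1},s,\gamma_1,\dots)|^2\,ds.
\]
Here the $f_m$ factor depends on the $m-1$ variables among $\xi_{j+1},\dots,\xi_n$ that were not routed to $k_p$. I will bound $\xi_{j-1}-\xi_j\le x$, or keep it to be absorbed below.

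Next I integrate over $T_n(x)$. The remaining $n+1$ variables are $(\xi_1,\dots,\xi_n,s)$, where the $p$ variables feeding $k_p$ are $\xi_1,\dots,\xi_{j-1},s$ and the selected $\gamma$'s. Their ordering constraints are inherited, so they range over $T_p(x)$. The complementary $m$ variables, namely $\xi_j$ and the $m-1$ leftover ones, lie in $[0,x]$ and are ordered among themselves. Integrating those out with the $k_p$-variables frozen gives at most the volume of an $m$-simplex of side $x$, which is $x^m/m!$. Together with the multiplicity from relaxing the interleaving constraint, this is the source of the factor $\frac{x^m}{m!}$. What remains is $\|k_p(x,\cdot)\|^2_{L^2(T_p(x))}$, which produces the claimed structure $\frac{x^m\|f_m\|_\infty^2}{m!}\|k_{n-m+1}(x,\cdot)\|^2$.

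The main obstacle is the bookkeeping. One has to make sure that, after Fubini, the map from $(\xi,s)$ to the $k$-arguments, with the complementary variables treated as free, is measure preserving and injective for each fixed interleaving. The factor $(\xi_{j-1}-\xi_j)$ must also be handled without losing a power of $x$ beyond $x^m$; I would try absorbing it by integrating $\xi_j$ over $[\,\cdot\,,s]$ jointly with the interleaving variables. Then the counts $N$ and the per-term multiplicities must multiply to exactly the stated constant. If exact matching fails, I will overbound by $\binom{n+1}{m+1}\,p$, which is the stated factor. That factor dominates both the number of $j$-choices, $p$, and the interleaving count $\binom{n}{m}\le\binom{n+1}{m+1}$.
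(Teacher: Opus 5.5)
Your skeleton matches the paper's (Cauchy--Schwarz on the finite sum, $\|f_m\|_\infty$ with Cauchy--Schwarz in $s$, then a coordinate permutation separating the $k_p$-arguments $y=(\xi_1,\dots,\xi_{j-1},s,\gamma_1,\dots,\gamma_{p-j})$ from the $f_m$-arguments $\eta$). The step that fixes the constant is missing, however, and your fallback does not supply it. Your scheme gives the Cauchy--Schwarz count $N$ times the sum, over all $N$ summands, of their squared $L^2(T_n(x))$ norms. Suppose each summand (fixed $j$, fixed interleaving $\pi$) is bounded on its own by relaxing the interleaving constraint to the full volume $x^m/m!$. Then the second factor is again of size $N$, and you end with $N^2=\binom{n}{m}^2$. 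Already for $n=3$, $m=2$, $x=1$ this is $9$, while the claimed constant is $(n-m+1)\binom{n+1}{m+1}=8$. That $(n-m+1)\binom{n+1}{m+1}$ dominates $p$ and $\binom{n}{m}$ separately does not help. It must dominate the product of the Cauchy--Schwarz count with whatever multiplicity survives the integration, and it does not dominate $\binom{n}{m}^2$.

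The missing idea is a tiling identity across interleavings. Fix $j$. The permutations $(\xi,s)\mapsto(y_\pi(\xi,s),\eta_\pi(\xi,s))$ for the different $\pi$ all act on the same domain $\{\xi\in T_n(x),\ \xi_j\le s\le\xi_{j-1}\}$. Their images are pairwise disjoint up to null sets: from $(y,\eta)$ with distinct entries you recover $\pi$ by merging $y_{j+1},\dots,y_p$ with $\eta_2,\dots,\eta_m$. All images lie in $\{(y,\eta):\ y\in T_p(x),\ \eta\in T_m(y_j)\}$. Hence, after pulling out $\|f_m\|_\infty^2$ and using $\xi_{j-1}-\xi_j\le x$,
\begin{equation*}
\sum_{\pi}\int_{T_n(x)}(\xi_{j-1}-\xi_j)\int_{\xi_j}^{\xi_{j-1}}\bigl|k_p\bigl(x,y_\pi(\xi,s)\bigr)\bigr|^2\,ds\,d\xi
\le x\int_{T_p(x)}|k_p(x,y)|^2\,\frac{y_j^{\,m}}{m!}\,dy
\le \frac{x^{m+1}}{m!}\,\|k_p(x,\cdot)\|^2_{L^2(T_p(x))}.
\end{equation*}
So the interleaving count enters only once, through Cauchy--Schwarz. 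Your global Cauchy--Schwarz (factor $N=\binom{n}{m}$) and one such bound for each of the $p$ values of $j$ give $(n-m+1)\binom{n}{m}\,x^{m+1}\|f_m\|_\infty^2/m!$. This is within the claim since $\binom{n}{m}\le\binom{n+1}{m+1}$ and $x\le 1$; the factor $\xi_{j-1}-\xi_j$ you worried about costs nothing. The paper orders things differently but uses the same identity. It applies Cauchy--Schwarz only over the $n-m+1$ values of $j$, then for each $j$ the shuffle estimate of \cite[Lemma A.3(1)]{vazquez2008volterra2} (Cauchy--Schwarz over interleavings plus exactly this tiling). The per-$j$ counts are then summed to $\binom{n+1}{m+1}$ by the hockey-stick identity.
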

\begin{proof}
The bound \eqref{eq:alpha_n_def} is the hyperbolic specialization of \cite[Lemma A.3(2)]{vazquez2008volterra2}. The estimate is purely combinatorial: starting from the definition \eqref{eq:Cnm_def} of $B_n^m$ as a sum over $j\in\{1,\dots,n-m+1\}$ of integrals involving the permutation operator $D_j^{\,n-m+1,m}$, one applies the inequality
\begin{equation}
\left(\sum_{j=1}^{N}a_j\right)^{2}
\le N\sum_{j=1}^{N}a_j^{2}
\qquad\text{with}\qquad N=n-m+1,
\end{equation}
bounds each term $\|D_j^{\,n-m+1,m}[\,\cdot\,]\|_{L^2(T_{n+1})}^{2}$ via \cite[Lemma A.3(1)]{vazquez2008volterra2} using the simplex-volume identity $\mathrm{vol}(T_m(x))=x^m/m!$ and the permutation count $\binom{n+m-j}{m}$ from \eqref{eq:D_def}, and sums over $j$ through the binomial identity
\begin{equation}
\sum_{j=0}^{n-m}\binom{m+j}{j}
=
\frac{(n+1)!}{(m+1)!\,(n-m)!}
\end{equation}
of \cite[Lemma A.2]{vazquez2008volterra2}. The estimate does not depend on the parabolic vs.\ hyperbolic character of the plant equation.
\end{proof}

The lemma reduces the multilinear coupling in the kernel equations to a one-dimensional recursive bound, which is next used to establish convergence of the kernels.

\begin{theorem}[$L^2$ kernel bounds and convergence of the direct Volterra transformation]\label{thm:direct_sup}
Let Assumption \ref{ass:weighted_growth} hold. Then for each $n\ge 2$, the characteristic representation \eqref{eq:kernel_integral} defines a kernel $k_n \in L^\infty(T_n(1))$ satisfying the kernel equation \eqref{eq:kernel_pde} along characteristics with the inflow condition \eqref{eq:kernel_bc}, and there exist positive constants $C_K,D_K,\Upsilon_K$ depending only on $D_f$ and $\rho_f$ such that
\begin{equation}\label{eq:kn_L2_bound}
\|k_n\|_{L^2(T_n(1))}^{2}
\le n!\,D_K^{2}\,C_K^{2(n-1)}\,e^{2\Upsilon_K},
\qquad n\ge 2.
\end{equation}
The Volterra operator $K[u]$ defined by \eqref{eq:K_series} has radius of convergence
\begin{equation}\label{eq:rhoK_bound}
\rho_K\ge C_K^{-2},
\end{equation}
and the gain bound function
\begin{equation}\label{eq:k_ell_thm1}
k(s):=2\sum_{n=2}^{\infty}\frac{n^{2}\,\|k_n\|_{L^2(T_n(1))}^{2}}{n!}\,s^{n}
\end{equation}
has radius of convergence at least $\rho_K$, and satisfies
\begin{equation}\label{eq:K_gain_sup}
\|K[u]\|_{L^2(0,1)}^{2}
\le k\!\left(\|u\|_{L^2(0,1)}^{2}\right),
\qquad
\|u\|_{L^2(0,1)}^{2}<\rho_K.
\end{equation}
The constants $C_K,D_K,\Upsilon_K$ are made explicit by tracking through \cite[Proposition A.1]{vazquez2008volterra2} specialized to the case without self-coupling.
\end{theorem}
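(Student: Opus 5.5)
The plan is to argue by strong induction on $n$. The induction delivers existence of $k_n$ through the characteristic formula \eqref{eq:kernel_integral}, and at the same time a recursive scalar inequality for the $x$-dependent quantity
\[
\kappa_n(x):=\|k_n(x,\cdot)\|_{L^2(T_n(x))}^2 .
\]
That inequality is then majorized by an equality recursion whose solution has factorial-geometric growth.

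\textbf{Existence and the kernel equation.} For $n=2$, the forcing in \eqref{eq:kernel_integral} is $-f_2$. This is continuous, so $k_2$ is bounded. Assume $k_2,\dots,k_{n-1}$ are bounded. Then each $B_n^m[k_{n-m+1},f_m]$ is a finite sum of integrals over bounded intervals of products of bounded functions, so it is bounded. Hence \eqref{eq:kernel_integral} gives $k_n\in L^\infty(T_n(1))$. Differentiating along the characteristic \eqref{eq:characteristic} recovers \eqref{eq:kernel_pde} in the characteristic sense. Setting $\xi_n=0$ gives \eqref{eq:kernel_bc}.

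\textbf{The recursive $L^2$ inequality.} I will parametrize the characteristic by its endpoint. Apply Cauchy--Schwarz in $s$ to \eqref{eq:kernel_integral}, using $\xi_n\le x\le1$, and then integrate over $T_n(x)$. The characteristic shift $(x,\xi)\mapsto(x-\xi_n+s,\dots,s)$ has unit Jacobian in the $\xi$ variables and maps into $T_n(x-\xi_n+s)$. After changing variables $y=x-\xi_n+s$, this should give an inequality of the form
\[
\kappa_n(x)\le 2\int_0^x\Big(\|f_n\|_\infty^2\tfrac{y^n}{n!}+ (n-1)\sum_{m=2}^n\|B_n^m(y,\cdot)\|_{L^2(T_n(y))}^2\Big)\,dy .
\]
Here the factor $(n-1)$ comes from $(\sum a_m)^2\le(n-1)\sum a_m^2$, and the overall factor $2$ absorbs the split between $f_n$ and the $B$-terms. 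Lemma~\ref{lem:supnorm_kernel} then bounds each $B$-term by an explicit combinatorial coefficient times $y^m\|f_m\|_\infty^2\kappa_{n-m+1}(y)/m!$. Inserting Assumption~\ref{ass:weighted_growth}, this becomes a Volterra-type integral inequality for $\kappa_n$ driven by the lower-order $\kappa_j$.

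\textbf{Majorization and closing the induction.} Define the normalized quantity $a_n(x):=\kappa_n(x)/n!$. I will replace the inequality by an equality recursion $\bar a_n(x)=\dots$ with nonnegative coefficients. Monotonicity of the integral operator gives $a_n\le\bar a_n$. I then look for a solution of the form $\bar a_n(x)\le D_K^2C_K^{2(n-1)}e^{2\Upsilon_K x}$, in the spirit of \cite[Proposition A.1]{vazquez2008volterra2} without the self-coupling term. Substituting this ansatz, the exponential handles the $\int_0^x$ through $\int_0^x e^{2\Upsilon y}dy\le e^{2\Upsilon x}/(2\Upsilon)$. The factorials from the lemma, after dividing by $n!$, yield coefficients of size roughly $(n-m+1)\,n\binom{n+1}{m+1}\frac{(n-m+1)!\,m!^2}{n!\,m!\,\rho_f^{2(m-1)}}$.

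I expect this step to be the main obstacle: showing that $\sum_m$ of these coefficients times $C_K^{2(n-m)}$ is at most $C_K^{2(n-1)}$ times a constant that can be made small by choosing $\Upsilon_K$ large. The first thing I would try is to check that the factorial ratio $\binom{n+1}{m+1}(n-m+1)!\,m!/n!$ is polynomially bounded in $n$. If so, choosing $C_K\ge 2/\rho_f$ makes the sum a convergent geometric series dominated by a factor $(\rho_fC_K)^{-2(m-1)}$, and the polynomial factors are absorbed by $\Upsilon_K$. If that fails, I fall back on the generating-function (convolution) majorant of the cited Proposition A.1. Setting $x=1$ then yields \eqref{eq:kn_L2_bound}.

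\textbf{Convergence of $K$.} For $\|u\|^2=s$, Cauchy--Schwarz on $T_n(x)$ gives the pointwise bound $|\int_{T_n(x)}k_n\prod u|^2\le \kappa_n(x)\,s^n/n!$, using $\int_{T_n}\prod u^2\le s^n/n!$. Summing with weights $1/n^2$ in the Cauchy--Schwarz inequality, i.e.\ $(\sum_n b_n)^2\le(\sum_n n^{-2})\sum_n n^2b_n^2$ with $\sum_n n^{-2}\le2$, gives $|K[u](x)|^2\le 2\sum_n n^2\kappa_n(x)s^n/n!$. Integrating in $x$ and using $\kappa_n(x)\le\kappa_n(1)$ (for instance via the identity $\int_0^1\kappa_n(x)\,dx\le\|k_n\|^2_{L^2(T_n(1))}$ on the full domain) produces \eqref{eq:K_gain_sup}. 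Finally, by \eqref{eq:kn_L2_bound} the coefficients of $k(s)$ are at most $2n^2D_K^2e^{2\Upsilon_K}C_K^{2(n-1)}$. The root test then gives radius $C_K^{-2}$, which is \eqref{eq:rhoK_bound}.
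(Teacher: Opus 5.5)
Your existence argument and your recursive inequality are valid. For fixed $t=\xi_n-s$, the shift $\xi\mapsto(\xi_1-t,\dots,\xi_n-t)$ maps $\{\xi\in T_n(x):\xi_n\ge t\}$ onto $T_n(x-t)$ with unit Jacobian. Hence $\kappa_n(x)\le\int_0^x\|G_n(y,\cdot)\|^2_{L^2(T_n(y))}\,dy$ with $G_n=f_n-\sum_m B_n^m$.

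\textbf{The gap is the closing step, and it is not a matter of bookkeeping.} After dividing by $n!$, the $m=2$ term of your recursion has coefficient $2(n-1)\cdot\frac{(n+1)(n-1)^2}{3}\cdot\frac{D_f^2}{\rho_f^2}$, which is of order $n^4$. You correctly observe that the factorial ratio is polynomial in $n$, but a polynomial loss per step is exactly what cannot be absorbed.

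\begin{itemize}
\item Insert your ansatz for $a_{n-1}$ and use $\int_0^x y^2e^{2\Upsilon_K y}\,dy\le e^{2\Upsilon_K x}/(2\Upsilon_K)$. The bound you get for this single term is $\frac{(n+1)(n-1)^3D_f^2}{3\rho_f^2C_K^2\Upsilon_K}\,D_K^2C_K^{2(n-1)}e^{2\Upsilon_K x}$.
\item The prefactor would have to stay below $1$ for every $n$. No fixed $C_K,\Upsilon_K$ achieves this: a constant exponential weight buys a fixed factor $1/(2\Upsilon_K)$ per step, and losses growing with $n$ compound to factorial growth.
\item Tracking powers of $y$ buys only a factor of order $1/n$ per step. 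The equality solution keeping only the $m=2$ term is $B_nx^{3n-3}$ with $B_n/B_{n-1}=\frac{2(n+1)(n-1)^2D_f^2}{9\rho_f^2}$, so $a_n(1)$ grows like $(n!)^3C^n$.
\item Since this equality solution satisfies your inequality, no majorant argument based on that inequality can yield \eqref{eq:kn_L2_bound}.
\item Your generating-function fallback does not help either. A convolution majorant handles coefficients depending on $m$, not coefficients growing with $n$.
\end{itemize}

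Part of the loss is self-inflicted. Cauchy--Schwarz in $s$, which gives $\kappa_n(x)\le\int_0^x\|G_n\|^2$ instead of $\sqrt{\kappa_n(x)}\le\int_0^x\|G_n\|$, costs a factor of order $n$ per step. So does $(\sum_m a_m)^2\le(n-1)\sum_m a_m^2$. The paper avoids both losses. It works with $M_n$ through the energy identity: the Leibniz rule plus the divergence theorem, with the face $\xi_1=x$ cancelling the Leibniz term and the face $\xi_n=0$ vanishing by \eqref{eq:kernel_bc}. This gives the \emph{unsquared} inequality \eqref{eq:gn_ode_ineq} for $g_n=\sqrt{M_n}$. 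The paper then closes with the ansatz \eqref{eq:gn_bound_proof}, which carries the factor $x^{n/2}$, by citing \cite[Proposition A.1]{vazquez2008volterra2}.

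Be aware, though, that the same obstruction persists there. With the constants of Lemma~\ref{lem:supnorm_kernel}, the $m=2$ coefficient in \eqref{eq:gn_ode_ineq} is $\frac{D_f}{\rho_f}(n-1)\sqrt{n(n+1)/3}\,x$. The equality solution keeping only this term is $A_nx^{2n-2}$ with $A_n/A_{n-1}=\frac{D_f}{2\rho_f}\sqrt{n(n+1)/3}$. So $g_n(1)$ is at least of order $n!\,C^n$, not $\sqrt{n!}\,C^n$. Because that equality solution itself satisfies \eqref{eq:gn_ode_ineq}, citing \cite[Proposition A.1]{vazquez2008volterra2} for \eqref{eq:gn_ode_ineq} alone cannot yield \eqref{eq:gn_bound_proof} either.

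What is genuinely missing from your proposal is one of the following:
\begin{itemize}
\item an estimate of the coupling $B_n^m$ sharper than Lemma~\ref{lem:supnorm_kernel}, whose constant stays bounded in $n$ after factorial normalization; or
\item some other source of an $n$-dependent gain.
\end{itemize}
No choice of $C_K,D_K,\Upsilon_K$ can substitute for it.

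A minor point: $\kappa_n(x)\le\kappa_n(1)$ is not automatic. For the gain bound you need the kernel estimate uniformly in $x$.
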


\begin{proof}
For each $n\ge 2$, the kernel $k_n$ defined by the characteristic representation \eqref{eq:kernel_integral} is the unique function on $T_n(1)$ satisfying \eqref{eq:kernel_pde} along characteristics with the inflow condition \eqref{eq:kernel_bc}; by Assumption \ref{ass:weighted_growth} and induction on $n$, it is bounded, and hence in $L^2(T_n(1))$. The remainder of the proof establishes \eqref{eq:kn_L2_bound}.

For each $n\ge 2$ and each $x\in[0,1]$, define the Lyapunov functional
\begin{equation}\label{eq:Mn_def}
M_n(x):=\int_{T_n(x)} k_n^{2}(x,\xi_1,\dots,\xi_n)\,d\xi_n\cdots d\xi_1,
\end{equation}
and let $g_n(x):=\sqrt{M_n(x)}$.

\emph{Step 1: Evolution of $M_n$.} Differentiating \eqref{eq:Mn_def} in $x$ (Leibniz rule for the $x$-dependent domain $T_n(x)$),
\begin{eqnarray}\label{eq:Mn_prime_leibniz}
M_n'(x)
&=&
\int_{T_{n-1}(x)}k_n^{2}(x,x,\xi_1,\dots,\xi_{n-1})\,d\xi_{n-1}\cdots d\xi_1
\nonumber\\
&& +\,2\int_{T_n(x)}k_n\,\partial_x k_n\,d\xi_n\cdots d\xi_1.
\end{eqnarray}
Using the kernel equation \eqref{eq:kernel_pde} to substitute $\partial_x k_n=-\sum_{i=1}^{n}\partial_{\xi_i}k_n-f_n+\sum_{m=2}^{n} B_n^m[k_{n-m+1},f_m]$, and applying the divergence theorem to $\sum_i\partial_{\xi_i}(k_n^{2})=2k_n\sum_i\partial_{\xi_i}k_n$ over the simplex $T_n(x)$, the contributions from the interior faces $\xi_i=\xi_{i+1}$ cancel pairwise (antiparallel outward normals on adjacent sides), the contribution from the face $\xi_n=0$ vanishes by the inflow condition \eqref{eq:kernel_bc}, and the contribution from the face $\xi_1=x$ cancels exactly against the Leibniz term in \eqref{eq:Mn_prime_leibniz}. What remains is
\begin{equation}\label{eq:Mn_prime_clean}
M_n'(x)=2\int_{T_n(x)}k_n\Bigg[-f_n+\sum_{m=2}^{n} B_n^m[k_{n-m+1},f_m]\Bigg]d\xi_n\cdots d\xi_1.
\end{equation}

\emph{Step 2: Cauchy--Schwarz and reduction to a recursive ODE.} Applying Cauchy--Schwarz to each term on the right-hand side of \eqref{eq:Mn_prime_clean} and dividing by $2\sqrt{M_n(x)}$ (where $M_n(x)>0$; the inequality extends by continuity where $M_n(x)=0$), one obtains
\begin{equation}\label{eq:gn_prime}
g_n'(x)\le \|f_n(x,\cdot)\|_{L^2(T_n(x))}+\sum_{m=2}^{n}\|B_n^m[k_{n-m+1},f_m](x,\cdot)\|_{L^2(T_n(x))}.
\end{equation}
By Assumption \ref{ass:weighted_growth} and $\mathrm{vol}(T_n(x))=x^n/n!$,
\begin{equation}\label{eq:fn_L2_gn}
\|f_n(x,\cdot)\|_{L^2(T_n(x))}\le D_f\sqrt{n!}\,\rho_f^{\,-(n-1)}\,x^{n/2}.
\end{equation}
By Lemma \ref{lem:supnorm_kernel} and Assumption \ref{ass:weighted_growth} applied to $\|f_m\|_\infty$,
\begin{equation}\label{eq:Bnm_gn}
\|B_n^m[k_{n-m+1},f_m](x,\cdot)\|_{L^2(T_n(x))}
\le
\frac{D_f}{\rho_f^{\,m-1}}\sqrt{\frac{(n+1)!\,(n-m+1)}{(m+1)\,(n-m)!}}\,x^{m/2}\,g_{n-m+1}(x).
\end{equation}
Substituting \eqref{eq:fn_L2_gn}--\eqref{eq:Bnm_gn} into \eqref{eq:gn_prime} yields the recursive ODE inequality
\begin{eqnarray}\label{eq:gn_ode_ineq}
g_n'(x)
&\le&
D_f\sqrt{n!}\,\rho_f^{\,-(n-1)}\,x^{n/2}
\nonumber\\
&& +\,D_f\sum_{m=2}^{n}\rho_f^{\,-(m-1)}\sqrt{\frac{(n+1)!\,(n-m+1)}{(m+1)\,(n-m)!}}\,x^{m/2}\,g_{n-m+1}(x),
\end{eqnarray}
with initial condition $g_n(0)=0$ (since $T_n(0)$ is a single point, $M_n(0)=0$).

\emph{Step 3: Grönwall induction.} Inequality \eqref{eq:gn_ode_ineq} fits the framework of \cite[Proposition A.1]{vazquez2008volterra2} specialized to the case where no $g_n$-term appears on the right-hand side (equivalently, $B=E=0$ in their notation). Induction on $n$, integrating \eqref{eq:gn_ode_ineq} against the known bounds on $g_2,\dots,g_{n-1}$, produces a bound of the form
\begin{equation}\label{eq:gn_bound_proof}
g_n(x)\le \sqrt{n!}\,D_K\,C_K^{n-1}\,x^{n/2}\,e^{\Upsilon_K x},
\qquad n\ge 2,
\end{equation}
for positive constants $C_K,D_K,\Upsilon_K$ depending only on $D_f$ and $\rho_f$. Evaluating at $x=1$ and squaring yields \eqref{eq:kn_L2_bound}.

\emph{Step 4: Radius of convergence and $L^2$ gain bound.} Applying \cite[Theorem 1]{vazquez2008volterra2} (the Gain Bound Theorem) to the Volterra series \eqref{eq:K_series} with kernels $\{k_n\}_{n\ge 2}$, the radius of convergence is bounded below by
\begin{equation}
\rho_K=\left(\limsup_{n\to\infty}\left(\|k_n\|_{L^2(T_n(1))}^{2}\big/n!\right)^{1/n}\right)^{-1}
\ge
\left(\limsup_{n\to\infty}\left(D_K^{2}\,C_K^{2(n-1)}\,e^{2\Upsilon_K}\right)^{1/n}\right)^{-1}
=C_K^{-2},
\end{equation}
proving \eqref{eq:rhoK_bound}. The gain bound \eqref{eq:K_gain_sup} with the gain function \eqref{eq:k_ell_thm1} is a direct consequence of the same theorem.
\end{proof}

\section{Global Convergence}
\label{sec-quant}

When the kernel growth bound holds for arbitrarily large radius, the recursion yields kernels whose Volterra series has infinite radius of convergence, so the transformation is globally defined.

\begin{corollary}[Global convergence for entire plant nonlinearities]\label{cor:explicit_radius}
Suppose that Assumption \ref{ass:weighted_growth} holds for arbitrarily large $\rho_f$ — equivalently, that for some $D_f>0$ and every $\epsilon>0$,
\begin{equation}\label{eq:entire_bound}
\|f_n\|_{L^\infty(T_n(1))}\le n!\,D_f\,\epsilon^{\,n-1},
\qquad n\ge 2.
\end{equation}
Then the radius of convergence of the direct Volterra transformation is infinite,
\begin{equation}\label{eq:rhoK_explicit}
\rho_K=\infty,
\end{equation}
and the operator $K[u]$ defined by \eqref{eq:K_series} is defined on all of $L^2(0,1)$, satisfying \eqref{eq:K_gain_sup} for every $u\in L^2(0,1)$.
\end{corollary}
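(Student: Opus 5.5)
The plan is to apply Theorem~\ref{thm:direct_sup} once for each value of $\rho_f$ and then let $\rho_f\to\infty$. Two facts make this work. First, the kernels $k_n$ produced by the characteristic representation \eqref{eq:kernel_integral} depend only on the data $\{f_n\}$, not on the constants $D_f,\rho_f$ used to bound them. Second, the radius $\rho_K$ in Step~4 of the proof of Theorem~\ref{thm:direct_sup} is defined intrinsically,
\[
\rho_K^{-1}=\limsup_{n\to\infty}\bigl(\|k_n\|^2_{L^2(T_n(1))}/n!\bigr)^{1/n},
\]
so it is a fixed number. Every admissible choice of $(D_f,\rho_f)$ therefore gives a lower bound on the same $\rho_K$. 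It then suffices to show that the constant $C_K$ delivered by the theorem tends to $0$ as $\rho_f\to\infty$ with $D_f$ held fixed.

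The main obstacle is that Theorem~\ref{thm:direct_sup} only says that $C_K,D_K,\Upsilon_K$ depend on $D_f$ and $\rho_f$; it does not say how. I would not try to track constants through \cite[Proposition A.1]{vazquez2008volterra2}. Instead I would use a scaling argument directly on the recursive inequality \eqref{eq:gn_ode_ineq}. Fix $\rho:=\rho_f=1/\epsilon$ and set $g_n(x)=\rho^{-(n-1)}\tilde g_n(x)$.

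In the forcing term, the factor $\rho^{-(n-1)}$ appears explicitly. In the coupling term,
\[
\rho^{-(m-1)}g_{n-m+1}=\rho^{-(m-1)}\rho^{-(n-m)}\tilde g_{n-m+1}=\rho^{-(n-1)}\tilde g_{n-m+1}.
\]
Multiplying \eqref{eq:gn_ode_ineq} through by $\rho^{n-1}$ therefore shows that $\{\tilde g_n\}$ satisfies exactly the inequality \eqref{eq:gn_ode_ineq} with $\rho_f=1$ and the same $D_f$. It also inherits $\tilde g_n(0)=0$. Step~3 of the proof, run with $\rho_f=1$, gives
\[
\tilde g_n(x)\le\sqrt{n!}\,\bar D\,\bar C^{\,n-1}x^{n/2}e^{\bar\Upsilon x},
\]
where $\bar C,\bar D,\bar\Upsilon$ depend only on $D_f$. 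Undoing the scaling,
\[
\|k_n\|^2_{L^2(T_n(1))}=g_n(1)^2\le n!\,\bar D^2\,(\bar C/\rho)^{2(n-1)}e^{2\bar\Upsilon}.
\]
This is \eqref{eq:kn_L2_bound} with $C_K=\bar C/\rho$.

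The remaining steps are routine. Taking the $\limsup$ as in Step~4 gives $\rho_K\ge\rho^2/\bar C^2$. Since \eqref{eq:entire_bound} holds for every $\epsilon>0$ with $D_f$ fixed, $\rho$ is arbitrary, and so $\rho_K=\infty$, which is \eqref{eq:rhoK_explicit}. Consequently the power series $k(s)$ in \eqref{eq:k_ell_thm1} is entire. The Gain Bound Theorem \cite[Theorem 1]{vazquez2008volterra2}, invoked as in Step~4, then shows that the series \eqref{eq:K_series} converges in $L^2(0,1)$ and satisfies \eqref{eq:K_gain_sup} whenever $\|u\|^2_{L^2}<\rho_K=\infty$, that is, for every $u\in L^2(0,1)$.

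One check is needed in the scaling step. The inequality \eqref{eq:gn_ode_ineq} must enter Step~3 only through its displayed form, with no hidden use of $\rho_f$ in the induction base $n=2$. For $n=2$ the inequality reduces to its forcing term $D_f\sqrt{2}\,\rho^{-1}x$, which scales in the same way, so the check passes.
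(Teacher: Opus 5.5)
Your proposal is correct and follows the paper's route. The paper also holds $D_f$ fixed, argues that the factors $\rho_f^{-(m-1)}$ in \eqref{eq:gn_ode_ineq} accumulate to give $C_K\propto\rho_f^{-1}$, and lets $\rho_f\to\infty$; your substitution $g_n=\rho^{-(n-1)}\tilde g_n$ is a cleaner, rigorous version of that bookkeeping.

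One caveat applies equally to the paper. With $D_f$ fixed, \eqref{eq:entire_bound} at any fixed $n\ge 2$ with $\epsilon\to 0$ forces $f_n\equiv 0$, so the displayed hypothesis makes the corollary trivial ($K\equiv 0$). Under the weaker reading, where $D_f$ may grow with $\rho_f$ (already required for $f_2\equiv 1$, which needs $D_f\ge\rho_f/2$), your argument would also have to control how $\bar C$ depends on $D_f$, and the scaling does not provide that.
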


\begin{proof}
By Theorem \ref{thm:direct_sup}, $\rho_K\ge C_K^{-2}$, where $C_K$ depends only on $D_f$ and $\rho_f$. From the recursive inequality \eqref{eq:gn_ode_ineq}, each application of the induction step introduces a factor of $\rho_f^{-(m-1)}$ from the coupling term, with no compensating $\rho_f$-dependence in the combinatorial coefficients. Accumulating these factors across the induction on $n$ yields the bound \eqref{eq:gn_bound_proof} with $C_K$ proportional to $\rho_f^{-1}$. Under \eqref{eq:entire_bound}, Assumption \ref{ass:weighted_growth} is satisfied for every $\rho_f=1/\epsilon>0$. Letting $\epsilon\to 0$, one has $\rho_f\to\infty$, hence $C_K\to 0$ and $C_K^{-2}\to\infty$. Therefore $\rho_K=\infty$. The conclusion parallels \cite[Corollary 4.1]{vazquez2008volterra2}.
\end{proof}

\begin{remark}\label{rem:exponential_case}
The hypothesis \eqref{eq:entire_bound} covers all cases in which the plant's Volterra series $F[u]$ is globally convergent on $L^2(0,1)$, including:
\begin{itemize}
\item bounded kernels $\|f_n\|_\infty\le D$ uniformly in $n$,
\item exponentially growing kernels $\|f_n\|_\infty\le M\sigma^{n-1}$,
\item polynomial nonlinearities of fixed degree ($f_n\equiv 0$ for $n>N$).
\end{itemize}
For the exponentially-growing case, $\rho_f=1/\sigma$ in Assumption \ref{ass:weighted_growth}, and the finite lower bound $\rho_K\ge C_K^{-2}$ with $C_K\propto \sigma$ can be made explicit by tracking the constants in \cite[Proposition A.1]{vazquez2008volterra2} specialized to $B=E=0$.
\end{remark}

\section{Inverse Transformation}\label{sec:inverse}

Let $K[u]$ be defined by \eqref{eq:K_series} with kernels $k_n$ solving \eqref{eq:kernel_pde}--\eqref{eq:kernel_bc}. We study the invertibility of the mapping $u \mapsto w = u - K[u]$ via the fixed-point equation $u=w+K[u]$. Recall the gain function $k(s)$ from \eqref{eq:k_ell_thm1}. We further introduce the \emph{Lipschitz gain function}
\begin{equation}\label{eq:ell_def}
\ell(s) := 2\sum_{n=2}^{\infty}\frac{n^{4}\,\|k_n\|_{L^2(T_n(1))}^{2}}{(n-1)!}\,s^{\,n-1},
\qquad s\ge 0,
\end{equation}
which, by Theorem \ref{thm:direct_sup} and the polynomial-in-$n$ relationship between the coefficients of $k$ and $\ell$, has radius of convergence at least $\rho_K$.

\begin{lemma}[Lipschitz continuity of the backstepping Volterra  mapping]\label{lem:Lip_K}
For all $u,v\in L^2(0,1)$ with $\|u\|_{L^2(0,1)},\|v\|_{L^2(0,1)}\le\sqrt{s}$, one has
\begin{equation}\label{eq:Lip_K}
\|K[u]-K[v]\|_{L^2(0,1)} \le \sqrt{\ell(s)}\,\|u-v\|_{L^2(0,1)}.
\end{equation}
\end{lemma}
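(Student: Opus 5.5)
Our plan is to bound the difference term by term in the Volterra series and then recombine the pieces with Cauchy--Schwarz, so that the coefficients of $\ell$ appear. Write $K[u]-K[v]=\sum_{n\ge2}(K_n[u]-K_n[v])$, where
\[
K_n[u](x)=\int_{T_n(x)}k_n(x,\xi_1,\dots,\xi_n)\prod_{i=1}^n u(\xi_i)\,d\xi_n\cdots d\xi_1 .
\]
The first step is the telescoping identity
\[
\prod_{i=1}^n u(\xi_i)-\prod_{i=1}^n v(\xi_i)=\sum_{j=1}^n \prod_{i<j}v(\xi_i)\,\bigl(u(\xi_j)-v(\xi_j)\bigr)\prod_{i>j}u(\xi_i).
\]
This writes $K_n[u]-K_n[v]$ as a sum of $n$ multilinear Volterra terms. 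Each of them has exactly one slot occupied by $u-v$ and the remaining $n-1$ slots occupied by $u$ or $v$.

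The second step estimates each such term pointwise in $x$. Apply Cauchy--Schwarz on $T_n(x)$:
\[
\Bigl|\int_{T_n(x)}k_n\prod_i z_i(\xi_i)\Bigr|^2\le \|k_n(x,\cdot)\|_{L^2(T_n(x))}^2\int_{T_n(x)}\prod_i z_i(\xi_i)^2 .
\]
Enlarge the simplex to the cube $[0,1]^n$. The product integral is then at most $\prod_i\|z_i\|^2\le s^{n-1}\|u-v\|^2$. Integrating in $x$ gives
\[
\|K_n[u]-K_n[v]\|_{L^2(0,1)}\le n\,\|k_n\|_{L^2(\mathcal{T}_n)}\,s^{(n-1)/2}\,\|u-v\|,
\]
where $\mathcal{T}_n$ is the full domain $0\le\xi_n\le\dots\le\xi_1\le x\le1$ and the factor $n$ counts the telescoped terms.

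The point needing care is the mismatch with the paper's normalisation. Theorem~\ref{thm:direct_sup} uses $\|k_n\|_{L^2(T_n(1))}$ and factorial weights, so the $1/(n-1)!$ in $\ell$ has to be produced rather than discarded. To do this, we will not pass to the cube. Instead we keep the symmetric structure, following the Gain Bound Theorem \cite[Theorem 1]{vazquez2008volterra2}, whose proof yields the bound for $k(s)$ with weight $1/n!$. We apply that same argument to each multilinear term, which gives $\|k_n\|^2 s^{n-1}\|u-v\|^2/(n-1)!$ up to polynomial factors in $n$.

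The third step sums over $n$. Take a weighted Cauchy--Schwarz over the index $n$, in the form $\bigl(\sum_n a_n\bigr)^2\le\bigl(\sum_n n^{-2}\bigr)\sum_n n^2a_n^2$ with $\sum_n n^{-2}\le 2$. Together with $n^2$ from the $j$-sum, this produces exactly the weights $2n^4\|k_n\|^2/(n-1)!$, so that
\[
\|K[u]-K[v]\|^2\le \ell(s)\,\|u-v\|^2 .
\]
Convergence holds for $s<\rho_K$ by the radius statement preceding the lemma.

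The main obstacle is the factorial bookkeeping in the second step: showing that the multilinear estimate inherits the $1/(n-1)!$ gain rather than only the crude cube bound. This requires reproducing the relevant step of the gain bound theorem with one argument singled out.
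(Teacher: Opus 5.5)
Your outline is the same as the paper's: telescoping, Cauchy--Schwarz on $T_n(x)$, a factor $n^2$ from the $j$-sum, then $(\sum_n a_n)^2\le 2\sum_n n^2a_n^2$. Your Step 3 is fine. The step you defer, however, is the whole content of the lemma, and it fails as you set it up, because a telescoped term has no symmetric structure left to keep.

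The factorial in the Gain Bound Theorem comes from the identity $\int_{T_m(x)}\prod_{i=1}^m|w(\xi_i)|^2\,d\xi=\bigl(\int_0^x|w|^2\bigr)^m/m!$, which requires one common function in every slot. In your $j$-th term, fix $\xi_j=\eta$. The $j-1$ slots above $\eta$ carry $v$ and the $n-j$ slots below carry $u$, so the Cauchy--Schwarz factor is
\[
\int_0^x|h(\eta)|^2\,\frac{\bigl(V(x)-V(\eta)\bigr)^{j-1}U(\eta)^{n-j}}{(j-1)!\,(n-j)!}\,d\eta,\qquad h=u-v,\quad U(\eta)=\int_0^\eta|u|^2,\quad V(\eta)=\int_0^\eta|v|^2 .
\]
Here $V(x)-V(\eta)\le s$ and $U(\eta)\le s$ hold separately but not jointly. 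The best available bound is therefore $\binom{n-1}{j-1}\|h\|^2s^{n-1}/(n-1)!$, and this is sharp up to polynomial factors. For example, put the mass of $u$ on $[0,\epsilon]$ and that of $v$ on $[1-\epsilon,1]$, so that $h=u-v$ is a genuine difference. Then the factor at $x=1$ equals $\binom{n+1}{j}s^n/n!$, whereas $\|h\|^2s^{n-1}/(n-1)!=2n\,s^n/n!$. For $j\approx n/2$ the ratio grows exponentially in $n$.

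So the loss is not polynomial in $n$. Summing $\binom{n-1}{j-1}$ over $j$ gives $2^{n-1}$, and your route honestly proves the inequality with $\ell(2s)$ in place of $\ell(s)$ (and only for $2s<\rho_K$), not the lemma as stated. The paper's one-line appeal to $\mathrm{vol}(T_{n-1}(x))=x^{n-1}/(n-1)!$, with ``each $\theta_i$ either $u$ or $v$'', has the same limitation.

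The missing idea is to place one common function in all slots other than the difference slot, using the integral mean-value identity
\[
K_n[u]-K_n[v]=\int_0^1 DK_n[w_t](u-v)\,dt,\qquad w_t:=v+t(u-v),\quad \|w_t\|_{L^2}\le(1-t)\|v\|_{L^2}+t\|u\|_{L^2}\le\sqrt{s},
\]
with $DK_n$ as in \eqref{eq:DKn}. Set $W(\eta):=\int_0^\eta|w_t|^2$. The $j$-th weight becomes $\bigl(W(x)-W(\eta)\bigr)^{j-1}W(\eta)^{n-j}/\bigl((j-1)!\,(n-j)!\bigr)$. This is a single term of the binomial expansion of $W(x)^{n-1}/(n-1)!$, which is at most $s^{n-1}/(n-1)!$.

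Each $j$-term is then bounded by $\|k_n(x,\cdot)\|_{L^2(T_n(x))}\,\|u-v\|_{L^2}\bigl(s^{n-1}/(n-1)!\bigr)^{1/2}$. Apply $(\sum_{j=1}^n a_j)^2\le n\sum_j a_j^2$, integrate in $x$, and use Minkowski in $t$. This gives exactly \eqref{eq:Kn_diff_bound}, equivalently the derivative bound \eqref{eq:DKn_bound}, which is valid precisely because all slots carry the same function. Your Step 3 then yields $\ell(s)$ with no slack. This also removes the tension in your write-up between ``up to polynomial factors'' in Step 2 and ``exactly the weights $2n^4\|k_n\|^2/(n-1)!$'' in Step 3.
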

\begin{proof}
Write
\begin{equation}\label{eq:K_diff_split}
K[u]-K[v]=\sum_{n\ge 2}\big(K_n[u]-K_n[v]\big).
\end{equation}
For each $n\ge 2$ and each $x\in[0,1]$,
\begin{equation}\label{eq:Kn_diff}
(K_n[u]-K_n[v])(x)
=
\int_{T_n(x)} k_n(x,\xi_1,\dots,\xi_n)
\sum_{j=1}^{n}
\big(u(\xi_j)-v(\xi_j)\big)
\prod_{i\neq j}\theta_i(\xi_i)\,d\xi_n\cdots d\xi_1,
\end{equation}
where each $\theta_i$ is either $u$ or $v$. Applying the Cauchy--Schwarz inequality on the simplex $T_n(x)$ and the bound $\prod_{i\neq j}|\theta_i(\xi_i)|^{2}$ together with the simplex-volume identity $\mathrm{vol}(T_{n-1}(x))=x^{n-1}/(n-1)!$ yields, after integration in $x$,
\begin{equation}\label{eq:Kn_diff_bound}
\|K_n[u]-K_n[v]\|_{L^2(0,1)}^{2}
\le
\|k_n\|_{L^2(T_n(1))}^{2}\cdot\frac{n^{2}\,s^{\,n-1}}{(n-1)!}\,\|u-v\|_{L^2(0,1)}^{2},
\end{equation}
where the factor $n^{2}$ arises from the inequality $(\sum_{j=1}^{n}a_j)^{2}\le n\sum_{j=1}^{n}a_j^{2}$ applied to the sum in \eqref{eq:Kn_diff}. Summing the estimate
\begin{equation}\label{eq:K_diff_sq_split}
\|K[u]-K[v]\|_{L^2(0,1)}^{2}
=
\Big\|\sum_{n\ge 2}(K_n[u]-K_n[v])\Big\|_{L^2(0,1)}^{2}
\le
2\sum_{n\ge 2}n^{2}\,\|K_n[u]-K_n[v]\|_{L^2(0,1)}^{2}
\end{equation}
(which uses $(\sum a_n)^{2}\le(\sum n^{2}a_n^{2})(\sum 1/n^{2})\le 2\sum n^{2}a_n^{2}$, as in \cite[Theorem 1]{vazquez2008volterra2}) with the bound \eqref{eq:Kn_diff_bound} gives
\begin{equation}\label{eq:K_diff_sq_final}
\|K[u]-K[v]\|_{L^2(0,1)}^{2}
\le
\left(2\sum_{n\ge 2}\frac{n^{4}\,\|k_n\|_{L^2(T_n(1))}^{2}}{(n-1)!}\,s^{\,n-1}\right)\|u-v\|_{L^2(0,1)}^{2}
=
\ell(s)\,\|u-v\|_{L^2(0,1)}^{2}.
\end{equation}
Taking square roots yields \eqref{eq:Lip_K}.
\end{proof}

With the Lipschitz estimate of the lemma, the inverse transformation is obtained by a contraction mapping.

\begin{theorem}[Local invertibility via contraction mapping]\label{thm:inverse_local}
Let Assumption \ref{ass:weighted_growth} hold. There exists $s\in(0,\rho_K)$ with $\ell(s)<1$ and a constant $\rho_L>0$ such that for every $w\in L^2(0,1)$ with $\|w\|_{L^2(0,1)}<\sqrt{\rho_L}$, the equation
\begin{equation}\label{eq:fixed_point}
u=w+K[u]
\end{equation}
admits a unique solution $u\in L^2(0,1)$ with $\|u\|_{L^2(0,1)}\le\sqrt{s}$. Moreover, the inverse mapping $L[w]:=u-w$ is well-defined and satisfies
\begin{equation}\label{eq:L_bound}
\|L[w]\|_{L^2(0,1)}
\le
\frac{\sqrt{\ell(s)}}{1-\sqrt{\ell(s)}}\,\|w\|_{L^2(0,1)}.
\end{equation}
\end{theorem}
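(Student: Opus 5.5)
The plan is to apply the Banach fixed-point theorem to the map $\Phi_w[u]:=w+K[u]$ on the closed ball $\bar B_s:=\{u\in L^2(0,1):\|u\|_{L^2(0,1)}\le\sqrt{s}\}$, using Lemma \ref{lem:Lip_K} for the contraction property.

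\emph{Choice of $s$.} The power series $\ell$ in \eqref{eq:ell_def} starts at $s^1$, because the $n=2$ term is proportional to $s^{1}$. So $\ell(0)=0$. By Theorem \ref{thm:direct_sup}, $\ell$ has radius of convergence at least $\rho_K>0$, hence it is continuous and nondecreasing on $[0,\rho_K)$. We can therefore fix $s\in(0,\rho_K)$ with $\ell(s)<1$ and set $\lambda:=\sqrt{\ell(s)}<1$.

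\emph{Self-map.} We need $K[0]=0$, which is immediate since the series \eqref{eq:K_series} has no constant or linear term. Applying Lemma \ref{lem:Lip_K} with $v=0$ gives $\|K[u]\|_{L^2(0,1)}\le\lambda\|u\|_{L^2(0,1)}\le\lambda\sqrt{s}$ for every $u\in\bar B_s$. Hence
\[
\|\Phi_w[u]\|_{L^2(0,1)}\le\|w\|_{L^2(0,1)}+\lambda\sqrt{s}.
\]
Define $\rho_L:=(1-\lambda)^2 s>0$. If $\|w\|_{L^2(0,1)}<\sqrt{\rho_L}=(1-\lambda)\sqrt{s}$, the right-hand side is below $\sqrt{s}$, so $\Phi_w$ maps $\bar B_s$ into itself.

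\emph{Contraction.} Lemma \ref{lem:Lip_K} gives $\|\Phi_w[u]-\Phi_w[v]\|_{L^2(0,1)}\le\lambda\|u-v\|_{L^2(0,1)}$ on $\bar B_s$. Since $\bar B_s$ is a closed subset of the Hilbert space $L^2(0,1)$, it is complete. The Banach fixed-point theorem then yields a unique fixed point $u\in\bar B_s$ of \eqref{eq:fixed_point}, so $L[w]:=u-w=K[u]$ is well defined.

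\emph{Bound \eqref{eq:L_bound}.} From $u=w+K[u]$ we get $\|u\|_{L^2(0,1)}\le\|w\|_{L^2(0,1)}+\lambda\|u\|_{L^2(0,1)}$, so $\|u\|_{L^2(0,1)}\le\|w\|_{L^2(0,1)}/(1-\lambda)$. Then
\[
\|L[w]\|_{L^2(0,1)}=\|K[u]\|_{L^2(0,1)}\le\lambda\|u\|_{L^2(0,1)}\le\frac{\lambda}{1-\lambda}\|w\|_{L^2(0,1)},
\]
which is exactly \eqref{eq:L_bound}.

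\emph{Main obstacle.} The contraction argument itself is routine. The delicate point is making sure Lemma \ref{lem:Lip_K} applies on all of $\bar B_s$. This requires that $K[u]$ actually converges there, i.e.\ that $s<\rho_K$ so that Theorem \ref{thm:direct_sup} applies. It also requires that $\ell(s)$ is finite, which follows from the polynomial-in-$n$ comparison between the coefficients of $k$ and $\ell$.

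A secondary issue is the meaning of "unique." The argument gives uniqueness only within $\bar B_s$, which is how the statement is phrased. It does not give uniqueness in all of $L^2(0,1)$.
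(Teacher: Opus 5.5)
Your proof is correct and follows the paper's argument: a contraction of $u\mapsto w+K[u]$ on the closed ball of radius $\sqrt{s}$ via Lemma~\ref{lem:Lip_K}, then the same triangle-inequality derivation of \eqref{eq:L_bound}. The only difference is the self-map step: you use Lemma~\ref{lem:Lip_K} with $v=0$ to obtain the explicit radius $\rho_L=(1-\sqrt{\ell(s)})^2 s$, whereas the paper uses the gain bound \eqref{eq:K_gain_sup} and chooses $\rho_L$ with $2\rho_L+2k(s)\le s$ (this is feasible because $k(s)\le s\,\ell(s)/8<s/2$, a fact the paper leaves implicit).
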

\begin{proof}
Since the Volterra series \eqref{eq:K_series} has no linear term ($k_1\equiv 0$), the Lipschitz gain function satisfies $\ell(0)=0$. By Theorem \ref{thm:direct_sup}, $\ell$ is continuous on $[0,\rho_K)$. Hence there exists $s\in(0,\rho_K)$ with $\ell(s)<1$; fix any such $s$.

Let $\mathcal{B}_s:=\{u\in L^2(0,1): \|u\|_{L^2(0,1)}\le\sqrt{s}\}$. For $u\in\mathcal{B}_s$, by \eqref{eq:K_gain_sup} of Theorem \ref{thm:direct_sup},
\begin{equation}\label{eq:self_map}
\|w+K[u]\|_{L^2(0,1)}^{2}
\le
2\|w\|_{L^2(0,1)}^{2}+2k(s).
\end{equation}
Choose $\rho_L>0$ such that
\begin{equation}\label{eq:rho_choice}
2\rho_L+2k(s)\le s.
\end{equation}
Then for $\|w\|_{L^2(0,1)}<\sqrt{\rho_L}$, the map $T(u):=w+K[u]$ maps $\mathcal{B}_s$ into itself. By Lemma \ref{lem:Lip_K},
\begin{equation}\label{eq:T_contraction}
\|T(u)-T(v)\|_{L^2(0,1)}
=\|K[u]-K[v]\|_{L^2(0,1)}
\le\sqrt{\ell(s)}\,\|u-v\|_{L^2(0,1)},
\end{equation}
so $T$ is a contraction on $\mathcal{B}_s$ with contraction ratio $\sqrt{\ell(s)}<1$. Hence \eqref{eq:fixed_point} has a unique solution $u\in\mathcal{B}_s$, and $L[w]:=u-w=K[u]$ is well-defined. Lemma \ref{lem:Lip_K} with $v=0$ gives
\begin{equation}\label{eq:Lw_prelim}
\|L[w]\|_{L^2(0,1)}
=\|K[u]\|_{L^2(0,1)}
\le\sqrt{\ell(s)}\,\|u\|_{L^2(0,1)}.
\end{equation}
From \eqref{eq:fixed_point} and the triangle inequality,
\begin{equation}\label{eq:u_triangle}
\|u\|_{L^2(0,1)}
\le\|w\|_{L^2(0,1)}+\|K[u]\|_{L^2(0,1)}
\le\|w\|_{L^2(0,1)}+\sqrt{\ell(s)}\,\|u\|_{L^2(0,1)},
\end{equation}
hence
\begin{equation}\label{eq:u_inv_bound}
\|u\|_{L^2(0,1)}\le\frac{1}{1-\sqrt{\ell(s)}}\|w\|_{L^2(0,1)}.
\end{equation}
Combining the last two estimates yields \eqref{eq:L_bound}.
\end{proof}

\begin{remark}\label{rem:inverse_approach}
An alternative approach, followed by \cite{vazquez2008volterra1,vazquez2008volterra2} in the parabolic case, constructs the inverse transformation as a Volterra series $L = \sum_{n\ge 2} L_n$ whose kernels $l_n$ are determined by formal composition-inversion of \eqref{eq:K_series}. Convergence of that series on a neighborhood of the origin is established through the Boyd--Chua inversion machinery for polynomial operators \cite{boyd1985fading}, which yields existence of $L$ but no quantitative estimates on the kernels $l_n$ or on the radius of invertibility. The contraction argument used in Theorem \ref{thm:inverse_local} bypasses the formal inversion and produces, in exchange, an explicit Lipschitz bound \eqref{eq:L_bound} and an explicit radius $\rho_L$ from \eqref{eq:rho_choice}, both of which enter directly into the closed-loop stability estimate of Theorem \ref{thm:closed_loop}.
\end{remark}

\section{Main Result: Closed-Loop (Local Exponential) Stability}\label{sec:closed-loop}

Even when the backstepping operator $K$ converges globally, the inverse $(I-K)^{-1}$ doesn't necessarily converge globally. This is consistent with results on counterexamples to global stabilizability of nonlinear PDEs by boundary control, of which several are mentioned in \cite{vazquez2008volterra1}. For this reason, stabilization that one should expect is not global. In the next theorem we give our main result: local stabilization with a qualitative estimate of the region of attraction in $L^2$. 

\begin{theorem}[Closed-loop local exponential stability in $L^2$]\label{thm:closed_loop}
Let Assumption \ref{ass:weighted_growth} hold. Let $s\in(0,\rho_K)$ be such that $\ell(s)<1$, and let $\rho_L>0$ be as in Theorem \ref{thm:inverse_local}. Then, for every $\lambda>0$ there exist constants $C_1,C_2>0$ such that, for every initial condition satisfying
\begin{equation}\label{eq:stability_initial}
\|u_0\|_{L^2(0,1)}<C_1,
\end{equation}
the solution of the closed-loop system satisfies
\begin{equation}\label{eq:stability_bound}
\|u(\cdot,t)\|_{L^2(0,1)}\le C_2\,e^{-\lambda t}\,\|u_0\|_{L^2(0,1)},
\qquad t\ge 0,
\end{equation}
where one may take
\begin{equation}\label{eq:stability_constants}
C_1\le\min\!\left\{\sqrt{s},\;\frac{\sqrt{\rho_L}}{1+\sqrt{\ell(s)}}\right\},
\qquad
C_2=e^{\lambda}\,\frac{1+\sqrt{\ell(s)}}{1-\sqrt{\ell(s)}}.
\end{equation}
\end{theorem}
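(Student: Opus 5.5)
We transfer the behaviour of the target system \eqref{eq:target} back to $u$ through the transformation $\mathcal{T}=I-K$ and its local inverse from Theorem \ref{thm:inverse_local}. The target transport equation $w_t=w_x$, $w(1,t)=0$ is solved explicitly: $w(x,t)=w_0(x+t)$ for $x+t<1$ and $w(x,t)=0$ otherwise. Hence $\|w(\cdot,t)\|_{L^2}\le\|w_0\|_{L^2}$ for all $t$, and $w\equiv0$ for $t\ge1$. This gives, for any $\lambda>0$,
\begin{equation*}
\|w(\cdot,t)\|_{L^2(0,1)}\le e^{\lambda}e^{-\lambda t}\|w_0\|_{L^2(0,1)},
\end{equation*}
because $e^{\lambda(1-t)}\ge1$ for $t\le1$ and the left side vanishes for $t>1$. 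This is the source of the factor $e^{\lambda}$ in $C_2$.

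\textbf{Initial data.} Let $\|u_0\|<C_1\le\sqrt{s}$, so $u_0\in\mathcal{B}_s$. Lemma \ref{lem:Lip_K} with $v=0$ gives $\|K[u_0]\|\le\sqrt{\ell(s)}\|u_0\|$, so
\begin{equation*}
\|w_0\|\le(1+\sqrt{\ell(s)})\|u_0\|<(1+\sqrt{\ell(s)})\,C_1\le\sqrt{\rho_L}.
\end{equation*}

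\textbf{Inversion at each time.} For each $t$, $\|w(\cdot,t)\|\le\|w_0\|<\sqrt{\rho_L}$. Theorem \ref{thm:inverse_local} then gives a unique preimage in $\mathcal{B}_s$, and \eqref{eq:u_inv_bound} bounds it by $\|w(\cdot,t)\|/(1-\sqrt{\ell(s)})$. Chaining the estimates,
\begin{equation*}
\|u(\cdot,t)\|\le\frac{1}{1-\sqrt{\ell(s)}}\,e^{\lambda}e^{-\lambda t}\,(1+\sqrt{\ell(s)})\|u_0\|,
\end{equation*}
which is exactly \eqref{eq:stability_bound} with $C_2$ as in \eqref{eq:stability_constants}.

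\textbf{Main obstacle.} We must show that the actual closed-loop solution $u(\cdot,t)$ coincides with the inverse image $(I-K)^{-1}[w(\cdot,t)]$ taken in $\mathcal{B}_s$. The derivation in Section \ref{sec:backstepping} shows that any closed-loop solution maps to a solution of \eqref{eq:target}. We therefore need $u(\cdot,t)$ to stay in $\mathcal{B}_s$, where injectivity of $I-K$ holds. This will be done by a continuity argument, relying on the closed-loop well-posedness of Section \ref{sec-exist}, which gives $t\mapsto u(\cdot,t)$ continuous in $L^2$. Let $t^*$ be the supremum of times with $\|u(\cdot,\tau)\|\le\sqrt{s}$ on $[0,t^*)$. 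On that interval uniqueness of the fixed point in $\mathcal{B}_s$ identifies $u$ with the inverse image, and the bound above holds. That bound gives $\|u\|\le C_2\|u_0\|$. To conclude that $u$ stays strictly inside $\mathcal{B}_s$, we use the fixed-point construction directly: the inverse image lies in $\mathcal{B}_s$ because $\|w\|<\sqrt{\rho_L}$ and $2\rho_L+2k(s)\le s$ give strict interior membership. This is stronger than relying on $C_2C_1\le\sqrt{s}$, which need not hold. Continuity then contradicts $t^*<\infty$, so the estimate holds for all $t\ge0$.

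If closed-loop well-posedness is only available for regular data, we instead prove the estimate for smooth compatible data and extend it to general $u_0$ by density, using the Lipschitz bounds on $K$ and $L$.
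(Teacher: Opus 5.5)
Your proposal is correct and follows the paper's proof. Both use the finite-time extinction of $w$ to get the $e^{\lambda}e^{-\lambda t}$ bound, the estimate $\|w_0\|\le(1+\sqrt{\ell(s)})\|u_0\|$ from Lemma~\ref{lem:Lip_K} with $v=0$, and the inverse bound \eqref{eq:u_inv_bound} from Theorem~\ref{thm:inverse_local}. Your continuity argument, which uses $\|u\|^2\le 2\|w\|^2+2k(s)<2\rho_L+2k(s)\le s$ to keep the trajectory strictly inside $\mathcal{B}_s$, makes explicit the identification of the closed-loop state with the inverse image. The paper takes that identification for granted here and builds it into the definition of mild solution in Section~\ref{sec-exist}.
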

\begin{proof}
Let $w=u-K[u]$. By construction, $w$ satisfies the target system
\begin{equation}\label{eq:target_stability}
w_t=w_x,
\qquad
w(1,t)=0.
\end{equation}
The solution of \eqref{eq:target_stability} is $w(x,t)=w_0(x+t)$ for $x+t<1$ and $w(x,t)=0$ for $x+t\ge 1$, so $w$ reaches zero in finite time $t=1$. In particular, for every $\lambda>0$,
\begin{equation}\label{eq:w_exp_decay}
\|w(\cdot,t)\|_{L^2(0,1)}^{2}
\le e^{2\lambda}\,e^{-2\lambda t}\,\|w_0\|_{L^2(0,1)}^{2},
\qquad t\ge 0,
\end{equation}
since $\|w(\cdot,t)\|_{L^2}^{2}\le\|w_0\|_{L^2}^{2}$ for $t\in[0,1]$ and $e^{2\lambda(1-t)}\le e^{2\lambda}$ on that interval; for $t\ge 1$ the bound is trivial.

By Lemma \ref{lem:Lip_K} with $v=0$, for $\|u_0\|_{L^2}^{2}\le s$,
\begin{equation}
\|K[u_0]\|_{L^2(0,1)}^{2}
\le\ell(s)\,\|u_0\|_{L^2(0,1)}^{2},
\end{equation}
and hence by the triangle inequality,
\begin{equation}\label{eq:w0_bound}
\|w_0\|_{L^2(0,1)}^{2}
=\|u_0-K[u_0]\|_{L^2(0,1)}^{2}
\le\big(1+\sqrt{\ell(s)}\big)^{2}\,\|u_0\|_{L^2(0,1)}^{2}.
\end{equation}
Choosing $C_1$ as in \eqref{eq:stability_constants} ensures both $\|u_0\|_{L^2}^{2}\le s$ and $\|w_0\|_{L^2}^{2}<\rho_L$, so the inverse mapping of Theorem \ref{thm:inverse_local} applies and yields, via \eqref{eq:L_bound} and the triangle inequality,
\begin{equation}\label{eq:u_from_w}
\|u(\cdot,t)\|_{L^2(0,1)}^{2}
\le\frac{1}{\big(1-\sqrt{\ell(s)}\big)^{2}}\,\|w(\cdot,t)\|_{L^2(0,1)}^{2}.
\end{equation}
Combining \eqref{eq:w_exp_decay}, \eqref{eq:w0_bound}, and \eqref{eq:u_from_w},
\begin{equation}
\|u(\cdot,t)\|_{L^2(0,1)}^{2}
\le
\frac{\big(1+\sqrt{\ell(s)}\big)^{2}}{\big(1-\sqrt{\ell(s)}\big)^{2}}\,e^{2\lambda}\,e^{-2\lambda t}\,\|u_0\|_{L^2(0,1)}^{2}.
\end{equation}
Taking square roots yields \eqref{eq:stability_bound} with $C_2$ as in \eqref{eq:stability_constants}.
\end{proof}

\section{Well-Posedness in $L^2$ in Closed Loop}
\label{sec-exist}

Because the closed-loop vector field contains an unbounded transport operator and a nonlinear Volterra term defined only at the $L^2$ level, one cannot in general expect classical (or even strong) solutions. The natural notion is therefore that of a mild solution generated by the transport semigroup.

\begin{theorem}[Existence and uniqueness of mild closed-loop solutions]\label{thm:wellposed}
Let Assumption \ref{ass:weighted_growth} hold. Let $s\in(0,\rho_K)$ be such that $\ell(s)<1$, and let $\rho_L>0$ be as in Theorem \ref{thm:inverse_local}.
Consider the closed-loop system consisting of \eqref{eq:plant}, \eqref{eq:boundary}, \eqref{eq:initial} with the feedback law \eqref{eq:control}, where $K[u]$ is defined by \eqref{eq:K_series} with kernels satisfying \eqref{eq:kernel_pde}--\eqref{eq:kernel_bc}.
If
\begin{equation}\label{eq:u0_smallness_wp}
\|u_0\|_{L^2(0,1)} < \min\!\left\{\sqrt{s},\ \frac{\sqrt{\rho_L}}{1+\sqrt{\ell(s)}}\right\},
\end{equation}
then there exists a unique function
\begin{equation}\label{eq:u_continuity_wp}
u \in C\big([0,\infty);L^2(0,1)\big)
\end{equation}
which is a mild solution\footnote{A function $u \in C([0,\infty);L^2(0,1))$ is called a mild solution of the closed-loop problem if $\mathcal{T}[u(\cdot,t)] = S(t)\mathcal{T}[u_0]$ for all $t \ge 0$, where $\mathcal{T}$ is defined in \eqref{eq:Aop} and $\{S(t)\}_{t\ge 0}$ is the solution semigroup of the target transport equation \eqref{eq:target}, given explicitly by
\[
(S(t)\varphi)(x) \;=\;
\begin{cases}
\varphi(x+t), & x+t < 1,\\[2pt]
0, & x+t \ge 1,
\end{cases}
\qquad \varphi\in L^2(0,1).
\]}
of the closed-loop problem. Moreover, this solution satisfies the estimate of Theorem \ref{thm:closed_loop}.
\end{theorem}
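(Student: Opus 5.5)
The mild solution is defined through the transformation, so I will construct it explicitly in the $w$-coordinates and then pull it back. Set $w_0:=\mathcal{T}[u_0]=u_0-K[u_0]$. This is well defined because $\|u_0\|^2_{L^2}<s<\rho_K$, so Theorem \ref{thm:direct_sup} applies. As in the proof of Theorem \ref{thm:closed_loop}, Lemma \ref{lem:Lip_K} with $v=0$ gives $\|w_0\|_{L^2}\le(1+\sqrt{\ell(s)})\|u_0\|_{L^2}<\sqrt{\rho_L}$ by \eqref{eq:u0_smallness_wp}. Next define $w(t):=S(t)w_0$. Since $S(t)$ is the left-shift-with-truncation semigroup, it is a strongly continuous contraction semigroup on $L^2(0,1)$. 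Hence $w\in C([0,\infty);L^2)$ and $\|w(t)\|_{L^2}\le\|w_0\|_{L^2}<\sqrt{\rho_L}$ for all $t$.

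To pull back, I apply Theorem \ref{thm:inverse_local} at each time. For each $t$ this yields a unique $u(t)\in\mathcal{B}_s$ with $u(t)=w(t)+K[u(t)]$, that is, $\mathcal{T}[u(t)]=S(t)\mathcal{T}[u_0]$. At $t=0$, $u_0\in\mathcal{B}_s$ solves the fixed-point equation with $w=w_0$, so uniqueness forces $u(0)=u_0$. For continuity in time, write
\[
u(t)-u(\tau)=w(t)-w(\tau)+K[u(t)]-K[u(\tau)].
\]
Applying Lemma \ref{lem:Lip_K} on $\mathcal{B}_s$ to the $K$-difference gives
\[
\|u(t)-u(\tau)\|\le\frac{1}{1-\sqrt{\ell(s)}}\|w(t)-w(\tau)\|,
\]
so $u\in C([0,\infty);L^2)$.

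For uniqueness, let $\tilde u\in C([0,\infty);L^2)$ be another mild solution. This is the step I expect to be the main obstacle, because Theorem \ref{thm:inverse_local} only guarantees uniqueness inside $\mathcal{B}_s$, while $\tilde u$ is not assumed a priori to stay in the ball. I would handle this with a continuity argument. Let $t^*:=\sup\{t:\|\tilde u(\tau)\|\le\sqrt{s}\ \text{for all}\ \tau\in[0,t]\}$. We have $t^*>0$ because $\|u_0\|<\sqrt{s}$ strictly and $\tilde u$ is continuous. On $[0,t^*)$, uniqueness in $\mathcal{B}_s$ gives $\tilde u=u$. Moreover $u$ satisfies the strict bound
\[
\|u(t)\|\le\frac{\|w(t)\|}{1-\sqrt{\ell(s)}}\le\frac{1+\sqrt{\ell(s)}}{1-\sqrt{\ell(s)}}\|u_0\|.
\]
If this bound can be made less than $\sqrt{s}$, then continuity of $\tilde u$ forces $t^*=\infty$. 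To guarantee it, I would argue directly through the self-map property instead. By \eqref{eq:rho_choice}, $\|w(t)\|^2<\rho_L$ implies $\|u(t)\|^2\le 2\rho_L+2k(s)\le s$. Strictness then follows from $\|w(t)\|^2<\rho_L$ together with continuity, so $\tilde u$ cannot exit the ball at $t^*$. If this strictness cannot be secured, I would instead state uniqueness within the class of mild solutions satisfying $\|u(t)\|\le\sqrt{s}$, which is the setting the contraction argument naturally provides.

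The estimate of Theorem \ref{thm:closed_loop} then follows verbatim. Its proof uses only $\mathcal{T}[u(t)]=S(t)w_0$, the finite-time extinction of $S(t)$, \eqref{eq:w0_bound}, and \eqref{eq:u_inv_bound}, and all of these hold for the constructed mild solution.
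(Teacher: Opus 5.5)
Your existence and continuity argument is the paper's: you pull back $w(t)=S(t)w_0$ through the contraction inverse of Theorem \ref{thm:inverse_local}, obtain time continuity from $\|u(t)-u(\tau)\|_{L^2}\le (1-\sqrt{\ell(s)})^{-1}\|w(t)-w(\tau)\|_{L^2}$, and carry the estimate over verbatim.

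The difference is in uniqueness, and there your route is more careful than the paper's. The paper simply invokes injectivity of $\mathcal{T}$ on the closed ball $\{\|u\|_{L^2}\le\sqrt{s}\}$. That tacitly assumes a competing mild solution $\tilde u$ never leaves the ball. Your exit-time argument is what actually justifies this, and it goes through, so you should drop the hedge at the end.

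You are right that the first bound, $\frac{1+\sqrt{\ell(s)}}{1-\sqrt{\ell(s)}}\|u_0\|_{L^2}$, need not lie below $\sqrt{s}$ under \eqref{eq:u0_smallness_wp}. The self-map bound does, and strictness needs no continuity argument. Since $u(t)\in\mathcal{B}_s$ and $k$ has nonnegative coefficients, \eqref{eq:rho_choice} gives, uniformly in $t$,
\[
\|u(t)\|_{L^2}^2\le 2\|w(t)\|_{L^2}^2+2k\big(\|u(t)\|_{L^2}^2\big)\le 2\|w_0\|_{L^2}^2+2k(s)<2\rho_L+2k(s)\le s .
\]
The bootstrap then closes as follows. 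Suppose $t^*<\infty$. Continuity of $\tilde u$ and closedness of $\mathcal{B}_s$ put $\tilde u(\tau)\in\mathcal{B}_s$ on all of $[0,t^*]$. Uniqueness of the fixed point in $\mathcal{B}_s$ then gives $\tilde u=u$ there. Hence $\|\tilde u(t^*)\|_{L^2}<\sqrt{s}$, and continuity extends the interval beyond $t^*$, contradicting the definition of $t^*$.

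One caveat, which you share with the paper: you use $\tilde u(0)=u_0$ to get $t^*>0$. This must be built into the notion of solution. The footnote's definition by itself only yields $\mathcal{T}[\tilde u(0)]=\mathcal{T}[u_0]$, and $\mathcal{T}$ is known to be injective only on $\mathcal{B}_s$.
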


\begin{proof}
Define the map
\begin{equation}\label{eq:Aop}
\mathcal{T}[u] := u - K[u],
\end{equation}
so that \eqref{eq:transformation} reads $w=\mathcal{T}[u]$.

\emph{Bi-Lipschitz property of $\mathcal{T}$.}
By Lemma \ref{lem:Lip_K}, for all $u,v\in L^2(0,1)$ with $\|u\|_{L^2},\|v\|_{L^2}\le \sqrt{s}$,
\begin{equation}\label{eq:A_upper}
\|\mathcal{T}[u]-\mathcal{T}[v]\|_{L^2}
\le \big(1+\sqrt{\ell(s)}\,\big)\|u-v\|_{L^2},
\end{equation}
and, by the reverse triangle inequality,
\begin{equation}\label{eq:A_lower}
\|\mathcal{T}[u]-\mathcal{T}[v]\|_{L^2}
\ge \big(1-\sqrt{\ell(s)}\,\big)\|u-v\|_{L^2}.
\end{equation}
Since $\ell(s)<1$, $\mathcal{T}$ is bi-Lipschitz on the closed ball $\{u:\|u\|_{L^2}\le \sqrt{s}\}$, and by Theorem \ref{thm:inverse_local}, its inverse $\mathcal{T}^{-1}$ is well-defined and Lipschitz on $\{w:\|w\|_{L^2}<\sqrt{\rho_L}\}$.

\emph{Initial data in the target variable.}
Set
\begin{equation}\label{eq:w0_def_wp}
w_0 := \mathcal{T}[u_0] = u_0 - K[u_0].
\end{equation}
Lemma \ref{lem:Lip_K} with $v=0$ gives $\|K[u_0]\|_{L^2}\le \sqrt{\ell(s)}\,\|u_0\|_{L^2}$ whenever $\|u_0\|_{L^2}\le\sqrt{s}$, hence by the triangle inequality
\begin{equation}\label{eq:w0_bound_wp}
\|w_0\|_{L^2} \;\le\; \big(1+\sqrt{\ell(s)}\,\big)\|u_0\|_{L^2}.
\end{equation}
Combining \eqref{eq:w0_bound_wp} with \eqref{eq:u0_smallness_wp} yields
\begin{equation}\label{eq:w0_in_ball}
\|u_0\|_{L^2} \le \sqrt{s}, \qquad \|w_0\|_{L^2} < \sqrt{\rho_L}.
\end{equation}

\emph{Evolution of $w$.}
Let $w(\cdot,t) := S(t)w_0$ be the unique mild solution of \eqref{eq:target} with initial condition $w_0$. Explicitly, $w(x,t) = w_0(x+t)$ for $x+t<1$ and $w(x,t)=0$ for $x+t\ge 1$. A direct computation gives, for $t\in[0,1]$,
\begin{equation}
\|w(\cdot,t)\|_{L^2(0,1)}^2 \;=\; \int_0^{1-t}|w_0(x+t)|^2\,dx \;\le\; \|w_0\|_{L^2(0,1)}^2,
\end{equation}
and $w(\cdot,t)\equiv 0$ for $t\ge 1$. Hence
\begin{equation}\label{eq:w_nonexpansive}
\|w(\cdot,t)\|_{L^2} \le \|w_0\|_{L^2}, \qquad t\ge 0,
\end{equation}
so that, by \eqref{eq:w0_in_ball}, $\|w(\cdot,t)\|_{L^2} < \sqrt{\rho_L}$ for all $t\ge 0$; that is, $w(\cdot,t)$ remains in the domain of $\mathcal{T}^{-1}$ for all time.

\emph{Definition of $u(\cdot,t)$.}
Define $u(\cdot,t)$ implicitly by
\begin{equation}\label{eq:Au_eq_w}
\mathcal{T}[u(\cdot,t)] = w(\cdot,t), \qquad t\ge 0.
\end{equation}
Since $\|w(\cdot,t)\|_{L^2}<\sqrt{\rho_L}$ for every $t\ge 0$, the Lipschitz inverse $\mathcal{T}^{-1}$ applies and defines a unique $u(\cdot,t)\in L^2(0,1)$ with $\|u(\cdot,t)\|_{L^2}\le \sqrt{s}$. Continuity in time,
\[
u\in C\big([0,\infty);L^2(0,1)\big),
\]
follows from the continuity of $t\mapsto w(\cdot,t)$ in $L^2$ (a property of the transport semigroup $S(t)$) composed with the Lipschitz continuity of $\mathcal{T}^{-1}$.

\emph{Uniqueness.}
If $\tilde u\in C([0,\infty);L^2(0,1))$ is another mild solution with $\tilde u(\cdot,0)=u_0$, then $\mathcal{T}[\tilde u(\cdot,t)] = S(t)\mathcal{T}[u_0] = w(\cdot,t) = \mathcal{T}[u(\cdot,t)]$, and injectivity of $\mathcal{T}$ on $\{u:\|u\|_{L^2}\le\sqrt{s}\}$ gives $\tilde u=u$.

\emph{Estimate.}
The bound of Theorem \ref{thm:closed_loop} is obtained as in the proof of that theorem, with \eqref{eq:w0_bound_wp} controlling $\|w_0\|_{L^2}$ in terms of $\|u_0\|_{L^2}$ and the Lipschitz inverse $\mathcal{T}^{-1}$ propagating the decay of $w$ back to $u$.

\emph{Identification as a mild solution of the plant.}
By construction, $w$ satisfies the target equation \eqref{eq:target}, and $u=\mathcal{T}^{-1}(w)$. The kernel identity \eqref{eq:matching} was derived precisely so that any $u$ solving \eqref{eq:plant} with boundary feedback \eqref{eq:control} yields $w=\mathcal{T}[u]$ satisfying $w_t=w_x$ with $w(1,t)=0$; conversely, inverting this construction, the $u$ defined by \eqref{eq:Au_eq_w} satisfies \eqref{eq:plant}--\eqref{eq:boundary} in the mild sense, with the feedback \eqref{eq:control} enforced through the boundary condition $w(1,t)=0$. The construction defines a nonlinear semiflow on a neighborhood of the origin in $L^2(0,1)$.
\end{proof}

\section{Analogy with Finite-Dimensional Diffeomorphisms:\\ Fr\'echet Differentiable and Differentiably Invertible\\ Volterra Series Backstepping Transformations}\label{sec:diffeo}

The essence of feedback linearization for ODEs $\dot x = f(x)+g(x)u$ is the diffeomorphic transformation $T(x)$ of the system's state, followed by cancellation (or domination) of the matched nonlinearities that remain after the transformation. The same principle applies in the feedback linearization of PDEs. This section goes beyond the exploration of invertibility of the bacstepping transformation $I-K$, already explored in Section \ref{sec:inverse}, and examines the differentiability, in the Fr\'echet sense, of both the direct and inverse backstepping transformations. 

For the reader's convenience, the transformation $\mathcal{T}$ defined in \eqref{eq:Aop}, is reiterated here,
\begin{equation}\label{eq:Aop+}
\mathcal{T}[u] := u - K[u]. 
\end{equation}
While the mapping \eqref{eq:Aop+} is not pointwise in the spatial variable, it exhibits a comparable structure in the sense of a local Banach-space diffeomorphism on $L^2(0,1)$.

We recall that a map $\Phi:U\to L^2(0,1)$ defined on an open set $U\subseteq L^2(0,1)$ is \emph{Fr\'echet differentiable} at $u\in U$ if there exists a bounded linear operator $D\Phi(u)\in \mathcal{L}(L^2(0,1))$ such that
\begin{equation}\label{eq:frechet_def}
\lim_{\|h\|_{L^2}\to 0}\frac{\|\Phi(u+h)-\Phi(u)-D\Phi(u)\,h\|_{L^2}}{\|h\|_{L^2}} \;=\; 0,
\end{equation}
and it is of class Fr\'echet $C^1$ on $U$ if $D\Phi(u)$ exists at every $u\in U$ and the map $u\mapsto D\Phi(u)$ is continuous from $U$ into $\mathcal{L}(L^2(0,1))$. Fr\'echet differentiability is the natural extension of ordinary differentiability to maps between Banach spaces, reducing in finite dimensions to the usual Jacobian.

\begin{proposition}[Local $C^1$-diffeomorphism property of $\mathcal{T}$]\label{prop:diffeo}
Let Assumption \ref{ass:weighted_growth} hold. Let $s\in(0,\rho_K)$ be such that $\ell(s)<1$, and let $\rho_L>0$ be as in Theorem \ref{thm:inverse_local}. Consider the mapping $\mathcal{T}$ defined in \eqref{eq:Aop} on the open ball
\begin{equation}\label{eq:open_ball}
\mathcal{B}_{\sqrt{s}}^{\,\circ} := \bigl\{u \in L^2(0,1) : \|u\|_{L^2} < \sqrt{s}\,\bigr\}.
\end{equation}
Then:
\begin{enumerate}[label=(\roman*)]
\item The mapping $\mathcal{T} : \mathcal{B}_{\sqrt{s}}^{\,\circ} \to L^2(0,1)$ is Fr\'echet $C^1$.
\item The inverse mapping $\mathcal{T}^{-1}$, shown to exist on $\{w\in L^2(0,1):\|w\|_{L^2}<\sqrt{\rho_L}\}$ by Theorem \ref{thm:inverse_local}, is Fr\'echet $C^1$ on that set, and its derivative satisfies
\begin{equation}\label{eq:DAinv}
D\mathcal{T}^{-1}(w) = \bigl(D\mathcal{T}[u]\bigr)^{-1},
\qquad u = \mathcal{T}^{-1}(w).
\end{equation}
\end{enumerate}
\end{proposition}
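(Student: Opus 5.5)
The approach is to treat $K$ as a norm-convergent power series of bounded homogeneous polynomial operators on $L^2(0,1)$ and differentiate it term by term. Then part (ii) will follow from the inverse function theorem in Banach spaces, combined with the uniqueness already provided by Theorem \ref{thm:inverse_local}.

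\emph{Step 1 (multilinear structure).} For each $n\ge 2$, define the symmetric-in-structure $n$-linear operator
\[
\mathcal{K}_n(h_1,\dots,h_n)(x)=\int_{T_n(x)}k_n(x,\xi_1,\dots,\xi_n)\prod_{i=1}^n h_i(\xi_i)\,d\xi_n\cdots d\xi_1 ,
\]
so that $K_n[u]=\mathcal{K}_n(u,\dots,u)$. The Cauchy--Schwarz argument used in Lemma \ref{lem:Lip_K} gives
\[
\|\mathcal{K}_n(h_1,\dots,h_n)\|_{L^2}\le \frac{\|k_n\|_{L^2(T_n(1))}}{\sqrt{(n-1)!}}\prod_i\|h_i\|_{L^2}
\]
(up to the same harmless combinatorial factors). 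The candidate derivative is
\[
DK_n(u)h=\sum_{j=1}^n\mathcal{K}_n(u,\dots,h,\dots,u),
\]
with $h$ in slot $j$. By the estimate \eqref{eq:Kn_diff_bound}, its operator norm is at most $n\|k_n\|_{L^2}s^{(n-1)/2}/\sqrt{(n-1)!}$.

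\emph{Step 2 (Fréchet differentiability of the series).} Expanding $K_n[u+h]-K_n[u]-DK_n(u)h$ by multilinearity gives a sum of terms with at least two slots equal to $h$. Its norm is bounded by
\[
\frac{\|k_n\|}{\sqrt{(n-1)!}}\sum_{r\ge2}\binom nr \|u\|^{n-r}\|h\|^r\le \frac{\|k_n\|}{\sqrt{(n-1)!}}\,n^2\|h\|^2(\|u\|+\|h\|)^{n-2}.
\]
For $\|u\|<\sqrt s$, choose $s'\in(\|u\|^2,s)$ and restrict to $\|h\|\le\sqrt{s'}-\|u\|$. Summing over $n$ then gives a remainder $\le C(s')\|h\|^2$. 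Here $C(s')$ is finite because the coefficients $n^2\|k_n\|/\sqrt{(n-1)!}$, raised to power $s'^{(n-2)/2}$, are summable. This is the one genuinely analytic point, and I expect it to be the main obstacle. I would handle it with Cauchy--Schwarz against $\sum 1/n^2$, exactly as in \eqref{eq:K_diff_sq_split}, reducing it to a power series with polynomially modified coefficients of $\ell$, which has the same radius $\rho_K>s$ by Theorem \ref{thm:direct_sup}. Hence $DK(u)=\sum_n DK_n(u)$ converges in $\mathcal{L}(L^2)$ and is the Fréchet derivative. Continuity of $u\mapsto DK(u)$ follows the same way: $DK_n(u)-DK_n(v)$ is bounded by $n^2\|k_n\|s'^{(n-2)/2}\|u-v\|/\sqrt{(n-1)!}$, and these terms are summable. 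Since $D\mathcal{T}=I-DK$, part (i) follows.

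\emph{Step 3 (inverse).} For $\|u\|<\sqrt s$, note that $\|DK(u)\|\le\sqrt{\ell(s)}<1$. This holds either directly from the series bound, or because Lemma \ref{lem:Lip_K} bounds difference quotients, so $\|DK(u)h\|=\lim_{\varepsilon\to0}\|K[u+\varepsilon h]-K[u]\|/\varepsilon\le\sqrt{\ell(s)}\|h\|$. A Neumann series therefore shows that $D\mathcal{T}[u]=I-DK(u)$ is invertible, with $\|(D\mathcal{T}[u])^{-1}\|\le(1-\sqrt{\ell(s)})^{-1}$. Now take $w$ with $\|w\|<\sqrt{\rho_L}$ and $u=\mathcal{T}^{-1}(w)$. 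We need $u$ in the open ball: by \eqref{eq:rho_choice}, $\|u\|^2\le 2\rho_L+2k(s)$, and the strict inequality $\|w\|<\sqrt{\rho_L}$ gives $\|u\|<\sqrt s$.

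The Banach-space inverse function theorem then yields a $C^1$ local inverse near $w$ whose derivative is $(D\mathcal{T}[u])^{-1}$. By uniqueness of fixed points in $\mathcal{B}_s$ (Theorem \ref{thm:inverse_local}) together with the continuity of $\mathcal{T}^{-1}$ (it is Lipschitz, by \eqref{eq:A_lower}), this local inverse coincides with $\mathcal{T}^{-1}$ near $w$. This gives \eqref{eq:DAinv}. Finally, continuity of $w\mapsto D\mathcal{T}^{-1}(w)$ follows from composing three continuous maps: $\mathcal{T}^{-1}$, $u\mapsto D\mathcal{T}[u]$, and operator inversion on invertible operators.
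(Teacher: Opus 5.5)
Your architecture is the paper's (termwise differentiation of the Volterra series, a Neumann series for $D\mathcal{T}[u]$, the Banach-space inverse function theorem). However, the estimate that carries your Step 2 rests on a false inequality. The difficulty is also not where you placed it: the summation over $n$ is routine, but the per-$n$ bound is not.

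The inequality $\|\mathcal{K}_n(h_1,\dots,h_n)\|\le \|k_n\|\prod_i\|h_i\|/\sqrt{(n-1)!}$ for your non-symmetric simplex form cannot come from Cauchy--Schwarz when the slots carry different functions. The factorial gain comes from $\int_{T_n(x)}\prod_i u(\xi_i)^2\,d\xi=\|u\|_{L^2(0,x)}^{2n}/n!$, which needs all slots to carry the same function. Concretely, take $I_i=[\tfrac{n-i}{2n},\tfrac{n-i+1}{2n}]$, $h_i=\mathbf{1}_{I_i}$ and $k_n(x,\xi)=\prod_i\mathbf{1}_{I_i}(\xi_i)$. The simplex ordering is automatic on $I_1\times\cdots\times I_n$, so $\mathcal{K}_n(h_1,\dots,h_n)(x)=\prod_i|I_i|$ for $x\ge\tfrac12$. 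Hence $\|\mathcal{K}_n(h_1,\dots,h_n)\|\ge 2^{-1/2}\|k_n\|\prod_i\|h_i\|$, which violates your bound by a factor of order $\sqrt{(n-1)!}$.

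This factor is not a ``harmless combinatorial factor''. Theorem \ref{thm:direct_sup} only gives $\|k_n\|^2\le n!\,D_K^2C_K^{2(n-1)}e^{2\Upsilon_K}$, so the $1/\sqrt{(n-1)!}$ is exactly what makes $\sum_n n^2\|k_n\|\,s'^{(n-2)/2}/\sqrt{(n-1)!}$ finite. The factorial gain belongs to the symmetric combination of all placements of $u$ and $h$, not to the individual mixed terms $\mathcal{K}_n(u,\dots,h,\dots,h,\dots,u)$ that you bound one at a time. The same objection applies to your Lipschitz estimate for $DK_n$.

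The conclusion of Step 2 is nevertheless true, and the repair stays inside your framework. Replace $\mathcal{K}_n$ by its symmetrization $\mathcal{K}_n^{\mathrm{sym}}(h_1,\dots,h_n)=\frac{1}{n!}\sum_{\pi}\mathcal{K}_n(h_{\pi(1)},\dots,h_{\pi(n)})$, which has the same diagonal $K_n[u]$. Its integrand is a symmetric function of $(\xi_1,\dots,\xi_n)$, so its $L^2(T_n(x))$ norm equals $1/\sqrt{n!}$ times its $L^2([0,x]^n)$ norm, and the latter is at most $\prod_i\|h_i\|$ by the tensor-product structure. Hence $\|\mathcal{K}_n^{\mathrm{sym}}(h_1,\dots,h_n)\|\le\|k_n\|\prod_i\|h_i\|/\sqrt{n!}$. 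With $DK_n(u)h=n\,\mathcal{K}_n^{\mathrm{sym}}(u,\dots,u,h)$ and the remainder written as $\sum_{r\ge2}\binom nr\mathcal{K}_n^{\mathrm{sym}}(u,\dots,u,h,\dots,h)$, your Step 2 estimates then go through as written.

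Alternatively, follow the paper, which needs no second-order estimate at all. Each $DK_n$ is continuous (the crude bound $\|k_n\|\prod_i\|h_i\|$ suffices for that), and \eqref{eq:DKn_bound} gives uniform operator-norm convergence of $\sum_n DK_n$ on the ball. The standard theorem on termwise differentiation of series with uniformly convergent derivatives on a convex open set then yields (i). Your repaired route buys more: a quadratic remainder and a locally Lipschitz $DK$.

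Your Step 3 is sound and more careful than the paper's Step 5. The check $\|u\|^2\le 2\|w\|^2+2k(s)<2\rho_L+2k(s)\le s$ places $u=\mathcal{T}^{-1}(w)$ in the open ball where $\mathcal{T}$ is $C^1$, which the paper leaves implicit. So is your identification of the inverse-function-theorem local inverse with the contraction-defined $\mathcal{T}^{-1}$ through uniqueness in $\mathcal{B}_s$. Obtaining $\|DK(u)\|\le\sqrt{\ell(s)}$ from difference quotients and Lemma \ref{lem:Lip_K} is a neat shortcut.
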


\begin{proof}
\emph{Step 1: Fr\'echet differentiability of each $K_n$.}
For each $n\ge 2$, the $n$th Volterra term
\begin{equation}
K_n[u](x) \;=\; \int_{T_n(x)} k_n(x,\xi_1,\dots,\xi_n)\prod_{i=1}^{n}u(\xi_i)\,d\xi_n\cdots d\xi_1
\end{equation}
is a continuous symmetric $n$-linear form evaluated on the diagonal, hence Fr\'echet smooth as a map $L^2(0,1)\to L^2(0,1)$, and a direct computation gives, for $h\in L^2(0,1)$,
\begin{equation}\label{eq:DKn}
(DK_n[u]\,h)(x) \;=\; \int_{T_n(x)} k_n(x,\xi_1,\dots,\xi_n)\sum_{j=1}^{n}h(\xi_j)\prod_{i\neq j}u(\xi_i)\,d\xi_n\cdots d\xi_1.
\end{equation}
The same Cauchy--Schwarz argument used to prove Lemma \ref{lem:Lip_K} (see the derivation of \eqref{eq:Kn_diff_bound}) yields
\begin{equation}\label{eq:DKn_bound}
\|DK_n[u]\|_{\mathcal{L}(L^2)}^{\,2}
\;\le\;
\|k_n\|_{L^2(T_n(1))}^{\,2}\cdot \frac{n^{2}\,s^{\,n-1}}{(n-1)!},
\qquad \|u\|_{L^2}^{\,2}\le s.
\end{equation}

\emph{Step 2: Uniform convergence of $\sum_n DK_n$ on $\mathcal{B}_{\sqrt{s}}^{\,\circ}$.}
Summing \eqref{eq:DKn_bound} and applying the elementary bound $(\sum a_n)^2 \le 2\sum n^2 a_n^2$ as in Lemma \ref{lem:Lip_K} gives
\begin{equation}\label{eq:DK_total_bound}
\sup_{\|u\|_{L^2}<\sqrt{s}} \|DK[u]\|_{\mathcal{L}(L^2)}^{\,2}
\;\le\;
2\sum_{n\ge 2}\frac{n^{4}\,\|k_n\|_{L^2(T_n(1))}^{\,2}}{(n-1)!}\,s^{\,n-1}
\;=\;
\ell(s),
\end{equation}
by the definition \eqref{eq:ell_def} of $\ell$. Since $\ell(s)<\infty$ by Theorem \ref{thm:direct_sup} (the radius of convergence of $\ell$ is at least $\rho_K>s$), the series $\sum_{n\ge 2}DK_n[u]$ converges absolutely and uniformly on $\mathcal{B}_{\sqrt{s}}^{\,\circ}$ in the operator norm. Moreover, termwise continuity of $u\mapsto DK_n[u]$ (a polynomial map of degree $n-1$) together with this uniform convergence implies that $u\mapsto DK[u]$ is continuous on $\mathcal{B}_{\sqrt{s}}^{\,\circ}$.

\emph{Step 3: $\mathcal{T}$ is Fr\'echet $C^1$.}
Uniform convergence of the derivative series on the open ball $\mathcal{B}_{\sqrt{s}}^{\,\circ}$, together with pointwise convergence of $\sum_n K_n[u]$ (guaranteed by Theorem \ref{thm:direct_sup} since $s<\rho_K$), implies that $K$ is Fr\'echet differentiable with
\begin{equation}\label{eq:DK}
DK[u] \;=\; \sum_{n\ge 2} DK_n[u]
\end{equation}
and that $DK:\mathcal{B}_{\sqrt{s}}^{\,\circ}\to \mathcal{L}(L^2(0,1))$ is continuous. Hence $\mathcal{T}=I-K$ is Fr\'echet $C^1$ with
\begin{equation}\label{eq:DA}
D\mathcal{T}[u] \;=\; I - DK[u],
\end{equation}
establishing (i).

\emph{Step 4: Invertibility of $D\mathcal{T}[u]$.}
By \eqref{eq:DK_total_bound}, $\|DK[u]\|\le \sqrt{\ell(s)}<1$ on $\mathcal{B}_{\sqrt{s}}^{\,\circ}$. Hence $D\mathcal{T}[u]=I-DK[u]$ is invertible by the Neumann series
\begin{equation}
(I-DK[u])^{-1} \;=\; \sum_{k=0}^{\infty}(DK[u])^{k},
\end{equation}
with
\begin{equation}\label{eq:DA_inv}
\bigl\|D\mathcal{T}[u]^{-1}\bigr\|
\;\le\; \sum_{k=0}^{\infty}\sqrt{\ell(s)}^{\,k}
\;=\; \frac{1}{1-\sqrt{\ell(s)}}.
\end{equation}

\emph{Step 5: $C^1$-diffeomorphism property.}
By Steps 3 and 4, $\mathcal{T}$ is Fr\'echet $C^1$ on $\mathcal{B}_{\sqrt{s}}^{\,\circ}$ with everywhere invertible derivative. Combined with the global bi-Lipschitz property \eqref{eq:A_upper}--\eqref{eq:A_lower} established in the proof of Theorem \ref{thm:wellposed}, the inverse function theorem in Banach spaces applies on $\mathcal{B}_{\sqrt{s}}^{\,\circ}$: the inverse $\mathcal{T}^{-1}$ constructed in Theorem \ref{thm:inverse_local} is Fr\'echet $C^1$ on $\{w:\|w\|_{L^2}<\sqrt{\rho_L}\}$. Differentiating $\mathcal{T}^{-1}\circ\mathcal{T}=\mathrm{id}$ and applying the chain rule,
\begin{equation}
D\mathcal{T}^{-1}(\mathcal{T}[u])\,D\mathcal{T}[u] \;=\; I,
\end{equation}
which gives \eqref{eq:DAinv} upon setting $w=\mathcal{T}[u]$. This proves (ii).
\end{proof}

\begin{remark}\label{rem:DK_formula}
The derivative $DK[u]$ in \eqref{eq:DK} admits the explicit Volterra-series representation
\begin{equation}\label{eq:DK_explicit}
(DK[u]\,h)(x)
=
\sum_{n=2}^{\infty}
\int_{T_n(x)} k_n(x,\xi_1,\dots,\xi_n)
\sum_{j=1}^{n}
h(\xi_j)\prod_{i\neq j} u(\xi_i)\,d\xi_n\cdots d\xi_1,
\qquad h\in L^2(0,1),
\end{equation}
which is the natural infinite-dimensional analog of the Jacobian and can be used to compute $D\mathcal{T}[u] = I - DK[u]$ order by order. The operator-norm bound \eqref{eq:DA_inv} is the quantitative companion to this representation.
\end{remark}

\section{A Nonlinear PDAE Example}\label{sec:example}

We illustrate the theory on a simple partial differential-algebraic equation (PDAE) in which the nonlinearity $F[u]$ is obtained as a closed-form functional of $u$ through an auxiliary ordinary differential equation in the spatial variable. Consider the plant
\begin{subequations}\label{eq:pdae_example}
\begin{align}
u_t(x,t) &= u_x(x,t) + \tfrac12\, v(x,t)^2, && x\in[0,1),\ t\ge 0, \label{eq:pdae_u_example}\\
v_x(x,t) &= u(x,t), && v(0,t)=0, \label{eq:pdae_v_example}\\
u(1,t) &= U(t), && u(x,0)=u_0(x)\in L^2(0,1). \label{eq:pdae_bc_example}
\end{align}
\end{subequations}
The algebraic constraint \eqref{eq:pdae_v_example} admits the explicit solution $v(x,t) = \int_0^x u(\xi,t)\,d\xi$, so that the nonlinearity in \eqref{eq:pdae_u_example} is given in closed form by
\begin{equation}\label{eq:F_closed_example}
F[u](x,t) \;=\; \tfrac12\, v(x,t)^2 \;=\; \tfrac12\!\left(\int_0^x u(\xi,t)\,d\xi\right)^{\!2}.
\end{equation}
The identity
\begin{equation}\label{eq:square_identity_example}
\tfrac12\!\left(\int_0^x u(\xi,t)\,d\xi\right)^{\!2}
\;=\; \int_0^x\!\!\int_0^{\xi_1}u(\xi_1,t)\,u(\xi_2,t)\,d\xi_2\,d\xi_1
\end{equation}
recasts \eqref{eq:F_closed_example} into the Volterra form \eqref{eq:VolterraF}, with kernels
\begin{equation}\label{eq:fn_example}
f_2 \equiv 1 \ \text{on}\ T_2(x),\qquad f_n \equiv 0 \ \text{for}\ n\ge 3.
\end{equation}
The nonlinearity is thus a polynomial of degree two in $u$, and Assumption \ref{ass:weighted_growth} is satisfied with $D_f = 1$ and $\rho_f$ arbitrary.

\begin{proposition}
[Backstepping kernels at orders two and three]\label{fact:k2_k3_example}
For the plant \eqref{eq:pdae_example}, the first two nontrivial kernels of the Volterra backstepping transformation \eqref{eq:K_series} are
\begin{align}
k_2(x,\xi_1,\xi_2) &= -\xi_2, \label{eq:k2_example}\\
k_3(x,\xi_1,\xi_2,\xi_3) &= -\xi_3\!\left[(x-\xi_1)(\xi_1+\xi_2) + \tfrac12(\xi_1^2-\xi_2^2) - \tfrac12\xi_3(x-\xi_2)\right]. \label{eq:k3_example}
\end{align}
\end{proposition}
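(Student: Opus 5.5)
The plan is to compute the kernels directly from the characteristic representation \eqref{eq:kernel_integral}, order by order, using $f_2\equiv 1$ and $f_n\equiv 0$ for $n\ge 3$ from \eqref{eq:fn_example}. For $n=2$ the sum over $m$ is empty, so
\[
k_2(x,\xi_1,\xi_2)=-\int_0^{\xi_2}1\,ds=-\xi_2 .
\]
This is \eqref{eq:k2_example}.

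For $n=3$ we have $f_3\equiv 0$, and the only surviving coupling is $m=2$. Hence $k_3$ solves
\[
\partial_x k_3+\sum_{i=1}^3\partial_{\xi_i}k_3=B_3^2[k_2,f_2],
\]
with $k_3(x,\xi_1,\xi_2,0)=0$.

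I will verify the candidate \eqref{eq:k3_example} rather than integrate forward. The solution of the transport problem with this inflow condition is unique along characteristics: \eqref{eq:kernel_integral} gives it explicitly. So it suffices to check the following two facts.
\begin{enumerate}[label=(\alph*)]
\item The inflow condition holds. This is immediate, since \eqref{eq:k3_example} carries the factor $\xi_3$.
\item The transport equation holds, with the operator $\mathcal{L}:=\partial_x+\sum_i\partial_{\xi_i}$ applied to the candidate.
\end{enumerate}

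For (b), write $k_3=-\xi_3P$, where
\[
P=(x-\xi_1)(\xi_1+\xi_2)+\tfrac12(\xi_1^2-\xi_2^2)-\tfrac12\xi_3(x-\xi_2).
\]
Then $\mathcal{L}k_3=-P-\xi_3\,\mathcal{L}P$. Using $\mathcal{L}(x-\xi_1)=0$, $\mathcal{L}(\xi_1+\xi_2)=2$, $\mathcal{L}\tfrac12(\xi_1^2-\xi_2^2)=\xi_1-\xi_2$ and $\mathcal{L}[\xi_3(x-\xi_2)]=x-\xi_2$, one obtains an explicit quadratic polynomial in $(x,\xi_1,\xi_2,\xi_3)$.

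The other side of the check is the forcing term. From \eqref{eq:Cnm_def} with $n=3$, $m=2$, we have $n-m+1=2$, so there are two contributions, $j=1$ and $j=2$.
\begin{itemize}
\item For $j=1$, the integral runs over $s\in[\xi_1,x]$. The factor $k_2$ has $s$ in its first slot, and $f_2$ has $s$ as its base point. $D_1^{2,2}$ then sums over the ways of distributing the remaining variables $\{\xi_1,\xi_2,\xi_3\}$, in order, between the tail of $k_2$ and the arguments of $f_2$.
\item For $j=2$, the integral runs over $s\in[\xi_2,\xi_1]$, with $k_2(x,\xi_1,s)$, and $D_2^{2,2}$ contributes a single arrangement.
\end{itemize}
Since $f_2\equiv 1$ and $k_2=-(\text{last argument})$, every integrand is linear. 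Each integral therefore produces terms such as $-\xi_3(x-\xi_1)$, $-\xi_2(x-\xi_1)$ and $-\tfrac12(\xi_1^2-\xi_2^2)$.

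I will sum these contributions and compare them with $-P-\xi_3\mathcal{L}P$. If any mismatch appears, I will instead plug $B_3^2$ into \eqref{eq:kernel_integral} along the shifted characteristic $(x-\xi_3+s,\xi_1-\xi_3+s,\xi_2-\xi_3+s,s)$ and integrate the resulting polynomial in $s$ over $[0,\xi_3]$. This produces $k_3$ constructively and should reproduce \eqref{eq:k3_example}.

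The main obstacle is bookkeeping. I must correctly unpack the permutation-summation operator $D_j^{\,2,2}$ of \eqref{eq:D_def}: which ordered selections of $\{\xi_{j+1},\dots\}$ go to the tail of $k_2$ and which to $f_2$, and how the convention $\xi_0=x$ sets the integration limits. Once $B_3^2$ is written out as an explicit polynomial, the remaining steps are elementary polynomial differentiation or integration.
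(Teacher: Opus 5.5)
Your plan is essentially the paper's proof. You get $k_2=-\xi_2$ from \eqref{eq:kernel_integral}, then verify that the candidate $k_3$ vanishes at $\xi_3=0$ and satisfies $\mathcal{L}k_3=B^2_3[k_2,f_2]$, with uniqueness along characteristics closing the argument; the check does go through, since $-P-\xi_3\,\mathcal{L}P=-(x-\xi_1)(\xi_1+\xi_2+\xi_3)-\tfrac12(\xi_1^2-\xi_2^2)$, which is exactly \eqref{eq:B23_example}. On the bookkeeping you flag, trust your concrete $j=1$ description, where all three of $\xi_1,\xi_2,\xi_3$ are distributed between the tail of $k_2$ and $f_2$ (giving $-(x-\xi_1)(\xi_1+\xi_2+\xi_3)$, plus $-\tfrac12(\xi_1^2-\xi_2^2)$ from $j=2$), rather than your last paragraph's shuffle of only $\xi_{j+1},\dots,\xi_n$, which a literal reading of \eqref{eq:D_def} also suggests and which would not reproduce \eqref{eq:B23_example}.
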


\begin{proof}
The kernel $k_2$ is obtained from the characteristic representation \eqref{eq:kernel_integral} with $n=2$ and $f_2\equiv 1$:
\begin{equation}
k_2(x,\xi_1,\xi_2) \;=\; -\int_0^{\xi_2}f_2(x-\xi_2+s,\xi_1-\xi_2+s,s)\,ds \;=\; -\xi_2,
\end{equation}
which is \eqref{eq:k2_example}.

Since $f_3\equiv 0$, the kernel equation \eqref{eq:kernel_pde} at $n=3$ with inflow condition \eqref{eq:kernel_bc} reduces to
\begin{equation}\label{eq:k3_pde_example}
\partial_x k_3 + \sum_{i=1}^{3}\partial_{\xi_i}k_3 \;=\; B^2_3[k_2,f_2](x,\xi_1,\xi_2,\xi_3),
\qquad k_3(x,\xi_1,\xi_2,0)=0.
\end{equation}
Substituting the already-computed kernel $k_2(x,\xi_1,\xi_2)=-\xi_2$ from \eqref{eq:k2_example} and $f_2\equiv 1$ from \eqref{eq:fn_example} into the definition \eqref{eq:Cnm_def} of $B^2_3[k_2,f_2]$ and carrying out the two inner integrations and the $D^{2,2}_j$ permutation sums yields
\begin{equation}\label{eq:B23_example}
B^2_3[k_2,f_2](x,\xi_1,\xi_2,\xi_3) \;=\; -(x-\xi_1)(\xi_1+\xi_2+\xi_3) - \tfrac12(\xi_1^2-\xi_2^2).
\end{equation}
The inflow condition in \eqref{eq:k3_pde_example} holds for \eqref{eq:k3_example} by the $\xi_3$-prefactor. A direct computation of the four partial derivatives of \eqref{eq:k3_example} and summation yields
\begin{equation}
\partial_x k_3 + \sum_{i=1}^{3}\partial_{\xi_i}k_3 \;=\; -(x-\xi_1)(\xi_1+\xi_2+\xi_3) - \tfrac12(\xi_1^2-\xi_2^2),
\end{equation}
which coincides with \eqref{eq:B23_example}.
\end{proof}

The kernels \eqref{eq:k2_example}--\eqref{eq:k3_example} determine the first two nontrivial terms of the feedback law \eqref{eq:control}:
\begin{align}
U(t) \;=\;
&-\int_0^1\!\!\int_0^{\xi_1}\xi_2\,u(\xi_1,t)u(\xi_2,t)\,d\xi_2\,d\xi_1 \nonumber\\
&-\int_{T_3(1)}\xi_3\!\left[(1-\xi_1)(\xi_1+\xi_2) + \tfrac12(\xi_1^2-\xi_2^2) - \tfrac12\xi_3(1-\xi_2)\right]u(\xi_1,t)u(\xi_2,t)u(\xi_3,t)\,d\xi_3 d\xi_2 d\xi_1 \nonumber\\
&+ \sum_{n\ge 4}\int_{T_n(1)} k_n(1,\xi_1,\ldots,\xi_n)\prod_{i=1}^n u(\xi_i,t)\,d\xi_n\cdots d\xi_1.\label{eq:U_example}
\end{align}
Although $F[u]$ is a single quadratic Volterra term, the feedback-linearizing transformation is an infinite Volterra series: the higher-order kernels $k_n$ for $n\ge 4$ are nonzero and are generated recursively through \eqref{eq:kernel_integral} by the coupling $B^2_n[k_{n-1},f_2]$, with $f_2\equiv 1$ the sole nontrivial plant kernel. Convergence of the series \eqref{eq:U_example} on $L^2(0,1)$ is guaranteed by Theorem \ref{thm:direct_sup} and Corollary \ref{cor:explicit_radius}, with $\rho_K$ arbitrarily large by virtue of \eqref{eq:fn_example}.

\begin{figure}[t]
  \centering
  \includegraphics[width=0.37\linewidth]{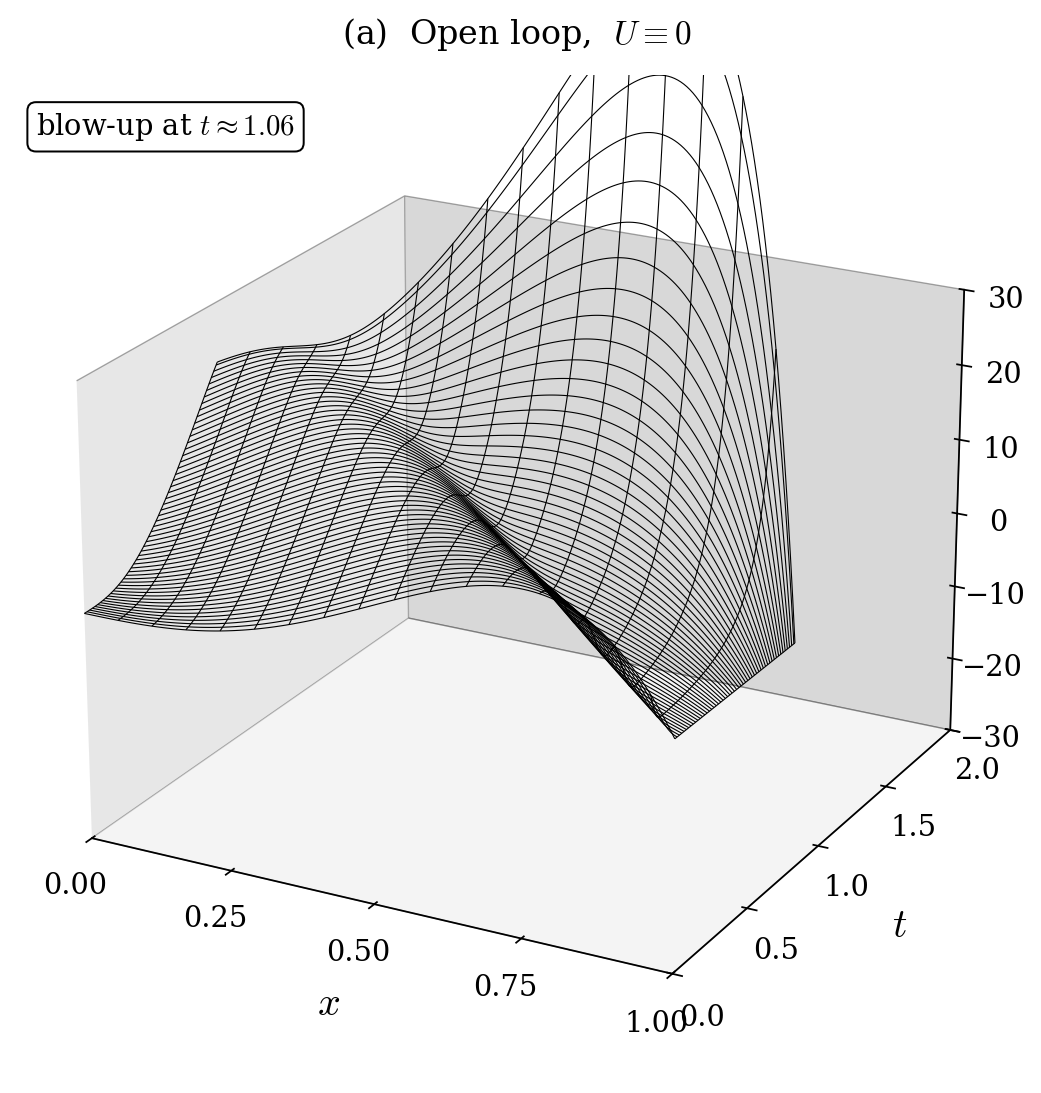}\\[2pt]
  \includegraphics[width=0.37\linewidth]{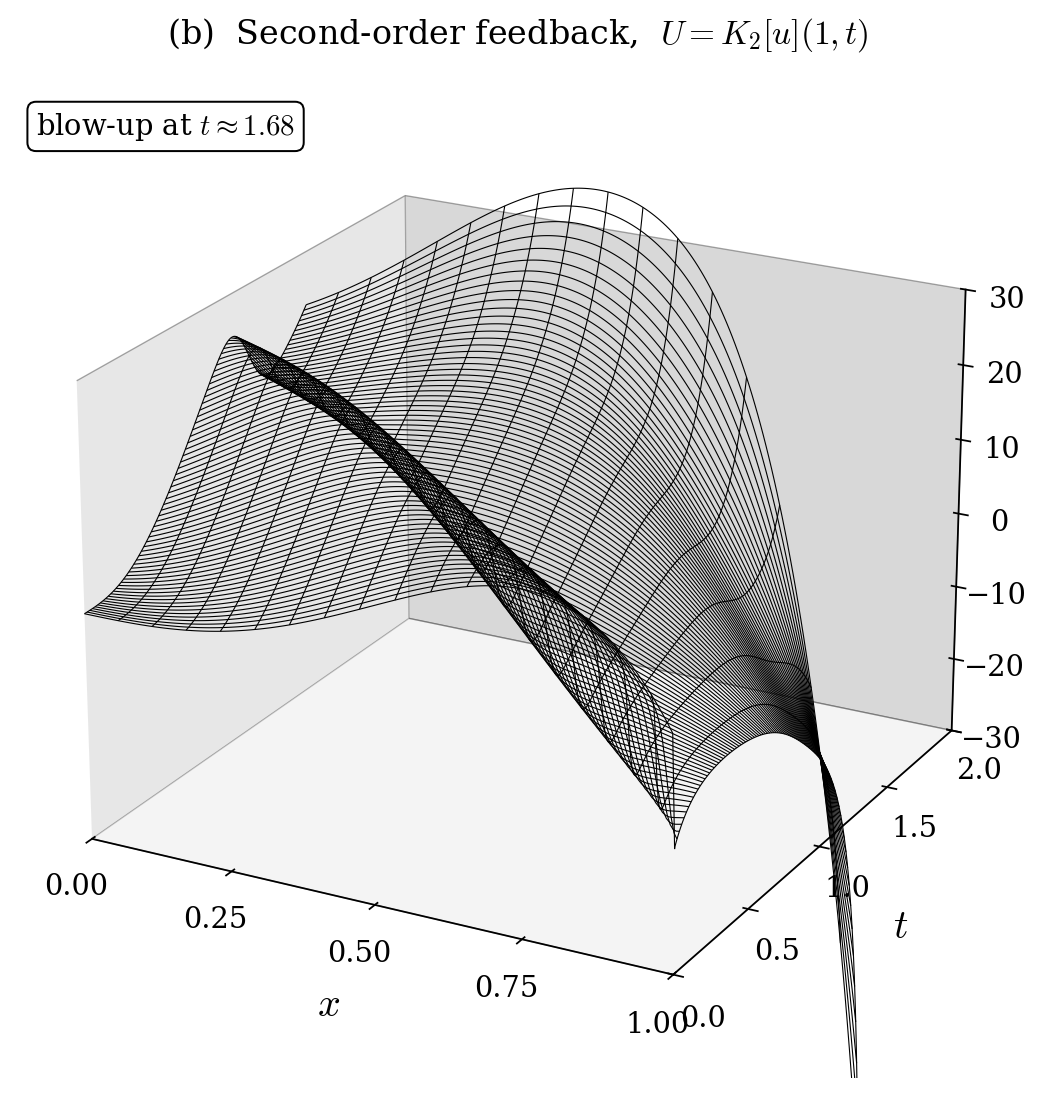}
  \includegraphics[width=0.37\linewidth]{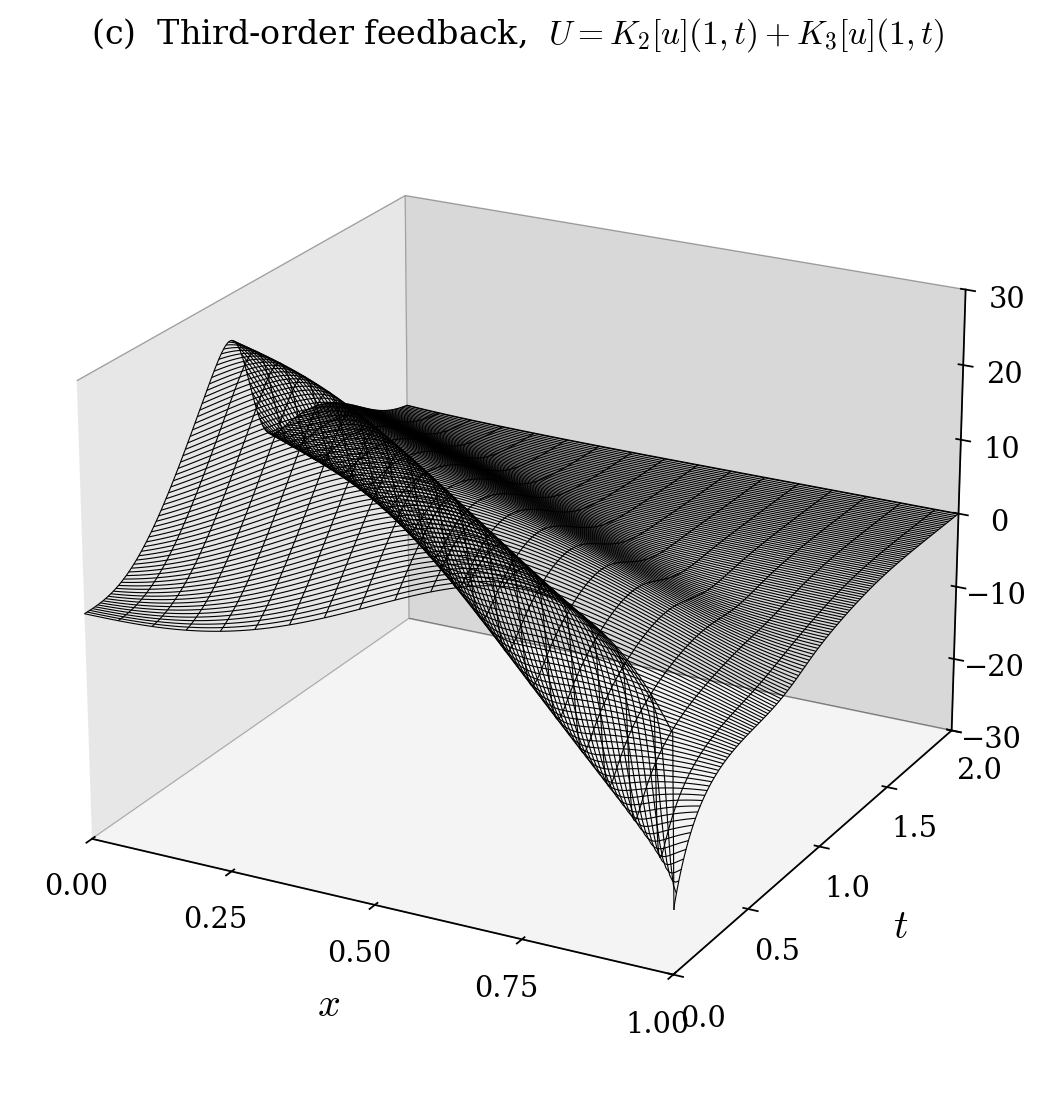}
  \caption{Surface plots of the solution $u(x,t)$ of the plant \eqref{eq:pdae_example}
    with initial condition \eqref{eq:u0_simulation} under three control laws.
    (a) Open loop, $U\equiv 0$: the nonlinearity \eqref{eq:F_closed_example} drives
    the solution to blow up at $t\approx 1.06$.
    (b) Second-order feedback $U=K_2[u](1,t)$ with $k_2$ given by \eqref{eq:k2_example}:
    blow-up is deferred but not prevented, occurring at $t\approx 1.68$.
    (c) Third-order feedback $U=K_2[u](1,t)+K_3[u](1,t)$ with $k_3$ given by
    \eqref{eq:k3_example}: the closed-loop system is stabilized and decays
    toward the origin.}
  \label{fig:simulation}
\end{figure}

We illustrate Proposition \ref{fact:k2_k3_example} numerically by simulating the plant \eqref{eq:pdae_example} with the initial condition
\begin{equation}\label{eq:u0_simulation}
u_0(x) = 140\,x^3(1-x),
\end{equation}
which is positive on $(0,1)$, vanishes at both endpoints, and is concentrated toward $x=1$ (attaining its maximum at $x=3/4$). The spatial domain is discretized on a uniform grid of $201$ points, the transport term $u_x$ is evaluated by upwind finite differences, and the time stepping is carried out by Heun's method with CFL $0.5$. At each step, the constraint \eqref{eq:pdae_v_example} is integrated by the trapezoidal rule to obtain $v(x,t)$, and the feedback law \eqref{eq:control} is evaluated at $x=1$ by quadrature of the kernels \eqref{eq:k2_example} and \eqref{eq:k3_example} on the simplices $T_2(1)$ and $T_3(1)$. Three controllers are compared on the same initial condition: the open-loop case $U\equiv 0$, the second-order truncation $U(t) = K_2[u](1,t)$, and the third-order truncation $U(t) = K_2[u](1,t) + K_3[u](1,t)$.

The results are shown in Fig.~\ref{fig:simulation}. In panel~(a), the open-loop evolution develops a singularity at $t\approx 1.06$: the quadratic self-amplification $\tfrac12 v^2$ in \eqref{eq:pdae_u_example} overwhelms the stabilizing outflow through $x=0$, and the solution escapes every bounded set in finite time. In panel~(b), the second-order feedback injects a strongly negative boundary value $u(1,t)$ that propagates leftward and partially cancels the incoming source; this defers the blow-up to $t\approx 1.68$, but the truncation at order two does not account for the cubic Volterra coupling $B^2_3[k_2,f_2]$ in \eqref{eq:B23_example}, and the residual of order three eventually reasserts the instability. In panel~(c), augmenting the feedback with the order-three term $K_3[u](1,t)$---generated by the kernel \eqref{eq:k3_example}---removes this residual at the leading cubic order, and the closed-loop solution decays to a small neighborhood of the origin: at the end of the simulation window, $\|u(\cdot,2)\|_{L^2(0,1)} \approx 0.20$, down from the peak $\max_{x,t} |u(x,t)| \approx 24$. The contrast between panels~(b) and~(c) is exactly the quantitative content of the Volterra series feedback-linearization program: each successive order of the kernel expansion removes a correspondingly higher-order nonlinear obstruction, and truncations of the feedback law converge to the full linearizing control as the order tends to infinity.

\section{PDE-Free (Quadrature-Based) Feedback Linearization for Transport-Adapted Chen-Fliess Subclass}\label{sec:pde-free}

The controller $U(t)=K[u](1,t)$ of Theorem~\ref{thm:closed_loop} is specified
through the kernels $k_n$, which are constructed in Section~\ref{sec:kernels}
from the integral equation representation \eqref{eq:kernel_integral} of the
transport PDE \eqref{eq:kernel_pde}--\eqref{eq:kernel_bc} on the simplex
$T_n(1)$. We show in this section that, for a \emph{narrower} class of plants
--- those whose Volterra kernels $f_n$ admit a polynomial gap-basis expansion
with analytic, geometrically decaying coefficients --- the very same controller
is obtained without any kernel PDEs arising: the sequence of kernels reduces to a triangular system of scalar first-order ODEs on $[0,1]$, each solvable by one quadrature. Throughout this section the convention
$\xi_0 := x$ of Section~\ref{sec:kernels} is used.

The representation of $F[u]$ used in this section is the Chen--Fliess \cite{fliess1981fonctionnelles} form of
the spatially-causal Volterra nonlinearity \eqref{eq:VolterraF}, in which each Volterra kernel is encoded by a family of space-dependent scalar
coefficients multiplying a polynomial basis built from the consecutive
differences $\xi_0-\xi_1$, $\xi_1-\xi_2$, $\ldots$, $\xi_{n-1}-\xi_n$ (with
$\xi_0:=x$) on the simplex $T_n(x)$. 

In order to avoid having to solve PDEs to generate the kernels of the backstepping operator $K$, in this paper we narrow our attention to the following particular ``transport-adapted'' subclass of Chen-Fliess series,  
\begin{equation}\label{eq:CF-series}
F[u](x,t)
\;=\;
\sum_{n=2}^{\infty}\,\sum_{p\in\N_0^n}
\int_{T_n(x)} b^{(n)}_p(x)
\prod_{j=0}^{n-1}\frac{(\xi_j-\xi_{j+1})^{p_j}}{p_j!}
\prod_{i=1}^{n} u(\xi_i,t)\,d\xi_n\cdots d\xi_1,
\qquad \xi_0:=x,
\end{equation}
in which $n$ is the order of the nonlinearity,
$p=(p_0,\ldots,p_{n-1})$ is a tuple of nonnegative integer exponents, the
space-dependent functions $b^{(n)}_p(x)$ are the scalar coefficients of the
representation, and each integrand is a product of consecutive differences
$\xi_j-\xi_{j+1}$ raised to the powers $p_j$ and of the values of $u$ at the
integration points $\xi_1,\ldots,\xi_n$. 

The feedback linearization design in this section for the class \eqref{eq:CF-series} can be generalized from  $b^{(n)}_p(x)$ to $b^{(n)}_p(x,x-\xi_n)$. However, this design generalization, detailed later in Remark \ref{rem:recursion-invariant-cf}, comes with a prohibitive extra cost in convergence analysis, so we stick with $b^{(n)}_p(x)$ in \eqref{eq:CF-series}.

The Volterra example of
Section~\ref{sec:example}, with a single Volterra kernel $f_2\equiv 1$,
happens to also admit the transport-adapted Chen-Fliess subclass representation \eqref{eq:CF-series}, with scalar coefficient
$b^{(2)}_{(0,0)}(x)\equiv 1$ and all other $b^{(n)}_p$ zero.

\begin{remark}[Fréchet vs.\ causal derivatives]\label{rem:Frechet_vs_causal}
The Fréchet differentiability established in Proposition~\ref{prop:diffeo} is the appropriate notion for treating the backstepping map $\mathcal{T}=I-K$ as a local diffeomorphism on $L^2(0,1)$ and underlies the invertibility and stability results of Sections~\ref{sec:inverse}, \ref{sec:closed-loop}, and \ref{sec-exist}; it acts on the functional dependence $u\mapsto \mathcal{T}[u]$ without regard to ordering in the spatial variable. When the nonlinear operator admits a Chen--Fliess representation, however, the Volterra simplices carry the intrinsic ordering $0\le \xi_n\le\cdots\le\xi_1\le x$, aligned with the transport direction, and the natural infinitesimal variation is the extension of the input by a terminal segment in $x$ --- Fliess's causal derivation~\cite{fliess1986derivation} --- under which the noncommutative generating series~\cite{fliess1981fonctionnelles} becomes a genuine Taylor expansion. The reduction in this section to a triangular cascade of scalar ODEs is compatible with this structure: the transport operator leaves the gap variables invariant, so the causal derivative respects the cascade. Thus, the Fréchet framework of Section~\ref{sec:diffeo} provides the correct global analytic setting, while the Chen--Fliess specialization reveals an additional algebraic--causal structure aligned with the direction of propagation.
\end{remark}

\subsection{Gap basis}\label{sec:gap-basis}

For $n\ge 1$ and $p=(p_0,\ldots,p_{n-1})\in\N_0^n$, define the polynomial
\emph{gap basis} on $T_n(x)$ by
\begin{equation}\label{eq:Phi-def}
\Phi_p(x;\xi_1,\ldots,\xi_n)
:=
\prod_{j=0}^{n-1}\frac{(\xi_j-\xi_{j+1})^{p_j}}{p_j!},
\qquad \xi_0:=x.
\end{equation}
Let
\begin{equation}\label{eq:Tn-op}
\mathsf{T}_n \;:=\; \partial_x + \sum_{i=1}^n \partial_{\xi_i}
\end{equation}
denote the transport operator on the left-hand side of \eqref{eq:kernel_pde}.
The generator $\mathsf T_n$ is the derivative along the flow that advances
each of the variables $x,\xi_1,\ldots,\xi_n$ by the same parameter $s$, and
each gap $\xi_j-\xi_{j+1}$ is invariant under this flow. Hence
\begin{equation}\label{eq:TPhi-zero}
\mathsf T_n\,\Phi_p(x;\xi) \;=\; 0
\qquad\text{for every }p\in\N_0^n.
\end{equation}
This transport-invariance is what is  behind the reduction of
the kernel PDE to scalar ODEs.

\subsection{Transport-adapted subclass}\label{sec:cf-class}

Expand each plant kernel $f_n$ in the gap basis,
\begin{equation}\label{eq:fn-gap}
f_n(x,\xi_1,\ldots,\xi_n)
\;=\;
\sum_{p\in\N_0^n} b^{(n)}_p(x)\,\Phi_p(x;\xi),
\qquad n\ge 2,
\end{equation}
and introduce

\begin{assumption}[Transport-adapted Chen-Fliess subclass]\label{ass:cf}
The kernels $f_n$ in \eqref{eq:VolterraF} are jointly real-analytic on
$T_n(1)$, and their gap-basis coefficients $b^{(n)}_p$ in \eqref{eq:fn-gap}
are real-analytic on $[0,1]$. There exist constants $D>0$, $\rho>0$, and
$\mu,\nu>0$ such that
\begin{equation}\label{eq:cf-bound}
\sup_{x\in[0,1]}\bigl|\bigl(b^{(n)}_p\bigr)^{(\sigma)}(x)\bigr|
\;\le\;
D\,\rho^{-(n-1)}\,\mu^{-|p|}\,\nu^{-\sigma},
\qquad n\ge 2,\;p\in\N_0^n,\;\sigma\in\N_0,
\end{equation}
where $|p|=p_0+\cdots+p_{n-1}$.
\end{assumption}

Assumption~\ref{ass:cf} strengthens Assumption~\ref{ass:weighted_growth} in
two ways: analyticity of each $b^{(n)}_p$ in $x$ and geometric decay in
$|p|$. The four parameters play distinct roles: $D$ is an absolute scale;
$\rho$ governs decay in the Volterra order $n$; $\mu$ governs decay in the
gap-exponent $|p|$; and $\nu$ is a Cauchy-type radius governing derivative
growth in $\sigma$ (the bound on $(b^{(n)}_p)^{(\sigma)}$ grows no faster
than $\nu^{-\sigma}$, consistent with the natural estimate for real-analytic
functions on $[0,1]$). Plants with kernels $f_n$ polynomial in
$\xi_1,\ldots,\xi_n$ of bounded total gap-degree satisfy \eqref{eq:cf-bound}
with arbitrarily large $\mu$; this includes the PDAE of
Section~\ref{sec:example}, for which $f_2\equiv 1$ and $f_n\equiv 0$ for
$n\ge 3$.

\begin{remark}\label{rem:cf-implies-weighted}
Under Assumption~\ref{ass:cf}, on $T_n(1)$ the $n$ gaps
$\Delta_r:=\xi_r-\xi_{r+1}$, $r=0,\ldots,n-1$ (with $\xi_0:=x$,
$\xi_{n+1}:=0$ absent), satisfy $\sum_{r=0}^{n-1}\Delta_r\le x\le 1$.
Expanding $\Phi_p(x;\xi)$ in the gap variables and summing over
$p\in\N_0^n$ gives
\begin{equation}\label{eq:cf-sup-bound}
\|f_n\|_{L^\infty(T_n(1))}
\;\le\;
\sum_{p\in\N_0^n}D\,\rho^{-(n-1)}\mu^{-|p|}\,\prod_{r=0}^{n-1}\frac{\Delta_r^{p_r}}{p_r!}
\;=\;
D\,\rho^{-(n-1)}\,\exp\!\Bigl(\frac{\sum_{r=0}^{n-1}\Delta_r}{\mu}\Bigr)
\;\le\;
D\,\rho^{-(n-1)}\,e^{\,1/\mu},
\end{equation}
uniformly in $n$. Hence Assumption~\ref{ass:weighted_growth} is
recovered with $D_f=D\,e^{\,1/\mu}$ and $\rho_f=\rho$.
\end{remark}

\subsection{Cascade of scalar ODEs}\label{sec:scalar-ode}

The construction of the kernels $k_n$ of the backstepping operator $K$, by scalar quadrature, rests on the ansatz, for sequences
$\{a^{(n)}_p\}_{p\in\N_0^n}\subset C^1[0,1]$,
\begin{equation}\label{eq:kn-ansatz}
k_n(x,\xi_1,\ldots,\xi_n)
\;:=\;
\sum_{p\in\N_0^n}\bigl[a^{(n)}_p(x)-a^{(n)}_p(x-\xi_n)\bigr]\,\Phi_p(x;\xi).
\end{equation}

\begin{lemma}[Transport identity for the $k_n$ inflow ansatz]\label{lem:inflow-ansatz}
For any $C^1$ sequence $\{a^{(n)}_p\}_{p\in\N_0^n}$, the function $k_n$ in
\eqref{eq:kn-ansatz} satisfies the inflow condition \eqref{eq:kernel_bc}
identically, and
\begin{equation}\label{eq:Tn-kn}
\mathsf T_n\,k_n(x,\xi_1,\ldots,\xi_n)
\;=\;
\sum_{p\in\N_0^n}\bigl(a^{(n)}_p\bigr)'(x)\,\Phi_p(x;\xi).
\end{equation}
\end{lemma}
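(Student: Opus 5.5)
The plan is to verify both claims termwise and then justify passing $\mathsf T_n$ through the sum over $p$.

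\emph{Inflow condition.} Set $\xi_n=0$ in \eqref{eq:kn-ansatz}. Each bracket becomes $a^{(n)}_p(x)-a^{(n)}_p(x)=0$. Hence every term vanishes and $k_n(x,\xi_1,\dots,\xi_{n-1},0)=0$, which is \eqref{eq:kernel_bc}. This uses only that the series converges at each point, so it can be evaluated term by term.

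\emph{Transport identity, single term.} Fix $p$ and write the $p$-th term as $g_p(x,\xi_n)\,\Phi_p(x;\xi)$ with $g_p:=a^{(n)}_p(x)-a^{(n)}_p(x-\xi_n)$. Since $\mathsf T_n$ is a first-order derivation, the Leibniz rule gives
\[
\mathsf T_n\bigl(g_p\Phi_p\bigr)=(\mathsf T_n g_p)\,\Phi_p+g_p\,\mathsf T_n\Phi_p .
\]
The second term vanishes by \eqref{eq:TPhi-zero}. For the first, only $\partial_x$ and $\partial_{\xi_n}$ act on $g_p$. They give
\[
\partial_x g_p=(a^{(n)}_p)'(x)-(a^{(n)}_p)'(x-\xi_n),\qquad \partial_{\xi_n}g_p=(a^{(n)}_p)'(x-\xi_n).
\]
Adding these, $\mathsf T_n g_p=(a^{(n)}_p)'(x)$. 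The same conclusion follows from the characteristic flow: $x-\xi_n$ is a flow invariant, and $x$ advances with unit speed. Hence $\mathsf T_n\bigl(g_p\Phi_p\bigr)=(a^{(n)}_p)'(x)\,\Phi_p$.

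\emph{Summation (main obstacle).} The lemma assumes only that each $a^{(n)}_p$ is $C^1$. To commute $\mathsf T_n$ with $\sum_p$, we need locally uniform convergence on $T_n(1)$ of both series, $\sum_p g_p\Phi_p$ and $\sum_p (a^{(n)}_p)'\Phi_p$. Under that condition, the standard theorem on differentiating uniformly convergent series, applied along each characteristic line, gives \eqref{eq:Tn-kn}. I would state this as the standing interpretation, valid for finitely supported sequences and for the sequences constructed later.

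For those later sequences, the convergence would be checked as in Remark~\ref{rem:cf-implies-weighted}. A bound $|a^{(n)}_p|,|(a^{(n)}_p)'|\lesssim \mu'^{-|p|}$ gives convergence by the same exponential-of-gaps argument, since $\sum_p\mu'^{-|p|}\Phi_p=\exp\bigl(\sum_r\Delta_r/\mu'\bigr)\le e^{1/\mu'}$.

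An alternative avoids differentiating under the sum. It works with the integrated form along characteristics: the value of $k_n$ at the point of parameter $s$ on a characteristic equals its value at $s=0$ plus $\int_0^s\sum_p(a^{(n)}_p)'(\cdot)\Phi_p\,d\sigma$. The interchange of sum and integral then requires only dominated convergence. This matches the characteristic notion of solution used in Theorem~\ref{thm:direct_sup}.
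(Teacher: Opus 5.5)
Your proof is correct and matches the paper's argument: the bracket $a^{(n)}_p(x)-a^{(n)}_p(x-\xi_n)$ vanishes at $\xi_n=0$, and termwise $\mathsf T_n\Phi_p=0$ while the $(a^{(n)}_p)'(x-\xi_n)$ contributions from $\partial_x$ and $\partial_{\xi_n}$ cancel, leaving $(a^{(n)}_p)'(x)\,\Phi_p$. Your caveat about commuting $\mathsf T_n$ with $\sum_p$ is legitimate: the paper's proof passes over it, since its ``any $C^1$ sequence'' hypothesis tacitly presumes convergence, and for the coefficients actually constructed the needed uniform convergence of the series and of its derivative series is supplied later by Lemma~\ref{lem:kn-convergence}.
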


\begin{proof}
At $\xi_n=0$, $a^{(n)}_p(x)-a^{(n)}_p(x-\xi_n)=a^{(n)}_p(x)-a^{(n)}_p(x)=0$, so
\eqref{eq:kernel_bc} holds. For \eqref{eq:Tn-kn}, $\mathsf T_n\Phi_p=0$ by
\eqref{eq:TPhi-zero}, so $\mathsf T_n k_n$ equals the sum over $p$ of $\Phi_p$
times $\mathsf T_n[a^{(n)}_p(x)-a^{(n)}_p(x-\xi_n)]$. Componentwise,
\begin{subequations}
\begin{align}
\partial_x\bigl[a^{(n)}_p(x)-a^{(n)}_p(x-\xi_n)\bigr]
&= \bigl(a^{(n)}_p\bigr)'(x)-\bigl(a^{(n)}_p\bigr)'(x-\xi_n),\\
\partial_{\xi_i}\bigl[a^{(n)}_p(x)-a^{(n)}_p(x-\xi_n)\bigr]
&= 0,\qquad 1\le i\le n-1,\\
\partial_{\xi_n}\bigl[a^{(n)}_p(x)-a^{(n)}_p(x-\xi_n)\bigr]
&= \bigl(a^{(n)}_p\bigr)'(x-\xi_n).
\end{align}
\end{subequations}
Summing, the two copies of $(a^{(n)}_p)'(x-\xi_n)$ cancel, leaving
$(a^{(n)}_p)'(x)$.
\end{proof}

Under the ansatz \eqref{eq:kn-ansatz}, the kernel PDE
\eqref{eq:kernel_pde}--\eqref{eq:kernel_bc} is equivalent, by
Lemma~\ref{lem:inflow-ansatz}, to the gap-basis identity
\begin{equation}\label{eq:matching-gap}
\sum_p\bigl(a^{(n)}_p\bigr)'(x)\,\Phi_p(x;\xi)
\;=\;
-\sum_p b^{(n)}_p(x)\,\Phi_p(x;\xi)
+\sum_{m=2}^{n-1} B^m_n[k_{n-m+1},f_m](x,\xi).
\end{equation}
(The $m=n$ term in the sum on the right-hand side of \eqref{eq:kernel_pde}
vanishes, since $k_{n-m+1}\vert_{m=n}=k_1\equiv 0$ renders
$B^n_n[k_1,f_n]\equiv 0$ by \eqref{eq:Cnm_def}. The effective range in
\eqref{eq:matching-gap} is therefore $m\in\{2,\ldots,n-1\}$.) Projecting \eqref{eq:matching-gap} onto
$\Phi_P$ yields, for each $P\in\N_0^n$, the scalar ODE
\begin{equation}\label{eq:scalar-ode}
\bigl(a^{(n)}_P\bigr)'(x)
\;=\;
-\,b^{(n)}_P(x) + c^{(n)}_P(x),
\qquad
a^{(n)}_P(0)=0,
\end{equation}
where $c^{(n)}_P(x)$ is the $\Phi_P$-coefficient of the coupling sum on the
right-hand side of \eqref{eq:matching-gap}. The normalization
$a^{(n)}_P(0)=0$ fixes the residual additive constant in \eqref{eq:kn-ansatz}.
The solution is the quadrature
\begin{equation}\label{eq:scalar-quad}
a^{(n)}_P(x)
\;=\;
\int_0^x \bigl[-b^{(n)}_P(s)+c^{(n)}_P(s)\bigr]\,ds.
\end{equation}

\begin{remark}[Characteristic integral representation vs.\ scalar quadrature]
\label{rem:quadrature-vs-characteristic}
Both the general Volterra construction and the transport-adapted Chen--Fliess specialization share a key feature: neither requires solving kernel PDEs. In the general case, the kernels are obtained from the characteristic integral representation \eqref{eq:kernel_integral}, which expresses $k_n$ explicitly through integration along transport characteristics, i.e., as a parametrized (or parametric) quadrature in the variable $s$, whereas in the Chen--Fliess specialization this structure is further reduced to scalar quadratures such as \eqref{eq:scalar-quad}. The difference lies not in the presence of integrals, but in what is being integrated: in \eqref{eq:kernel_integral}, each evaluation of $k_n(x,\xi_1,\ldots,\xi_n)$ involves lower-order kernels evaluated at shifted arguments, so the computation remains at the level of functions defined on the simplex $T_n$ and retains its high-dimensional functional dependence; in contrast, after projection onto the gap basis, the Chen--Fliess case reduces the recursion to scalar coefficient functions $a_P^{(n)}(x)$ governed by one-dimensional ODEs, whose solutions are obtained by quadrature. Thus, while both approaches avoid PDE solvers, the former preserves a high-dimensional functional structure, whereas the latter achieves a complete scalarization of the construction.
\end{remark}

The content of the construction \eqref{eq:scalar-quad} lies in exhibiting $c^{(n)}_P$ as an
explicit finite combinatorial expression in the data of lower order for the Chen-Fliess kernels.

\begin{lemma}[Explicit formula for $c^{(n)}_P$]\label{lem:cnP}
Let $n\ge 3$ and $P\in\N_0^n$. Under Assumption~\ref{ass:cf}, the
$\Phi_P$-coefficient $c^{(n)}_P(x)$ of the coupling $\sum_{m=2}^{n-1} B^m_n[k_{n-m+1},f_m](x,\xi)$ in  \eqref{eq:matching-gap} admits the formal expansion
\begin{equation}\label{eq:cnP-form}
c^{(n)}_P(x)
\;=\;
\sum_{m=2}^{n-1}\;
\sum_{\substack{q\in\N_0^{\,n-m+1},\;q'\in\N_0^{\,m}\\\sigma,\tau\in\N_0,\;\alpha\in\N_0^{\,n}\\|\alpha|\le\tau,\;|q|+|q'|+\sigma+|\alpha|=|P|-1}}
\Gamma^{(n,m)}_{P;q,q',\sigma,\tau,\alpha}\;
\frac{x^{\tau-|\alpha|}}{(\tau-|\alpha|)!}\,
\bigl(a^{(n-m+1)}_q\bigr)^{(\tau)}(x)\;
\bigl(b^{(m)}_{q'}\bigr)^{(\sigma)}(x),
\end{equation}
where the integer-valued combinatorial coefficients
$\Gamma^{(n,m)}_{P;q,q',\sigma,\tau,\alpha}\in\mathbb{Z}$ are
constructed in Appendix~\ref{app:Gamma} and depend only on
$(n,m,P,q,q',\sigma,\tau,\alpha)$ and not on the plant data. For fixed $P$ the sum is infinite in $\tau$. 
\end{lemma}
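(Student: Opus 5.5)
The plan is to compute each coupling term $B^m_n[k_{n-m+1},f_m]$ directly from definition \eqref{eq:Cnm_def}, after substituting the ansatz \eqref{eq:kn-ansatz} for $k_{n-m+1}$ and the gap expansion \eqref{eq:fn-gap} for $f_m$. The result is then re-expressed as a sum of gap monomials $\Phi_P(x;\xi)$ with $x$-dependent coefficients, and $c^{(n)}_P$ is read off.

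\emph{Step 1: make the integrand a polynomial in $s$ with $x$-dependent coefficients.} Fix $m$, $j$ and one permutation term of $D_j^{\,n-m+1,m}$. The factor $k_{n-m+1}$ appears with $s$ inserted in slot $j$. Its last argument is either $\xi_{n-m}$ (if $j<n-m+1$) or $s$ (if $j=n-m+1$). In both cases the bracket $a_q(x)-a_q(x-\xi_\ast)$ has the form $a_q(x)-a_q(x-y)$ with $y\le x$. I will Taylor-expand $a_q(x-y)$ around $x$: $a_q(x)-a_q(x-y)=-\sum_{\tau\ge1}(-y)^\tau a_q^{(\tau)}(x)/\tau!$. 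This sum is infinite in $\tau$; it is legitimate because the $a_q$ are real-analytic, inheriting analyticity from the $b$'s by induction through \eqref{eq:scalar-quad}, and here it is only needed formally.

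Similarly, $f_m(s,\ldots)$ contains $b^{(m)}_{q'}(s)$, which I expand around $x$ as $\sum_\sigma b^{(m)(\sigma)}_{q'}(x)(s-x)^\sigma/\sigma!$. The gap factors $\Phi_q$ and $\Phi_{q'}$ evaluated at the permuted arguments are polynomials in the gaps of $\xi$ and in $s$.

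\emph{Step 2: rewrite in gap coordinates and integrate.} Write $y=x-\xi_\ast$ as a sum of consecutive gaps $\Delta_0+\cdots+\Delta_{r}$, and expand $y^\tau$ multinomially. The part of the power that is not absorbed by gaps, measured relative to the origin rather than to $x$, I keep as $x^{\tau-|\alpha|}$. Here $\alpha$ records the multi-index of gap powers taken, so that $|\alpha|\le\tau$ and the factor $1/(\tau-|\alpha|)!$ appears. The permuted arguments interleave two ordered lists, so each gap of the original arguments is a sum of consecutive gaps of $\xi$; multinomial expansion keeps everything as integer combinations of products $\prod\Delta_r^{e_r}/e_r!$.

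The $s$-integral over $[\xi_j,\xi_{j-1}]$ is then a polynomial integral in $s-\xi_j$, expressed via the gap $\Delta_{j-1}$. Each monomial $(s-\xi_j)^k/k!$ integrates to $\Delta_{j-1}^{k+1}/(k+1)!$. This is the source of the ``$-1$'' in the degree constraint $|q|+|q'|+\sigma+|\alpha|=|P|-1$: total gap degree increases by exactly one through the integration.

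\emph{Step 3: normalize.} Converting every product of gap powers back to $\Phi_P$ normalization, ratios of factorials combine into binomial/multinomial coefficients, hence integers. Summing over $j$, the permutations in $\mathcal P_{n-m+1-j}$, $q$, $q'$ defines $\Gamma^{(n,m)}_{P;q,q',\sigma,\tau,\alpha}\in\mathbb Z$, which depends only on indices. Degree counting shows that for fixed $P$ only finitely many $(q,q',\sigma,\alpha)$ occur for each $\tau$, while $\tau$ is unbounded because of the $x^{\tau-|\alpha|}$ factor.

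\emph{Main obstacle.} The hard part is the bookkeeping in Step 2: tracking how the interleaving permutation of $D_j$ redistributes gaps, and verifying that all factorial ratios are integral. I would handle this by writing each needed identity as the multinomial theorem $(\sum\Delta_r)^k/k!=\sum_{|e|=k}\prod\Delta_r^{e_r}/e_r!$ and the Beta-type identity $\int_0^\Delta t^a(\Delta-t)^b/(a!b!)\,dt=\Delta^{a+b+1}/(a+b+1)!$. Both produce coefficient $1$ in divided-power form, which makes integrality immediate. The detailed construction of $\Gamma$ is deferred to Appendix~\ref{app:Gamma}.
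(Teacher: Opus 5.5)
Your proposal is correct and follows essentially the same route as the paper's proof in Appendix~\ref{app:Gamma}: substitute the ansatz \eqref{eq:kn-ansatz} and the gap expansion \eqref{eq:fn-gap} into $B^m_n$, Taylor-expand both coefficient functions around $x$, and expand the power of the non-gap variable multinomially into $x^{\tau-|\alpha|}/(\tau-|\alpha|)!$ times divided-power gap monomials; then gain exactly one gap degree from the Beta-type split-gap integration in $s$ and read off integer $\Phi_P$-coefficients. Two minor remarks: in Step~2 it is $x-y=x-\xi_\ast$ (not $y$) that is a sum of consecutive gaps, since you set $y=\xi_\ast$ in Step~1; and your distinction between $\xi_\ast=s$ (when $j=n-m+1$) and $\xi_\ast$ a lower $\xi$-variable (when $j<n-m+1$) is actually more accurate than the paper's \eqref{eq:B-substituted}, which writes $a^{(k)}_q(x-s)$ for every $j$.
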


\begin{proof}
Appendix~\ref{app:Gamma} substitutes the ansatz \eqref{eq:kn-ansatz}
for $k_{n-m+1}$ and the gap-basis expansion \eqref{eq:fn-gap} for
$f_m$ into the definition \eqref{eq:Cnm_def} of
$B^m_n[k_{n-m+1},f_m]$, Taylor-expands the coefficients
$a^{(n-m+1)}_q(x-s)$ and $b^{(m)}_{q'}(s)$ around $x$, expands each
power $s^\tau$ produced by the $a$-Taylor series as a polynomial in
the gap variables of the extended chain via $s=x-(x-s)$, and extracts
the $\Phi_P$-coefficient. The pure-$x$ factor $x^{\tau-|\alpha|}$
carries a residual $1/(\tau-|\alpha|)!$ from the Taylor/multinomial
bookkeeping — not cleared by the gap-basis normalization, since $x$
is not a gap variable — which we absorb into the divided-power form
$x^{\tau-|\alpha|}/(\tau-|\alpha|)!$ displayed in \eqref{eq:cnP-form}.
With this absorption, all factorials cancel and the residual
coefficient $\Gamma^{(n,m)}_{P;q,q',\sigma,\tau,\alpha}$ extracted in
the appendix is a signed integer. The multi-index $\alpha\in\N_0^n$
records the distribution of the $s^\tau$ expansion across the
available gaps. At this stage \eqref{eq:cnP-form} is a formal coefficient identity: the signed integers
$\Gamma^{(n,m)}_{P;q,q',\sigma,\tau,\alpha}$ are defined, and the identity holds as an
equality of finite-$\tau$ partial sums for each $\tau_{\max}\in\N_0$. Absolute convergence
of the full series in $(q,q',\sigma,\tau,\alpha)$ on $[0,1]$ — which promotes
\eqref{eq:cnP-form} from a formal identity to an equality of functions — is established
separately in the proof of Lemma~\ref{lem:kn-convergence} via the divided-power algebra
construction of Section~\ref{sec:gap-basis-convergence}.
\end{proof}

Equation \eqref{eq:cnP-form} shows the cascade is triangular: $a^{(n)}_P$ depends only on lower-order cascade data
$\{a^{(m)}_q:\,m<n\}$ and plant data $\{b^{(k)}_{q'}:\,k\le n-1\}$,
through derivatives of all orders $\tau,\sigma\in\N_0$. The cascade closes
at $n=2$: the coupling sum in \eqref{eq:matching-gap} is empty ($m$ would
be restricted to $\{2,\ldots,1\}=\varnothing$), so $c^{(2)}_P\equiv 0$ and
\begin{equation}\label{eq:a2-quad}
a^{(2)}_P(x) \;=\; -\int_0^x b^{(2)}_P(s)\,ds,\qquad P\in\N_0^2.
\end{equation}

\subsection{Convergence of the gap-basis series}\label{sec:gap-basis-convergence}

The ansatz \eqref{eq:kn-ansatz} writes each $k_n$ as an infinite sum over
$p\in\N_0^n$. We now establish that this sum, together with the coupling
sums generated by the operators $B^m_n$ along the cascade, converges
absolutely and uniformly on $T_n(1)$ at every order $n$. The argument
proceeds by exhibiting the plant data and the cascade coefficients in an
analytic divided-power coefficient algebra on which the cascade operators
act continuously. The convergence of the infinite Volterra feedback
\eqref{eq:U-pde-free} itself is inherited from Theorem~\ref{thm:direct_sup}
through Remark~\ref{rem:cf-implies-weighted}.


\begin{definition}[Divided-power coefficient family]\label{def:dp-algebra}
For $N\ge 1$, $r>0$, and $R>0$, and for a family
$a=\{a_p\}_{p\in\N_0^N}$ of functions $a_p\in C^\infty([0,1])$, define the
norm
\begin{equation}\label{eq:dp-norm}
\|a\|_{N,r,R}
\;:=\;
\sup_{x\in[0,1]}
\sum_{p\in\N_0^N}\sum_{\sigma=0}^{\infty}
\bigl|(a_p)^{(\sigma)}(x)\bigr|\,
\frac{r^{|p|}\,R^{\sigma}}{p!\,\sigma!},
\end{equation}
where $p!:=p_0!\cdots p_{N-1}!$ and $|p|:=p_0+\cdots+p_{N-1}$. Let
$\mathcal A_{N,r,R}$ denote the set of families $a$ with
$\|a\|_{N,r,R}<\infty$.
\end{definition}

The norm $\|\cdot\|_{N,r,R}$ has four properties governing the cascade —
absolute uniform convergence on $T_N(1)$, a Banach-algebra inequality
under divided-power multiplication, continuity under linear substitution
of gap variables, and continuity under split-gap integration — together
with the plant-data estimate
\begin{equation}\label{eq:b-norm-finite}
\|b^{(n)}\|_{n,r,R}
\;\le\;
D\,\rho^{-(n-1)}\,\exp\!\Bigl(\frac{n\,r}{\mu}\Bigr)\,
\exp\!\Bigl(\frac{R}{\nu}\Bigr)
\;<\;\infty,
\qquad n\ge 2,
\end{equation}
valid for every $r>0$ and $R>0$. All five are recorded, with proofs,
as Lemma~\ref{lem:dp-algebra} in Appendix~\ref{app:dp-algebra}; they
drive the convergence argument that follows.


\begin{lemma}[Convergence of the gap-basis series]\label{lem:kn-convergence}
Let Assumption~\ref{ass:cf} hold. For every $n\ge 2$, the coefficient
family $a^{(n)}=\{a^{(n)}_P\}_{P\in\N_0^n}$ defined by the cascade
\eqref{eq:a-recursion}, \eqref{eq:cnP-form} satisfies
$a^{(n)}\in\mathcal A_{n,r,R}$ for every $r\ge 1$ and every $R>0$.
Consequently, the series \eqref{eq:kn-ansatz} defining $k_n$ converges
absolutely and uniformly on $T_n(1)$, the limit is real-analytic on
$T_n(1)$, and the inflow condition \eqref{eq:kernel_bc} holds.
\end{lemma}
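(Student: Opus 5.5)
The argument is an induction on $n$ inside the divided-power algebra of Definition~\ref{def:dp-algebra}, using the five properties recorded in Lemma~\ref{lem:dp-algebra} (Appendix~\ref{app:dp-algebra}). The claim to propagate is $a^{(n)}\in\mathcal A_{n,r,R}$ for all $r\ge1$ and all $R>0$.

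\emph{Base case $n=2$.} By \eqref{eq:a2-quad}, $a^{(2)}_P(x)=-\int_0^x b^{(2)}_P$. For $\sigma\ge1$ the derivatives $(a^{(2)}_P)^{(\sigma)}=-(b^{(2)}_P)^{(\sigma-1)}$ are controlled by the $b$-norm after shifting $\sigma$. The $\sigma=0$ term is bounded by $\sup|b^{(2)}_P|$. This gives $\|a^{(2)}\|_{2,r,R}\le (1+R)\|b^{(2)}\|_{2,r,R}$, which is finite by \eqref{eq:b-norm-finite}.

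\emph{Inductive step.} Rather than estimating the explicit formula \eqref{eq:cnP-form} coefficient by coefficient, I would read $c^{(n)}$ as the image of $(a^{(n-m+1)},b^{(m)})$ under the operations that the appendix proof of Lemma~\ref{lem:cnP} performs. These are:
\begin{enumerate}[label=(\alph*)]
\item Taylor shift $a_q(x-s)$ and the substitution $s=x-(x-s)$ (a linear substitution of gap variables);
\item the divided-power product with $b^{(m)}_{q'}$;
\item the split-gap integration $\int_{\xi_j}^{\xi_{j-1}}ds$ together with the $D_j$ permutation sums.
\end{enumerate}
Each of these is bounded in the norm by the algebra/substitution/integration properties. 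The expected constant has the shape $\|c^{(n)}\|_{n,r,R}\le \sum_{m=2}^{n-1}C(n,m)\,\|a^{(n-m+1)}\|_{n-m+1,r',R'}\|b^{(m)}\|_{m,r',R'}$, with a possible enlargement $r'\ge r$, $R'\ge R$.

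Since the induction hypothesis is assumed for \emph{every} $r\ge1$ and $R>0$, such a loss of radius is harmless at each fixed $n$. No uniformity in $n$ is needed here, because the Volterra-level convergence is inherited from Theorem~\ref{thm:direct_sup}. The quadrature \eqref{eq:scalar-quad} then gives $\|a^{(n)}\|\le(1+R)(\|b^{(n)}\|+\|c^{(n)}\|)<\infty$.

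This argument also upgrades \eqref{eq:cnP-form} from a formal identity to an absolutely convergent one. Absolute convergence in the norm justifies rearranging the sums over $(q,q',\sigma,\tau,\alpha)$, so the finite-$\tau$ partial sums converge to the true $\Phi_P$-coefficient of $B^m_n$.

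\emph{Consequences.}
\begin{itemize}
\item \emph{Convergence of \eqref{eq:kn-ansatz}.} On $T_n(1)$ each gap is $\le1\le r$, so $|\Phi_p|\le r^{|p|}/p!$. The $\sigma=0$ part of the norm then gives absolute uniform convergence of $\sum_p a^{(n)}_p(x)\Phi_p$. The term $a_p(x-\xi_n)$ is handled the same way, because $x-\xi_n\in[0,1]$.
\item \emph{Real-analyticity.} Since $\sum_\sigma|a_p^{(\sigma)}|R^\sigma/\sigma!$ is finite for every $R$, each $a_p$ is entire with uniform Cauchy bounds. Combined with the geometric-in-$r$ control for any $r\ge1$, the series converges locally uniformly on a complex neighbourhood of $T_n(1)$, so the limit is real-analytic there.
\item \emph{Inflow condition.} Each term vanishes at $\xi_n=0$, and uniform convergence preserves this, so \eqref{eq:kernel_bc} holds by Lemma~\ref{lem:inflow-ansatz}.
\end{itemize}

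\emph{Main obstacle.} The delicate step is the inductive estimate, specifically the Taylor re-expansion in (a). The infinite $\tau$-sum produces $x^{\tau-|\alpha|}/(\tau-|\alpha|)!$ multiplied by $(a_q)^{(\tau)}$, and one must check that the combined weight $r^{|\alpha|}\cdot1^{\tau-|\alpha|}$ is dominated by $R'^{\tau}/\tau!$ in the norm of $a^{(n-m+1)}$. The counts of the integers $\Gamma$ must also stay geometric.

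My first attempt would be to sum over $\alpha$ with $|\alpha|=k$ using the multinomial identity, which yields $(x+nr)^\tau/\tau!$. This suggests choosing $R'=R+1+nr$, and that choice is permitted because the induction hypothesis holds for all $R$.
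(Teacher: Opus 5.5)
Your proposal is correct and takes the same route as the paper. Both use induction on $n$ in the divided-power norm, the $(1+R)$ quadrature bound, radius loss absorbed because the hypothesis holds for every $(r,R)$, and conclusions drawn from Lemma~\ref{lem:dp-algebra}(i) and Lemma~\ref{lem:inflow-ansatz}. Your explicit enlargement of the derivative radius to $R'\approx R+1+nr$, to pay for the $\tau$-re-expansion of $a_q(x-s)$, is in fact more careful than the paper: its operations (i)--(v) Taylor-expand only $b^{(m)}_{q'}(s)$ and assert that $R$ is preserved. Since every radius is available in the induction hypothesis, this does not change the conclusion, and your complexification argument for real-analyticity is a clean substitute for the paper's derivative-series argument.
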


\begin{proof}
Induction on $n\ge 2$. Fix $r\ge 1$ and $R>0$.

\emph{Base case, $n=2$.} From \eqref{eq:a2-quad},
$a^{(2)}_P(x)=-\int_0^x b^{(2)}_P(s)\,ds$. For $\sigma\ge 1$,
$(a^{(2)}_P)^{(\sigma)}(x)=-(b^{(2)}_P)^{(\sigma-1)}(x)$; for $\sigma=0$,
$|a^{(2)}_P(x)|\le\sup_{s\in[0,1]}|b^{(2)}_P(s)|$. Substituting
\eqref{eq:cf-bound} and summing,
\begin{equation}\label{eq:a2-norm}
\|a^{(2)}\|_{2,r,R}
\;\le\;
(R+1)\,\|b^{(2)}\|_{2,r,R}
\;\le\;
(R+1)\,D\,\rho^{-1}\,e^{2r/\mu}\,e^{R/\nu}
\;<\;\infty.
\end{equation}

\emph{Inductive step, $n\ge 3$.} Suppose, for every $m\in\{2,\ldots,n-1\}$
and every $r\ge 1$, $R>0$, that $a^{(m)}\in\mathcal A_{m,r,R}$.
Fix the target radii $(r,R)$ at level $n$. Each term of $B^m_n[k_k,f_m]$
with $k=n-m+1$ is obtained from $a^{(k)}$ and $b^{(m)}$ by the composition
of the following operations, each continuous in the divided-power norm:

(i) Substitution of the arguments of $k_k$: the gap variables of $k_k$,
evaluated at the arguments
$(x,\xi_1,\ldots,\xi_{j-1},s,\xi_j,\ldots,\xi_{k-1})$ in \eqref{eq:Cnm_def},
are nonnegative integer linear combinations of the gap variables of the
extended chain \eqref{eq:extended-chain}, with row sums bounded by a
constant $L_0$ depending only on $n$. Lemma~\ref{lem:dp-algebra}(iii)
replaces $r$ by $L_0 r$ in the input norm.

(ii) Taylor expansion of $b^{(m)}_{q'}(s)$ around $x$, using the shift
$s-x$: by \eqref{eq:taylor-b-app}, the result is a power series in
$(s-x)$ whose coefficients, of order $\sigma$, are
$(b^{(m)}_{q'})^{(\sigma)}(x)/\sigma!$. Under Assumption~\ref{ass:cf} this
Taylor series is entire: the bound \eqref{eq:cf-bound} yields terms
$\le|s-x|^\sigma\nu^{-\sigma}/\sigma!$, which sum to $e^{|s-x|/\nu}$ for
every $s,x$. Writing $(s-x)^\sigma=(-1)^\sigma(x-s)^\sigma$ and noting that
$x-s$ is a nonnegative integer linear combination of gap variables of the
extended chain (by \eqref{eq:s-decomposition}), this is a substitution of
the type covered by Lemma~\ref{lem:dp-algebra}(iii), applied to the
variable $x-s$ written in gap form; the sign $(-1)^\sigma$ is absorbed into
the integer coefficient structure and does not affect the norm estimate.

(iii) Multiplication of the two resulting divided-power series.
Lemma~\ref{lem:dp-algebra}(ii) bounds the norm of the product by the
product of norms.

(iv) The permutation-summation operator $D^{k,m}_j$ is a finite sum of
linear substitutions of the type in Lemma~\ref{lem:dp-algebra}(iii); the
sum over $j$ is finite.

(v) Integration in $s$ over $[\xi_j,\xi_{j-1}]$, by \eqref{eq:s-decomposition},
is the split-gap integration of Lemma~\ref{lem:dp-algebra}(iv), with the
split variable $\theta=\xi_{j-1}-s$ and $\omega=s-\xi_j$. This multiplies
the norm by a factor of $r$.

The outer sum over $m\in\{2,\ldots,n-1\}$ in \eqref{eq:matching-gap} is
finite. Therefore, after finitely many applications of operations
(i)--(v), the radius $r$ is replaced by $r'=r\cdot L_*$ for some
constant $L_*\ge 1$ depending only on $n$, while the radius $R$ is
preserved (the $1/\sigma!$ factor in the norm absorbs the $\nu^{-\sigma}$
growth introduced by Taylor expansion for every $R>0$). This gives
$c^{(n)}\in\mathcal A_{n,r',R}$. Since the inductive hypothesis holds
for \emph{every} $r''\ge 1$ and every $R''>0$, we may apply it at
$r''=r'$ to conclude $a^{(m)}\in\mathcal A_{m,r',R}$ for each
$m\in\{2,\ldots,n-1\}$, and Lemma~\ref{lem:dp-algebra}(ii)--(iv) then
combines these into $c^{(n)}\in\mathcal A_{n,r',R}$. By the inclusion
$\mathcal A_{n,r',R}\subseteq\mathcal A_{n,r,R}$ (which follows from
monotonicity of the norm in $r$, since $r\le r'$ implies
$r^{|p|}\le (r')^{|p|}$ term by term), we obtain
$c^{(n)}\in\mathcal A_{n,r,R}$ for the original target radii.

From \eqref{eq:b-norm-finite}, $b^{(n)}\in\mathcal A_{n,r,R}$. Hence the
family $d^{(n)}:=-b^{(n)}+c^{(n)}\in\mathcal A_{n,r,R}$. The quadrature
$a^{(n)}_P(x)=\int_0^x d^{(n)}_P(s)\,ds$ gives
$(a^{(n)}_P)^{(\sigma)}=(d^{(n)}_P)^{(\sigma-1)}$ for $\sigma\ge 1$, and
$|a^{(n)}_P(x)|\le\sup_{s\in[0,1]}|d^{(n)}_P(s)|$, yielding
\begin{equation}\label{eq:an-norm}
\|a^{(n)}\|_{n,r,R}\;\le\;(R+1)\,\|d^{(n)}\|_{n,r,R}\;<\;\infty.
\end{equation}

\emph{Series convergence and analyticity.} With
$a^{(n)}\in\mathcal A_{n,r,R}$ for $r\ge 1$, Lemma~\ref{lem:dp-algebra}(i)
gives absolute and uniform convergence of
$\sum_p a^{(n)}_p(x)\Phi_p(x;\xi)$ on $T_n(1)$.
The shifted term $\sum_p a^{(n)}_p(x-\xi_n)\Phi_p(x;\xi)$ is handled by
the trivial sup-in-$x$ bound
$|a^{(n)}_p(x-\xi_n)|\le\sup_{y\in[0,1]}|a^{(n)}_p(y)|$, valid since
$x-\xi_n\in[0,1]$ on $T_n(1)$. Replacing $|a^{(n)}_p(x)|$ by this sup in
the $\sigma=0$ contribution to \eqref{eq:dp-norm} yields a finite bound
on $\sum_p \sup_{y\in[0,1]}|a^{(n)}_p(y)|\,r^{|p|}/p!$, controlled by
$\|a^{(n)}\|_{n,r,R}$; combining with Lemma~\ref{lem:dp-algebra}(i) gives
absolute and uniform convergence of the shifted sum on $T_n(1)$.
Hence the ansatz \eqref{eq:kn-ansatz} defines $k_n$ as an absolutely and
uniformly convergent series on $T_n(1)$. The limit is real-analytic: the
finiteness of $\|a^{(n)}\|_{n,r,R}$ for every $R>0$ implies
\begin{equation}\label{eq:all-derivatives-bound}
\sup_{x\in[0,1]}\sum_{p\in\N_0^n}\bigl|(a^{(n)}_p)^{(\sigma)}(x)\bigr|
\,\frac{r^{|p|}}{p!}
\;\le\;
\frac{\sigma!}{R^\sigma}\,\|a^{(n)}\|_{n,r,R},
\qquad \sigma\in\N_0,
\end{equation}
so the $\sigma$-th derivative series $\sum_p (a^{(n)}_p)^{(\sigma)}(x)\,\Phi_p(x;\xi)$
converges absolutely and uniformly on $T_n(1)$ for every $\sigma$
(the same sup-in-$x$ bound handles the shifted derivative
$\sum_p (a^{(n)}_p)^{(\sigma)}(x-\xi_n)\,\Phi_p(x;\xi)$). Hence
all derivatives of the partial sums converge uniformly to the
corresponding derivatives of the limit, and the limit is real-analytic
on $T_n(1)$. The inflow condition \eqref{eq:kernel_bc} follows from
Lemma~\ref{lem:inflow-ansatz}.
\end{proof}

\subsection{Main theorem on feedback linearizing stabilization of hyperbolic PDE with  transport-adapted Chen-Fliess subclass}\label{sec:pde-free-theorem}

We now combine the scalar coefficient cascade with the convergence of the gap-basis series to obtain a feedback-linearizing controller for the transport-adapted Chen–Fliess subclass.

\begin{theorem}[PDE-free feedback linearization of the transport-adapted Chen–Fliess subclass]\label{thm:pde-free}
Let Assumption~\ref{ass:cf} hold. Define the scalar sequence
$\{a^{(n)}_P(x)\}_{n\ge 2,\,P\in\N_0^n}$ recursively in $n$ by
\begin{equation}\label{eq:a-recursion}
a^{(n)}_P(x)
\;:=\;
\int_0^x\bigl[-\,b^{(n)}_P(s)+c^{(n)}_P(s)\bigr]\,ds,
\qquad x\in[0,1],
\end{equation}
with $c^{(n)}_P$ given by \eqref{eq:cnP-form} for $n\ge 3$ and
$c^{(2)}_P\equiv 0$. Define the kernels $\widehat k_n$ by the ansatz
\eqref{eq:kn-ansatz} with these $a^{(n)}_p$, and the feedback
\begin{equation}\label{eq:U-pde-free}
\widehat U(t)
\;:=\;
\sum_{n\ge 2}\int_{T_n(1)}
\widehat k_n(1,\xi_1,\ldots,\xi_n)\prod_{i=1}^n u(\xi_i,t)\,d\xi_n\cdots d\xi_1.
\end{equation}
Then:
\begin{enumerate}[label=\textup{(\alph*)}]
\item {\em Well-definedness via scalar algebra only.} For each fixed
$(n,P)$ with $n\ge 2$ and $P\in\N_0^n$, the coefficient $a^{(n)}_P$ is
obtained by an absolutely convergent scalar coefficient expression
assembled from finitely many algebra operations on the plant data and
lower-order cascade data — linear substitutions of gap variables
(Lemma~\ref{lem:dp-algebra}(iii)), divided-power multiplication
(Lemma~\ref{lem:dp-algebra}(ii)), split-gap integration
(Lemma~\ref{lem:dp-algebra}(iv)), and one scalar quadrature on $[0,1]$
— applied to products of $\bigl(a^{(m)}_q\bigr)^{(\tau)}$ with $m<n$
and $\bigl(b^{(k)}_{q'}\bigr)^{(\sigma)}$ with $k\le n$. No partial
differential equation is solved; no integration on the simplex
$T_n(x)$ is performed for any $n\ge 2$.

\item {\em Consistency with Theorem~\ref{thm:direct_sup}.} The kernels
$\widehat k_n$ coincide on $T_n(1)$ with the kernels $k_n$ of
Section~\ref{sec:kernels}, so $\widehat U(t)=U(t)$ for every admissible
$u(\cdot,t)$.

\item {\em Closed-loop stability.} The closed-loop system
\eqref{eq:plant}, \eqref{eq:boundary}, \eqref{eq:initial} with the feedback
$U(t)$ of \eqref{eq:control} replaced by $\widehat U(t)$ satisfies the
conclusion of Theorem~\ref{thm:closed_loop} with the same constants
$C_1,C_2$ as in \eqref{eq:stability_constants}.

\item {\em Quadrature sparsity for polynomial plants.} If each $f_n$ is
polynomial in $\xi_1,\ldots,\xi_n$ and $f_n\equiv 0$ for $n>N$, then for every
finite controller truncation order $M\ge 2$, the coefficients
$\{a^{(n)}_P\}_{2\le n\le M}$ are supported on finitely many multi-indices $P$
and are computable by finitely many scalar quadratures, explicitly enumerable.
\end{enumerate}
\end{theorem}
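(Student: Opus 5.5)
The proof handles the four parts in the order (b), (a), (c), (d). Part (b) carries the analytic weight; the other three then follow quickly from it.

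\emph{Part (b).} I will show $\widehat k_n=k_n$ by induction on $n$, using uniqueness for the transport problem \eqref{eq:kernel_pde}--\eqref{eq:kernel_bc}.
\begin{itemize}
\item By Lemma~\ref{lem:kn-convergence}, each $\widehat k_n$ is real-analytic on $T_n(1)$. Its gap-basis series converges absolutely and uniformly, and so do all its derivative series. Hence $\mathsf T_n$ may be applied term by term, and Lemma~\ref{lem:inflow-ansatz} gives $\mathsf T_n\widehat k_n=\sum_p (a^{(n)}_p)'(x)\Phi_p$, together with the inflow condition.
\item By \eqref{eq:a-recursion}, this sum equals $\sum_p\bigl(-b^{(n)}_p+c^{(n)}_p\bigr)\Phi_p$.
\item The first part, $-\sum_p b^{(n)}_p\Phi_p$, is $-f_n$ by \eqref{eq:fn-gap}. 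For the second part I need $\sum_P c^{(n)}_P\Phi_P=\sum_{m}B^m_n[\widehat k_{n-m+1},f_m]$ as functions. By Lemma~\ref{lem:cnP} this holds at the level of finite-$\tau$ partial sums. The absolute convergence from the divided-power estimates in the proof of Lemma~\ref{lem:kn-convergence} then lets me pass to the limit, since the $c^{(n)}$-family lies in $\mathcal A_{n,r,R}$ with $r\ge1$.
\item By the inductive hypothesis $\widehat k_{j}=k_j$ for $j<n$, so the right-hand sides of the two transport problems coincide. Both $\widehat k_n$ and $k_n$ then solve the same first-order problem with zero inflow data on $\xi_n=0$.
\item Along each characteristic \eqref{eq:characteristic}, the difference $\widehat k_n-k_n$ satisfies $\frac{d}{ds}(\cdot)=0$ and vanishes at $s=\xi_n$. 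So it vanishes identically, which is exactly the representation \eqref{eq:kernel_integral}.
\end{itemize}
Setting $x=1$ gives $\widehat U=U$.

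\emph{Part (a).} Here I only need to read off the structure of the cascade. By Lemma~\ref{lem:cnP}, $c^{(n)}_P$ is a series in products $(a^{(m)}_q)^{(\tau)}(b^{(k)}_{q'})^{(\sigma)}$ with $m<n$ and $k\le n$. These products come from operations (i)--(v) of the proof of Lemma~\ref{lem:kn-convergence}, and that proof shows the series converges absolutely. After this, one scalar quadrature \eqref{eq:a-recursion} produces $a^{(n)}_P$. No simplex integral enters: the only $s$-integration is the split-gap integration of Lemma~\ref{lem:dp-algebra}(iv), which acts on coefficients.

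\emph{Part (c).} This is immediate from (b). Since $\widehat U(t)=U(t)$, the closed loop is literally the same system, so Theorems~\ref{thm:closed_loop} and \ref{thm:wellposed} apply with the same $s$, $\ell(s)$, $\rho_L$, and hence the same $C_1,C_2$. Assumption~\ref{ass:cf} implies Assumption~\ref{ass:weighted_growth} by Remark~\ref{rem:cf-implies-weighted}, so those theorems are available.

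\emph{Part (d), and the main obstacle.} I expect (d) to be the hard step, because the $\tau$-sum in \eqref{eq:cnP-form} is infinite. The plan is to work inductively with the class of coefficient families that are finitely supported in $p$ and whose members are polynomials in $x$.
\begin{itemize}
\item For polynomial $f_n$, the coefficients $b^{(n)}_p$ are polynomials in $x$ supported on finitely many $p$. This is because each $\xi_i$ is a polynomial in $x$ and the gaps.
\item The class is closed under every operation used: quadrature, gap substitution, multiplication, and split-gap integration.
\item Since the $a^{(m)}_q$ are then polynomials, $(a^{(m)}_q)^{(\tau)}$ vanishes for $\tau$ large, so the $\tau$-sum truncates.
\item The degree constraint $|q|+|q'|+\sigma+|\alpha|=|P|-1$ then limits the support in $P$ at each order.
\end{itemize}
Together these give finitely many quadratures for $n\le M$, and the triangular structure makes them enumerable.
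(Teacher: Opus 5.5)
Your proposal is correct and follows the paper's proof essentially step for step: (b) by induction using Lemma~\ref{lem:inflow-ansatz}, Lemma~\ref{lem:cnP} and uniqueness along the characteristics \eqref{eq:characteristic}; (a) by reading off the structure of the cascade; (c) as an immediate consequence of (b); and (d) by truncation for polynomial data combined with the degree constraint $|q|+|q'|+\sigma+|\alpha|=|P|-1$. Two of your steps are, if anything, slightly more careful than the paper: the explicit limit passage from the finite-$\tau$ partial sums, and recovering \eqref{eq:kernel_integral} by integrating $\widehat k_n$ along characteristics rather than appealing to a ``unique $C^1$ solution''. One caveat, shared with the paper's own proof: (d) silently needs $f_n$ to be polynomial in $x$ as well as in $\xi_1,\ldots,\xi_n$, since your step ``each $\xi_i$ is a polynomial in $x$ and the gaps'' only yields polynomial $b^{(n)}_p$ under that reading, and otherwise the $\sigma$- and $\tau$-sums in \eqref{eq:cnP-form} need not truncate, so finite support in $P$ can fail.
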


\begin{proof}
\emph{(a)} By induction on $n$. Base $n=2$: \eqref{eq:a-recursion} gives
$a^{(2)}_P$ as the antiderivative of $-b^{(2)}_P$, order-$2$ plant data only.
Inductive step: under the hypothesis for $2,\ldots,n-1$, each term of
\eqref{eq:cnP-form} is a product of the scalar factor
$x^{\tau-|\alpha|}/(\tau-|\alpha|)!$ (a divided power in $x$),
$(a^{(n-m+1)}_q)^{(\tau)}$ with $n-m+1\le n-1<n$ (known) and
$(b^{(m)}_{q'})^{(\sigma)}$ with $m\le n-1$ (plant data). By
Lemma~\ref{lem:kn-convergence} these sums converge absolutely and uniformly
on $[0,1]$. The outer quadrature \eqref{eq:a-recursion} adds one scalar
integration in $x$. No simplex integration and no PDE solution occur; all
operations live on $[0,1]$ and on scalar coefficients.

\emph{(b)} The kernels $k_n$ of Section~\ref{sec:kernels} are, for each
$n\ge 2$, the unique $C^1$ solution on $T_n(1)$ of the transport PDE
\eqref{eq:kernel_pde} with the inflow condition \eqref{eq:kernel_bc};
uniqueness is standard and follows from the characteristic propagation
\eqref{eq:characteristic} from $\xi_n=0$. It suffices to verify that
$\widehat k_n$ is also a $C^1$ solution.

By Lemma~\ref{lem:inflow-ansatz}, $\widehat k_n$ satisfies \eqref{eq:kernel_bc}
identically, and
\begin{equation}\label{eq:T_nknhat1}
\mathsf T_n\,\widehat k_n(x,\xi)
\;=\;
\sum_p\bigl(a^{(n)}_p\bigr)'(x)\,\Phi_p(x;\xi).
\end{equation}
Differentiating \eqref{eq:a-recursion} gives the ODE \eqref{eq:scalar-ode};
substituting,
\begin{equation}\label{eq:T_nknhat2}
\mathsf T_n\,\widehat k_n(x,\xi)
\;=\;
-\sum_p b^{(n)}_p(x)\,\Phi_p(x;\xi)
+\sum_p c^{(n)}_p(x)\,\Phi_p(x;\xi).
\end{equation}
The first sum on the right equals $-f_n(x,\xi)$ by \eqref{eq:fn-gap}. The
second sum equals $\sum_{m=2}^{n-1}B^m_n[\widehat k_{n-m+1},f_m](x,\xi)$ by
Lemma~\ref{lem:cnP}: the lemma identifies $c^{(n)}_p$ as precisely the
$\Phi_p$-coefficient of the coupling sum under the substitutions
\eqref{eq:kn-ansatz}, \eqref{eq:fn-gap}, and uniqueness of the gap-basis
expansion promotes this coefficient identification to an equality of
functions. Both sides are real-analytic on $T_n(1)$: the left-hand side is
real-analytic by Lemma~\ref{lem:kn-convergence}, and the right-hand side
is real-analytic because each term is a finite combination of real-analytic
factors. By the induction hypothesis $\widehat k_{n-m+1}\equiv k_{n-m+1}$ for
$m=2,\ldots,n-1$, so $B^m_n[\widehat k_{n-m+1},f_m]=B^m_n[k_{n-m+1},f_m]$,
and we obtain
\begin{equation}\label{eq:T_nknhat3}
\mathsf T_n\,\widehat k_n(x,\xi)
\;=\;
-f_n(x,\xi) + \sum_{m=2}^{n-1}B^m_n[k_{n-m+1},f_m](x,\xi)
\;=\;
\mathsf T_n\,k_n(x,\xi),
\end{equation}
the second equality by \eqref{eq:kernel_pde} (with the $m=n$ term absent due
to $k_1\equiv 0$). Uniqueness of the transport PDE with the common inflow
condition \eqref{eq:kernel_bc} gives $\widehat k_n\equiv k_n$ on $T_n(1)$, and
therefore $\widehat U(t)=U(t)$.

\emph{(c)} By (b), replacing $U$ by $\widehat U$ leaves the closed-loop
trajectory unchanged, so Theorem~\ref{thm:closed_loop} applies with the same
$C_1,C_2$.

\emph{(d)} Assume each $f_n$ is polynomial in $\xi_1,\ldots,\xi_n$ of total
gap-degree at most $G$, and $f_n\equiv 0$ for $n>N$. Fix any controller
truncation order $M\ge 2$. We establish by induction on $n\in\{2,\ldots,M\}$
that $a^{(n)}_P$ has finite support in $P$, bounded in a way that depends only
on $M$, $N$, and $G$.

At $n=2$: from \eqref{eq:a-recursion}, $a^{(2)}_P$ inherits finite support in $P$
from $b^{(2)}_P$, with $|P|\le G$. Set $Q_2:=G$.

Inductive step: suppose for each $m\in\{2,\ldots,n-1\}$ the coefficients
$a^{(m)}_q$ have support contained in $\{|q|\le Q_m\}$. In
\eqref{eq:cnP-form}, the constraint
$|q|+|q'|+\sigma+|\alpha|=|P|-1$ with $|\alpha|\le\tau$ forces
\begin{equation}\label{eq:P-bound-inductive}
|P|\;\le\;1+\max_{2\le m\le n-1}\bigl(Q_{n-m+1}+G_m\bigr)
\;+\;\sup_{(\sigma,\tau)\in\mathcal S}(\sigma+\tau),
\end{equation}
where $G_m:=G$ for $m\le N$ and $G_m:=0$ for $m>N$ (so that only $m\le\min(n-1,N)$
contribute, since $b^{(m)}_{q'}\equiv 0$ for $m>N$), and $\mathcal S$ is the set
of $(\sigma,\tau)\in\N_0^2$ for which $(a^{(n-m+1)}_q)^{(\tau)}$ and
$(b^{(m)}_{q'})^{(\sigma)}$ are both nonzero on $[0,1]$. For polynomial data,
both factors vanish identically once $\sigma$ or $\tau$ exceeds their respective
polynomial degrees in $x$; these degrees are finite and depend only on $M$, $N$,
and $G$ by the inductive hypothesis applied to $a^{(n-m+1)}_q$ and by hypothesis
for $b^{(m)}_{q'}$. The right-hand side of \eqref{eq:P-bound-inductive} is
therefore finite; set $Q_n$ equal to this right-hand side.

It follows that each $a^{(n)}_P$ with $n\le M$ vanishes for $|P|>Q_n$; the
scalar quadratures in \eqref{eq:a-recursion} are finite in number; and each
$a^{(n)}_P$ is explicitly polynomial in $x$.
\end{proof}

\begin{remark}[The example of Section~\ref{sec:example}]\label{rem:pdae-pde-free}
For the PDAE of Section~\ref{sec:example}, $f_2\equiv 1$ on $T_2(x)$ and
$f_n\equiv 0$ for $n\ge 3$. In the gap basis, this is
$b^{(2)}_{(0,0)}(x)\equiv 1$, with all other $b^{(n)}_p\equiv 0$.
Theorem~\ref{thm:pde-free}(d) then reduces the construction of $k_2$ and
$k_3$ to one quadrature at $n=2$ and six quadratures at $n=3$:
$a^{(2)}_{(0,0)}(x)=-x$, and
\begin{equation}\label{eq:a3-pdae}
\bigl(a^{(3)}_{(0,1,0)},\,a^{(3)}_{(0,2,0)},\,a^{(3)}_{(1,0,0)},\,a^{(3)}_{(1,0,1)},\,a^{(3)}_{(1,1,0)},\,a^{(3)}_{(2,0,0)}\bigr)(x)
\;=\;
\bigl(-\tfrac12 x^2,\,x,\,-\tfrac32 x^2,\,x,\,3x,\,6x\bigr),
\end{equation}
with all other $a^{(3)}_P\equiv 0$. Substitution into \eqref{eq:kn-ansatz}
reproduces the kernels \eqref{eq:k2_example}--\eqref{eq:k3_example} of
Proposition~\ref{fact:k2_k3_example}. The three-dimensional characteristic
integral \eqref{eq:kernel_integral} on $T_3(1)$ is replaced by six one-line
scalar antiderivatives.
\end{remark}

\begin{remark}[Scope of Theorem~\ref{thm:pde-free}]\label{rem:pde-free-scope}
Theorem~\ref{thm:pde-free} does not extend the region of attraction of
Theorem~\ref{thm:closed_loop}: the constants $C_1,C_2$ are the same. Its
increment is computational. On the Chen--Fliess subclass of
Assumption~\ref{ass:cf}, the backstepping controller of
Theorem~\ref{thm:closed_loop} --- whose construction in
Sections~\ref{sec:kernels}--\ref{sec:diffeo} invokes an infinite cascading set of
simplex transport PDEs --- is built by scalar operations on $[0,1]$ alone.
The combinatorial coefficients $\Gamma^{(n,m)}_{P;q,q',\sigma,\tau,\alpha}$ are
independent of the plant data; they can be tabulated once.
\end{remark}

\begin{remark}[Closed and recursion-invariant Chen–Fliess extension with quadrature kernels]
\label{rem:recursion-invariant-cf}The Chen--Fliess subclass used in this section is not invariant under the backstepping recursion: even when the plant coefficients depend only on $x$, the resulting kernels involve the backward characteristic variable $x-\xi_n$. This suggests the larger class
\begin{equation}
f_n(x,\xi)=\sum_{P\in\mathbb N_0^n}
\beta_P^{(n)}(x,y)\Phi_P(x;\xi),
\qquad y:=x-\xi_n ,
\end{equation}
with backstepping kernels sought in the same form,
\begin{equation}
k_n(x,\xi)=\sum_{P\in\mathbb N_0^n}
\alpha_P^{(n)}(x,y)\Phi_P(x;\xi),
\qquad y:=x-\xi_n ,
\end{equation}
namely, a Chen-Fliess subclass {\em closed} under the mapping $F\mapsto K$. 
Since $y$ and the gaps in $\Phi_P$ are invariant along the transport characteristics, the kernel equation reduces coefficientwise to a scalar equation in $x$, parametrized by $y$. Thus, if
\begin{equation}
\sum_{m=2}^{n-1}B_n^m[k_{n-m+1},f_m](x,\xi)
=
\sum_{P\in\mathbb N_0^n} C_P^{(n)}(x,y)\Phi_P(x;\xi),
\end{equation}
then
\begin{equation}
\alpha_P^{(n)}(x,y)
=
\int_y^x
\left[-\beta_P^{(n)}(s,y)+C_P^{(n)}(s,y)\right]\,ds .
\end{equation}
This enlarged class would therefore preserve the triangular quadrature structure, but its convergence analysis would require a substantially heavier bivariate coefficient algebra, controlling mixed $x$- and $y$-derivatives and the corresponding split-gap substitutions. That extension is beyond the scope of the present paper.
\end{remark}

\section{Conclusions}
\label{sec-conc}

Truly {\em general} and truly nonlinear PDE control has evaded even formulation, let alone solution, for more than four decades after the general solution of feedback linearization for ODEs has appeared. In this paper we affirm that posing the problem using nonlinear Volterra series as the operator for linearizing the state equation, introduced in \cite{vazquez2008volterra1,vazquez2008volterra2}, is the right way, and likely the only way. 

In the present sequel to \cite{vazquez2008volterra1,vazquez2008volterra2}, we step from parabolic to hyperbolic nonlinear PDEs. The mathematical execution is challenging. But the differences are few. They are the following, between the hyperbolic and parabolic classes. In the hyperbolic setting, the kernel equations reduce to first-order transport PDEs, as opposed to the second-order parabolic kernel equations in \cite{vazquez2008volterra1,vazquez2008volterra2}. This structural simplification allows for explicit representations along characteristics, so that the kernel bounds follow from Gr\"onwall integration of a recursive $L^2$ inequality rather than from the energy-based arguments of the parabolic case. The recursive control of kernel growth remains the central difficulty, and is addressed here through majorant constructions analogous to those in \cite{vazquez2008volterra1,vazquez2008volterra2}, albeit with a different technical execution. A further methodological distinction is that the inverse transformation is obtained here by a direct contraction-mapping argument on $L^2$, yielding an explicit Lipschitz bound and an explicit radius of invertibility, in place of the Boyd--Chua formal series inversion \cite{boyd1985fading} used in \cite{vazquez2008volterra1,vazquez2008volterra2}. As in the parabolic case, the outcome is a locally defined nonlinear transformation with a well-defined inverse and a closed-loop system that is exponentially stable in the $L^2$ norm, with the region of attraction characterized through small-gain-type conditions.

A distinct, and somewhat unexpected, outcome of this work is the Chen–Fliess specialization developed in Section \ref{sec:pde-free}. While the general program follows the \cite{vazquez2008volterra1,vazquez2008volterra2} paradigm of handling an infinite cascade of kernel PDEs, the transport-adapted Chen–Fliess subclass reveals a different mechanism altogether: the same feedback-linearizing transformation can be constructed in a formulation where no kernel PDEs arise, through a triangular cascade of scalar ODEs obtained by quadrature. This reduction is not a refinement of the parabolic theory of \cite{vazquez2008volterra1,vazquez2008volterra2}, but an independent development made possible by the algebraic structure of the nonlinear operator. It shows that, within a nontrivial infinite-dimensional class, the full feedback-linearization framework admits a realization that is fundamentally one-dimensional in the spatial variable.

\appendix

\section*{Appendices}

\section{The Divided-Power Coefficient Algebra}\label{app:dp-algebra}

This appendix collects the four continuity properties of the norm
$\|\cdot\|_{N,r,R}$ defined in \eqref{eq:dp-norm} of
Section~\ref{sec:gap-basis-convergence}, together with the proof that the
plant data satisfy the norm bound \eqref{eq:b-norm-finite}. These facts
drive the cascade convergence argument of Lemma~\ref{lem:kn-convergence}.

\begin{lemma}[Properties of the divided-power algebra]\label{lem:dp-algebra}
The norm $\|\cdot\|_{N,r,R}$, introduced in Definition \ref{def:dp-algebra}, enjoys the following four properties.
\begin{enumerate}[label=\textup{(\roman*)}]
\item\emph{(Absolute uniform convergence on $T_N(1)$.)}
If $a\in\mathcal A_{N,r,R}$ with $r\ge 1$, then
\begin{equation}\label{eq:dp-sup}
\sup_{(x,\xi)\in T_N(1)}\sum_{p\in\N_0^N}|a_p(x)|\,\Phi_p(x;\xi)
\;\le\;
\|a\|_{N,r,R}.
\end{equation}

\item\emph{(Banach algebra under divided-power multiplication.)}
If $a,b\in\mathcal A_{N,r,R}$ and
\begin{equation}\label{eq:dp-product}
d_P(x)\;:=\;\sum_{p+q=P}\binom{P}{p}\,a_p(x)\,b_q(x),
\qquad
\binom{P}{p}\;:=\;\prod_{i=0}^{N-1}\binom{P_i}{p_i},
\end{equation}
then the family $d=\{d_P\}$ satisfies
\begin{equation}\label{eq:dp-algebra}
\|d\|_{N,r,R}\;\le\;\|a\|_{N,r,R}\,\|b\|_{N,r,R}.
\end{equation}

\item\emph{(Continuity of linear substitution of gap variables.)}
Let $L=(L_{i\ell})$ be an $N\times M$ matrix with nonnegative integer entries,
defining the linear substitution $\Delta_i=\sum_{\ell=1}^M L_{i\ell}\Theta_\ell$,
$i=0,\ldots,N-1$, and suppose that the row sums of $L$ are bounded by $L_0\ge 1$:
\begin{equation}\label{eq:row-sum-bound}
\sum_{\ell=1}^M L_{i\ell}\;\le\;L_0,\qquad i=0,\ldots,N-1.
\end{equation}
For $a\in\mathcal A_{N,L_0 r,R}$, the substitution
$A(x,\Delta)=\sum_p a_p(x)\Delta^p/p!\mapsto A(x,L\Theta)$ produces a
divided-power series in the variables $\Theta\in\R_+^M$ whose coefficient
family $a^\flat=\{a^\flat_q\}_{q\in\N_0^M}$ satisfies
\begin{equation}\label{eq:dp-substitution}
\|a^\flat\|_{M,r,R}\;\le\;\|a\|_{N,L_0 r,R}.
\end{equation}

\item\emph{(Continuity of split-gap integration.)}
Let $H=\{h_Q\}_{Q\in\N_0^{N+1}}$ be a family indexed by exponents
$(P_0,\ldots,P_{j-1},\alpha,\beta,P_{j+1},\ldots,P_{N-1})$, where the indices
$(\alpha,\beta)\in\N_0^2$ correspond to a split of the $j$-th gap,
$\Delta_j=\theta+\omega$ with $\theta,\omega\ge 0$. Define the operator
$I_j$ by
\begin{equation}\label{eq:dp-split-integration}
(I_j H)_P(x)\;:=\;\sum_{\alpha+\beta=P_j-1}h_{(P_0,\ldots,\alpha,\beta,\ldots,P_{N-1})}(x)
\quad\text{for }P_j\ge 1,
\qquad (I_j H)_P\equiv 0\text{ for }P_j=0.
\end{equation}
Then for $r>0$, $R>0$,
\begin{equation}\label{eq:dp-integration-bound}
\|I_j H\|_{N,r,R}\;\le\;r\,\|H\|_{N+1,r,R}.
\end{equation}
\end{enumerate}
\end{lemma}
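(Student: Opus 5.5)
The plan is to prove the four properties directly from Definition~\ref{def:dp-algebra}. Each one reduces to a pointwise-in-$x$ comparison of nonnegative weights, followed by taking $\sup_{x\in[0,1]}$. Every rearrangement of series will be justified by working first with absolute values, where Tonelli's theorem applies to nonnegative terms, and only then passing to the signed sums.

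\emph{(i)} On $T_N(1)$ every gap satisfies $0\le\Delta_j=\xi_j-\xi_{j+1}\le x\le 1\le r$. Hence $\Phi_p(x;\xi)\le r^{|p|}/p!$, and
\[
\sum_p|a_p(x)|\,\Phi_p(x;\xi)\le\sum_p|a_p(x)|\,r^{|p|}/p!.
\]
The right-hand side is exactly the $\sigma=0$ slice of the double sum in \eqref{eq:dp-norm}, so it is bounded by $\|a\|_{N,r,R}$ after taking the supremum over $x$.

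\emph{(ii)} I will differentiate \eqref{eq:dp-product} by the Leibniz rule:
\[
d_P^{(\sigma)}=\sum_{\tau\le\sigma}\binom{\sigma}{\tau}\sum_{p+q=P}\binom{P}{p}a_p^{(\tau)}b_q^{(\sigma-\tau)}.
\]
The key identity is that the weights factor,
\[
\binom{P}{p}\binom{\sigma}{\tau}\frac{r^{|P|}R^{\sigma}}{P!\,\sigma!}=\frac{r^{|p|}R^{\tau}}{p!\,\tau!}\cdot\frac{r^{|q|}R^{\sigma-\tau}}{q!\,(\sigma-\tau)!}.
\]
Taking absolute values and summing over $(P,\sigma)$, the reindexing $(p,\tau),(q,\sigma-\tau)$ turns the double sum into a Cauchy product. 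So for each fixed $x$, the $x$-slice of the norm of $d$ is at most the product of the $x$-slices for $a$ and $b$. Taking the supremum in $x$ and using $\sup(fg)\le\sup f\cdot\sup g$ for nonnegative $f,g$ gives \eqref{eq:dp-algebra}.

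\emph{(iii)} The multinomial theorem gives
\[
\prod_i\frac{\bigl(\sum_\ell L_{i\ell}\Theta_\ell\bigr)^{p_i}}{p_i!}=\sum_q c_{p,q}\frac{\Theta^q}{q!}
\]
with $c_{p,q}\ge 0$, so $a^\flat_q=\sum_p c_{p,q}a_p$. The coefficients $c_{p,q}$ do not depend on $x$, so derivatives in $x$ pass through: $(a^\flat_q)^{(\sigma)}=\sum_p c_{p,q}a_p^{(\sigma)}$. Evaluating the multinomial identity at $\Theta_\ell=r$ for all $\ell$ gives
\[
\sum_q c_{p,q}\frac{r^{|q|}}{q!}=\prod_i\frac{\bigl(r\sum_\ell L_{i\ell}\bigr)^{p_i}}{p_i!}\le\frac{(L_0r)^{|p|}}{p!},
\]
by \eqref{eq:row-sum-bound}. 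Applying the triangle inequality and then Tonelli to swap the $p$ and $q$ sums yields \eqref{eq:dp-substitution}. This is the step where I expect the only real care to be needed. One point is making sure the $q$-indexing convention matches the divided-power normalization $\Theta^q/q!$. The other is confirming that the substituted series is genuinely the same family that appears in operation (i) of Lemma~\ref{lem:kn-convergence}. Both issues are bookkeeping, not analysis.

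\emph{(iv)} Fix $P$ with $P_j\ge1$. Each $h_Q$ with $Q=(\dots,\alpha,\beta,\dots)$ and $\alpha+\beta=P_j-1$ enters $(I_jH)_P$ exactly once, and $|P|=|Q|+1$. In the weights, the $j$-th factorial $1/P_j!$ replaces $1/(\alpha!\,\beta!)$, and
\[
\alpha!\,\beta!\le(\alpha+\beta)!\le(\alpha+\beta+1)!=P_j!.
\]
Therefore
\[
\frac{r^{|P|}}{P!}\le r\cdot\frac{r^{|Q|}}{Q!}.
\]
Differentiating $(I_jH)_P$ term by term, applying the triangle inequality, and summing over $P$ and $\sigma$ reproduces each $(Q,\sigma)$ term of $\|H\|_{N+1,r,R}$ at most once, multiplied by $r$. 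This gives \eqref{eq:dp-integration-bound}.
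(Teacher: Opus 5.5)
Your proof is correct and follows the paper's argument in all four parts: the $\sigma=0$ slice for (i), Leibniz with factorial cancellation into a Cauchy product for (ii), multinomial expansion with the row-sum bound for (iii), and the ratio $\alpha!\,\beta!\le P_j!$ for (iv). Your evaluation of the multinomial identity at $\Theta_\ell=r$ in (iii) is a cleaner rendering of the paper's ``total weight'' remark. One point is worth making explicit there: $a^\flat_q=\sum_p c_{p,q}a_p$ is a finite sum, because $c_{p,q}\neq 0$ forces $|q|=|p|$, and this is what justifies differentiating it term by term.
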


\begin{proof}
\emph{(i)} On $T_N(1)$ the gap variables $\Delta_i=\xi_i-\xi_{i+1}$ (with
$\xi_0:=x$) satisfy $0\le\Delta_i\le 1$, so
$\Phi_p(x;\xi)=\prod_i\Delta_i^{p_i}/p_i!\le 1/p!\le r^{|p|}/p!$
(using $r\ge 1$), and summing against the $\sigma=0$ contribution to
$\|a\|_{N,r,R}$ gives \eqref{eq:dp-sup}.

\emph{(ii)} The identity \eqref{eq:dp-product} is obtained by reindexing
$\bigl(\sum_p a_p\Delta^p/p!\bigr)\bigl(\sum_q b_q\Delta^q/q!\bigr)$. The
estimate \eqref{eq:dp-algebra} follows from applying the Leibniz rule to
$(d_P)^{(\sigma)}$, splitting $\sigma=\alpha+\beta$, using
$\binom{P}{p}=P!/(p!\,q!)$ and $\binom{\sigma}{\alpha}=\sigma!/(\alpha!\beta!)$
to cancel the factorials in the denominators of $\|d\|_{N,r,R}$, and
factoring the resulting double sum.

\emph{(iii)} By the multinomial expansion of
$\Delta_i^{p_i}=\bigl(\sum_\ell L_{i\ell}\Theta_\ell\bigr)^{p_i}$
combined with the row-sum bound \eqref{eq:row-sum-bound}: the total weight of
the resulting $\Theta$-monomials contributing to degree $p_i$ in position $i$
is at most $\bigl(\sum_\ell L_{i\ell}\bigr)^{p_i}\le (L_0)^{p_i}$, and the
product over $i$ gives $(L_0)^{|p|}$, which is exactly the ratio between
$r^{|p|}$ and $(L_0 r)^{|p|}$ in the two norms.

\emph{(iv)} This is the Beta-function identity
$\int_0^{\Delta_j}(\theta^\alpha/\alpha!)((\Delta_j-\theta)^\beta/\beta!)\,d\theta
=\Delta_j^{\alpha+\beta+1}/(\alpha+\beta+1)!$ translated into the divided-power
coefficient indexing, followed by the ratio estimate
$r^{\alpha+\beta+1}/(\alpha+\beta+1)!\le r\cdot r^{\alpha+\beta}/(\alpha!\,\beta!)$ monomial-by-monomial,
where the latter follows from
$(\alpha+\beta+1)!=(\alpha+\beta+1)\cdot(\alpha+\beta)!\ge \alpha!\,\beta!$ since $\binom{\alpha+\beta}{\alpha}\ge 1$.
\end{proof}

\section{Well-definedness of the Coefficients
\texorpdfstring{$\Gamma^{(n,m)}_{P;q,q',\sigma,\tau,\alpha}$}{Gamma}}\label{app:Gamma}

This appendix proves Lemma~\ref{lem:cnP}: that, under
Assumption~\ref{ass:cf}, the coefficients
$\Gamma^{(n,m)}_{P;q,q',\sigma,\tau,\alpha}$ appearing in
\eqref{eq:cnP-form} exist as integers and are independent of the plant
data. Throughout, fix $n\ge 3$, $m\in\{2,\ldots,n-1\}$,
$k:=n-m+1\in\{2,\ldots,n-1\}$, and $P\in\N_0^n$.

\subsection{The integrand after the ansatz substitution}\label{app:integrand}

Substitute the ansatz \eqref{eq:kn-ansatz} for $k_k$ and the gap-basis
expansion \eqref{eq:fn-gap} for $f_m$ into the definition
\eqref{eq:Cnm_def} of $B^m_n[k_k,f_m]$:
\begin{eqnarray}\label{eq:B-substituted}
&& B^m_n[k_k,f_m](x;\xi)
\;=\;\sum_{q\in\N_0^k}\sum_{q'\in\N_0^m}
\sum_{j=1}^{k}\int_{\xi_j}^{\xi_{j-1}}
D^{k,m}_j\Bigl[\,\bigl(a^{(k)}_q(x)-a^{(k)}_q(x-s)\bigr)\,
\Phi^{(k)}_q(x;\xi_{(j,s)})
\nonumber\\
&& \hspace{5cm}
\times\;b^{(m)}_{q'}(s)\,\Phi^{(m)}_{q'}(s;\xi'_{(j,s)})\Bigr]\,ds,
\end{eqnarray}
where $\xi_{(j,s)}$ denotes the $k$-tuple
$(\xi_1,\ldots,\xi_{j-1},s,\xi_j,\ldots,\xi_{n-m})$ and $\xi'_{(j,s)}$
denotes the $m$-tuple $(\xi_{n-m+1},\ldots,\xi_n)$, consistent with
the arguments of $k_k$ and $f_m$ in \eqref{eq:Cnm_def}.

Taylor-expand the two coefficients $a^{(k)}_q(x-s)$ and
$b^{(m)}_{q'}(s)$ around $x$. By Assumption~\ref{ass:cf}, both are
real-analytic on $[0,1]$, hence
\begin{eqnarray}\label{eq:taylor-a-app}
a^{(k)}_q(x-s)
&=&\sum_{\tau=0}^{\infty}\frac{(-s)^\tau}{\tau!}\bigl(a^{(k)}_q\bigr)^{(\tau)}(x),\\
\label{eq:taylor-b-app}
b^{(m)}_{q'}(s)
&=&\sum_{\sigma=0}^{\infty}\frac{(s-x)^\sigma}{\sigma!}\bigl(b^{(m)}_{q'}\bigr)^{(\sigma)}(x),
\end{eqnarray}
with both series absolutely convergent on $[0,1]$. The factor
$(s-x)^\sigma=(-1)^\sigma(x-s)^\sigma$ yields a sign $(-1)^\sigma$ in
subsequent formulas.

The variable $s$ is not a gap variable: $s$ is the inserted integration
variable in \eqref{eq:Cnm_def} and sits between $\xi_j$ and $\xi_{j-1}$ in
the extended ordered chain
\begin{equation}\label{eq:extended-chain}
x\ge\xi_1\ge\cdots\ge\xi_{j-1}\ge s\ge\xi_j\ge\cdots\ge\xi_n\ge 0.
\end{equation}
The variable $s$ itself therefore decomposes, as a function of the gaps
of \eqref{eq:extended-chain}, into
\begin{equation}\label{eq:s-decomposition}
s\;=\;x\;-\;\sum_{r=0}^{j-1}(\xi_r-\xi_{r+1})\big|_{\xi_j\leftarrow s}
\;=\;x\;-\;\bigl[(x-\xi_1)+(\xi_1-\xi_2)+\cdots+(\xi_{j-1}-s)\bigr],
\end{equation}
the sum of the $j$ upper gaps above $s$ in the extended chain. The binomial
expansion of $s^\tau$ in these upper gaps is, for every $\tau\in\N_0$,
\begin{equation}\label{eq:s-tau-expansion}
s^\tau \;=\;
\sum_{\substack{\alpha\in\N_0^{\,n}\\|\alpha|\le\tau,\;\alpha_j=\cdots=\alpha_{n-1}=0}}
\binom{\tau}{\alpha_0,\ldots,\alpha_{j-1},\tau-|\alpha|}\,
(-1)^{|\alpha|}\,x^{\tau-|\alpha|}\,
\prod_{r=0}^{j-1}(\xi_r-\xi_{r+1})^{\alpha_r}\big|_{\xi_j\leftarrow s},
\end{equation}
where $|\alpha|=\alpha_0+\cdots+\alpha_{n-1}$ and the multinomial coefficient
$\binom{\tau}{\alpha_0,\ldots,\alpha_{j-1},\tau-|\alpha|}$ is understood in the
standard sense. The expansion \eqref{eq:s-tau-expansion} replaces the single
variable $s^\tau$ by a sum of products of pure gap monomials of total degree
$|\alpha|\le\tau$ multiplied by the residual factor $x^{\tau-|\alpha|}$ which
carries no gap content.

Substituting \eqref{eq:taylor-a-app}--\eqref{eq:taylor-b-app} into
\eqref{eq:B-substituted}, then applying \eqref{eq:s-tau-expansion} to every
$s^\tau$ factor, yields an absolutely and uniformly convergent series indexed
by $(q,q',\sigma,\tau,\alpha)$ with termwise integration in $s$ justified by
absolute convergence. Combined with the signs from
\eqref{eq:kn-ansatz}, \eqref{eq:taylor-a-app}, and
\eqref{eq:taylor-b-app}, each term is an integer combination of products of
$x^{\tau-|\alpha|}$, pure gap monomials, and the plant-data factor
$(a^{(k)}_q)^{(\tau)}(x)(b^{(m)}_{q'})^{(\sigma)}(x)$: the combinatorial
coefficient $\Gamma$ extracted below is a \emph{signed} integer,
$\Gamma\in\mathbb{Z}$.

\subsection{Polynomial structure at fixed $(q,q',\sigma,\tau,\alpha)$}\label{app:polynomial}

Fix the integration index $j\in\{1,\ldots,k\}$ from
\eqref{eq:B-substituted}, indices $(q,q',\sigma,\tau)$, and a multi-index
$\alpha\in\N_0^n$ with $|\alpha|\le\tau$ and
$\alpha_j=\cdots=\alpha_{n-1}=0$, corresponding to one term of the
expansion \eqref{eq:s-tau-expansion}.
The integrand of \eqref{eq:B-substituted}, after the Taylor substitutions
\eqref{eq:taylor-a-app}--\eqref{eq:taylor-b-app} and the $s^\tau$ expansion
\eqref{eq:s-tau-expansion}, produces at fixed $(\sigma,\tau,\alpha)$ the
contribution
\begin{equation}\label{eq:fixed-integrand}
x^{\tau-|\alpha|}\cdot
D^{k,m}_j\Bigl[\,\Phi^{(k)}_q(x;\xi_{(j,s)})\,\Phi^{(m)}_{q'}(s;\xi'_{(j,s)})\,\Bigr]
\cdot(x-s)^\sigma\cdot
\prod_{r=0}^{j-1}(\xi_r-\xi_{r+1})^{\alpha_r}\big|_{\xi_j\leftarrow s},
\end{equation}
multiplied by the derivative-factors
$(a^{(k)}_q)^{(\tau)}(x)\,(b^{(m)}_{q'})^{(\sigma)}(x)$ and by the
multinomial coefficient, signs, and ansatz sign from
\eqref{eq:s-tau-expansion}. The pure-$x$ factor $x^{\tau-|\alpha|}$ has no
gap content and appears as a scalar coefficient. We track only the
gap-carrying polynomial \eqref{eq:fixed-integrand}; the derivative-factors
and $x^{\tau-|\alpha|}$ do not depend on $(s,\xi)$.

We now establish homogeneity of the gap-carrying part of
\eqref{eq:fixed-integrand} in the gap variables of the extended chain
\eqref{eq:extended-chain}. With the convention $\xi_0:=x$, the extended
chain has $n+2$ ordered entries and $n+1$ gaps. The gap factors in
\eqref{eq:fixed-integrand}, other than the scalar $x^{\tau-|\alpha|}$,
decompose as follows:
\begin{enumerate}[label={\upshape(\roman*)}]
\item each of the $k$ gap factors of $\Phi^{(k)}_q$ has exponent
$q_r$, $r\in\{0,\ldots,k-1\}$, summing to $|q|$;
\item each of the $m$ gap factors of $\Phi^{(m)}_{q'}$ has exponent
$q'_s$, $s\in\{0,\ldots,m-1\}$, summing to $|q'|$;
\item the factor $(x-s)^\sigma$ is a power of the gap between $\xi_0=x$ and
the next entry of the extended chain, of exponent $\sigma$;
\item the factor $\prod_{r=0}^{j-1}(\xi_r-\xi_{r+1})^{\alpha_r}\big|_{\xi_j\leftarrow s}$
from \eqref{eq:s-tau-expansion} is a pure gap monomial of total degree
$|\alpha|$, with every gap factor taken from the upper-$s$ gaps of the
extended chain.
\end{enumerate}
Hence the gap-carrying part of \eqref{eq:fixed-integrand} is a homogeneous
polynomial of total gap degree
\begin{equation}\label{eq:Psi-degree}
d(q,q',\sigma,\alpha)\;:=\;|q|+|q'|+\sigma+|\alpha|
\end{equation}
in the gap variables of the extended chain. Integration in $s$ from
$\xi_j$ to $\xi_{j-1}$ introduces exactly one new gap factor --- the
gap $\xi_{j-1}-\xi_j$ or an equivalent contribution --- raising the
total gap degree by $1$; after integration, the expression is a
homogeneous polynomial of total gap degree $d(q,q',\sigma,\alpha)+1$.
Summing over $j$ and over the permutation $D^{k,m}_j$ preserves
homogeneity of this total degree. Denote
\begin{equation}\label{eq:Psi-def}
\Psi^{(n,m)}_{q,q',\sigma,\tau,\alpha}(x;\xi_1,\ldots,\xi_n)
\;:=\;
\text{the gap-carrying contribution to }B^m_n[k_k,f_m](x;\xi)
\text{ at fixed }(q,q',\sigma,\tau,\alpha).
\end{equation}
Then $\Psi^{(n,m)}_{q,q',\sigma,\tau,\alpha}$, regarded as a polynomial in
the gap variables $\Delta_r := \xi_r-\xi_{r+1}$, $r=0,\ldots,n-1$
(with $\xi_0:=x$), is homogeneous of total degree
$d(q,q',\sigma,\alpha)+1$.

The coefficients of $\Psi^{(n,m)}_{q,q',\sigma,\tau,\alpha}$ as a polynomial
in $(\Delta_0,\ldots,\Delta_{n-1})$ are integers. All gap-polynomial
manipulations leading to \eqref{eq:fixed-integrand} are performed in the
divided-power basis of Appendix~\ref{app:dp-algebra}: divided-power
multiplication (Lemma~\ref{lem:dp-algebra}(ii)), nonnegative integer linear
substitution of gap variables (Lemma~\ref{lem:dp-algebra}(iii)), and
split-gap integration (Lemma~\ref{lem:dp-algebra}(iv)) all have integer
structure constants in this basis. After factoring out the scalar
divided power $x^{\tau-|\alpha|}/(\tau-|\alpha|)!$, which carries the only
pure-$x$ bookkeeping and is not a gap variable, the remaining coefficient is
therefore an integer. The signs from \eqref{eq:taylor-a-app},
\eqref{eq:taylor-b-app}, \eqref{eq:s-tau-expansion}, and the ansatz are
absorbed into this integer.

\subsection{Projection onto the gap basis}\label{app:projection}

The change of variables $\Delta_r:=\xi_r-\xi_{r+1}$ with $\xi_0:=x$,
$r=0,\ldots,n-1$, is linear and unimodular, mapping the space of
polynomials in $(\xi_1,\ldots,\xi_n)$ bijectively to the space of
polynomials in $(\Delta_0,\ldots,\Delta_{n-1})$. The gap basis
\begin{equation}\label{eq:Phi-in-Delta}
\Phi_P(x;\xi)
\;=\;\prod_{r=0}^{n-1}\frac{\Delta_r^{P_r}}{P_r!},
\qquad P\in\N_0^n,
\end{equation}
linearly spans the space of polynomials in
$(\Delta_0,\ldots,\Delta_{n-1})$. For any polynomial $\Psi$ in these
variables of total degree $d$,
\begin{equation}\label{eq:Phi-projection}
\Psi
\;=\;\sum_{P\in\N_0^n,\,|P|\le d}
\bigl[\text{coefficient of }\Phi_P\text{ in }\Psi\bigr]\cdot\Phi_P,
\end{equation}
where the coefficient of $\Phi_P$ equals the coefficient of
$\prod_r\Delta_r^{P_r}$ in $\Psi$ multiplied by $\prod_r P_r!$.

By \S\ref{app:polynomial}, $\Psi^{(n,m)}_{q,q',\sigma,\tau,\alpha}$ is a
homogeneous polynomial of total degree $|q|+|q'|+\sigma+|\alpha|+1$ in
$(\Delta_0,\ldots,\Delta_{n-1})$ with integer coefficients. By
homogeneity, only multi-indices $P$ with
$|P|=|q|+|q'|+\sigma+|\alpha|+1$ yield a nonzero coefficient in
\eqref{eq:Phi-projection}. Define
\begin{equation}\label{eq:Gamma-def-app}
\Gamma^{(n,m)}_{P;q,q',\sigma,\tau,\alpha}
\;:=\;
\text{the coefficient of }\Phi_P(x;\xi)\text{ in }\Psi^{(n,m)}_{q,q',\sigma,\tau,\alpha}(x;\xi).
\end{equation}
Then $\Gamma^{(n,m)}_{P;q,q',\sigma,\tau,\alpha}\in\mathbb{Z}$, and
$\Gamma^{(n,m)}_{P;q,q',\sigma,\tau,\alpha}\ne 0$ implies exactly
\begin{equation}\label{eq:degree-constraint}
|q|+|q'|+\sigma+|\alpha|=|P|-1,\qquad |\alpha|\le\tau.
\end{equation}

Assembling \eqref{eq:B-substituted}, \eqref{eq:s-tau-expansion},
\eqref{eq:Gamma-def-app}, and \eqref{eq:degree-constraint}, the $\Phi_P$-projection
of the fixed-$m$ contribution to $B^m_n[k_k,f_m]$ produces a term of the form
$\Gamma^{(n,m)}_{P;q,q',\sigma,\tau,\alpha}\,x^{\tau-|\alpha|}\,(a^{(k)}_q)^{(\tau)}(x)\,(b^{(m)}_{q'})^{(\sigma)}(x)$
multiplied by the residual factorial $1/(\tau-|\alpha|)!$ carried by the pure-$x$
factor from the multinomial bookkeeping in \eqref{eq:s-tau-expansion}. Absorbing
this residual into the divided-power form $x^{\tau-|\alpha|}/(\tau-|\alpha|)!$,
which is not cleared by the gap-basis normalization since $x$ is not a gap
variable, one obtains
\begin{eqnarray}\label{eq:cnP-assembled}
c^{(n),m}_P(x)
&=&
\sum_{\substack{q\in\N_0^k,\,q'\in\N_0^{m}\\\sigma,\tau\in\N_0,\,\alpha\in\N_0^n\\|\alpha|\le\tau,\;|q|+|q'|+\sigma+|\alpha|=|P|-1}}
\Gamma^{(n,m)}_{P;q,q',\sigma,\tau,\alpha}\,
\frac{x^{\tau-|\alpha|}}{(\tau-|\alpha|)!}
\nonumber\\
&& \qquad\qquad\qquad\times\,
\bigl(a^{(k)}_q\bigr)^{(\tau)}(x)\,\bigl(b^{(m)}_{q'}\bigr)^{(\sigma)}(x),
\end{eqnarray}
with $k=n-m+1$. Summing \eqref{eq:cnP-assembled} over
$m\in\{2,\ldots,n-1\}$ yields the formula \eqref{eq:cnP-form} of
Lemma~\ref{lem:cnP}.

\bibliography{bib-hyp-FL}

@article{fliess1986derivation,
  author  = {Fliess, Michel},
  title   = {Vers une notion de dérivation fonctionnelle causale},
  journal = {Annales de l'Institut Henri Poincaré, Analyse non linéaire},
  volume  = {3},
  number  = {1},
  pages   = {67--76},
  year    = {1986}
}

@article{dAlessandroIsidoriRuberti1974,
  author  = {d'Alessandro, Paolo and Isidori, Alberto and Ruberti, Antonio},
  title   = {Realization and Structure Theory of Bilinear Dynamical Systems},
  journal = {SIAM Journal on Control},
  volume  = {12},
  number  = {3},
  pages   = {517--535},
  year    = {1974},
  month   = aug,
  doi     = {10.1137/0312040}
}

@article{Brockett1976,
  author  = {Brockett, Roger W.},
  title   = {{V}olterra series and geometric control theory},
  journal = {Automatica},
  volume  = {12},
  number  = {2},
  pages   = {167--176},
  year    = {1976},
  month   = mar,
  doi     = {10.1016/0005-1098(76)90080-7}
}

@ARTICLE{1101898,
  author={Lesiak, C. and Krener, A.},
  journal={IEEE Transactions on Automatic Control}, 
  title={The existence and uniqueness of Volterra series for nonlinear systems}, 
  year={1978},
  volume={23},
  number={6},
  pages={1090-1095},
  doi={10.1109/TAC.1978.1101898}}

@article{hunt1983global,
  author  = {Hunt, L. R. and Su, Renjeng and Meyer, George},
  title   = {Global Transformations of Nonlinear Systems},
  journal = {IEEE Transactions on Automatic Control},
  volume  = {28},
  number  = {1},
  pages   = {24--31},
  year    = {1983}
}

@article{fliess1981fonctionnelles,
  author  = {Fliess, Michel},
  title   = {Fonctionnelles causales non lin{\'e}aires et ind{\'e}termin{\'e}es non commutatives},
  journal = {Bulletin de la Soci{\'e}t{\'e} Math{\'e}matique de France},
  volume  = {109},
  pages   = {3--40},
  year    = {1981}
}

@article{boyd1985fading,
  author = {Stephen Boyd and Leon O. Chua},
  title = {Fading memory and the problem of approximating nonlinear operators with {V}olterra series},
  journal = {IEEE Transactions on Circuits and Systems},
  volume = {32},
  number = {11},
  pages = {1150--1161},
  year = {1985}
}

@article{krstic2008backstepping,
  author  = {Krstic, Miroslav and Smyshlyaev, Andrey},
  title   = {Backstepping boundary control for first-order hyperbolic {PDEs} and application to systems with actuator and sensor delays},
  journal = {Systems \& Control Letters},
  volume  = {57},
  number  = {9},
  pages   = {750--758},
  year    = {2008},
  doi     = {10.1016/j.sysconle.2008.02.005}
}

@book{isidori1989nonlinear,
  title={Nonlinear Control Systems},
  author={Isidori, Alberto},
  year={1989},
  publisher={Springer-Verlag},
  address={Berlin}
}

@article{byrnes1989new,
  title={New results and examples in nonlinear feedback stabilization},
  author={Byrnes, Christopher I. and Isidori, Alberto},
  journal={Systems \& Control Letters},
  volume={12},
  number={5},
  pages={437--442},
  year={1989}
}

@article{vazquez2008volterra1,
  title={Control of 1-{D} parabolic {PDE}s with {V}olterra nonlinearities--{P}art {I}: Design},
  author={Vazquez, Rafael and Krstic, Miroslav},
  journal={Automatica},
  volume={44},
  number={11},
  pages={2778--2790},
  year={2008}
}

@article{vazquez2008volterra2,
  title={Control of 1-{D} parabolic {PDE}s with {V}olterra nonlinearities--{P}art {II}: Analysis},
  author={Vazquez, Rafael and Krstic, Miroslav},
  journal={Automatica},
  volume={44},
  number={11},
  pages={2791--2803},
  year={2008}
}

@article{boskovic2001boundary,
  title={Boundary control of an unstable heat equation via measurement of domain-averaged temperature},
  author={Boskovic, Dejan and Krstic, Miroslav and Liu, Wei-Jiu},
  journal={IEEE Transactions on Automatic Control},
  volume={46},
  number={12},
  pages={2022--2028},
  year={2001}
}

\end{document}